\newcommand{\ce}{c}
\newcommand{\gini}{C}
\newcommand{\R}{\mathbb{R}}
\newcommand{\C}{\mathbb{C}}
\newcommand{\N}{\mathbb{N}}
\newcommand{\Tr}{{\rm Tr}}
\newcommand{\lrb}{\left(}
\newcommand{\rrb}{\right)}
\numberwithin{equation}{section}
\theoremstyle{plain}
\newtheorem{theorem}{Theorem}[section]
\newtheorem{lemma}[theorem]{Lemma}
\newtheorem{corollary}[theorem]{Corollary}
\newtheorem{proposition}[theorem]{Proposition}
\theoremstyle{definition}
\newtheorem{remark}[theorem]{Remark}
\newtheorem{definition}[theorem]{Definition}
\newtheorem{example}[theorem]{Example}
\newcommand{\eqind}[1]{\begin{align}#1\end{align}}
\newcommand{\eqn}[1]{\begin{align*}#1\end{align*}}
\newcommand{\eq}[1]{\begin{align}\begin{aligned}#1\end{aligned}\end{align}}
\newcommand{\eqtext}[1]{\qquad \text{#1} \qquad}
\DeclareMathOperator{\trace}{\mathrm{Tr}} 
\begin{document}

\title{
Additivity violation of quantum channels \\via
strong convergence \\ to semi-circular and circular elements
}

\author{Motohisa Fukuda}
\address{MF: Yamagata University, 1-4-12 Kojirakawa, Yamagata, 990-8560 Japan}
\email{fukuda@sci.kj.yamagata-u.ac.jp}

\author{Takahiro Hasebe}
\address{TH: Department of Mathematics, Hokkaido University,
Kita 10, Nishi 8, Kita-Ku, Sapporo, Hokkaido, 060-0810, Japan}
\email{thasebe@math.sci.hokudai.ac.jp}

\author{Shinya Sato}
\address{SS:}
\email{hysmk6331@gmail.com}

\subjclass[2010]{46L54,81P45}

\keywords{quantum channels, minimum output entropy, additivity violation, Gaussian unitary ensemble, Ginibre ensemble, free probability, Haagerup inequality}

\maketitle
\markright{\MakeUppercase{Additivity violation of quantum channels with semi-circular and circular elements}}

\begin{abstract}
Additivity violation of minimum output entropy, which shows non-classical properties in quantum communication, had been proved in most cases for random quantum channels defined by Haar-distributed unitary matrices.  
In this paper, we investigate random completely positive maps made of Gaussian Unitary Ensembles and Ginibre Ensembles regarding this matter.
Using semi-circular systems and circular systems of free probability, we not only show the multiplicativity violation of maximum output norms in the asymptotic regimes but also prove the  additivity violation via Haagerup inequality for a new class of random quantum channels constructed by rectifying the above completely positive maps based on strong convergence.
\end{abstract}

\tableofcontents

\section{Motivation from quantum information and our results}

Let $\hat D_n$ be the set of $n \times n$ positive semi-definite Hermitian matrices, and $D_n$ the set of $n \times n$ positive semi-definite Hermitian matrices of trace one. The latter are called quantum states. For a completely positive (CP) map $\Phi : D_\ell \to  \hat D_n $ we define the maximum output $p$-norm (MO$p$N): 
\eq{
\| \Phi \|_p = \max_{\rho \in D_\ell} \| \Phi(\rho) \|_p 
}
where $\|\cdot\|_p$ is the $p$-th Schatten norm.
Note that norm convexity implies that an optimal input $\rho$ can be assumed to be a rank-one projection.
Moreover, if a CP map preserves trace, it is called quantum channel. In this case, one can define minimum output entropy (MOE):
\eq{
S_{\min} (\Phi) = \min_{\rho \in \mathcal D_\ell} S (\Phi(\rho))
}
where 
$
S(\sigma) = - \trace [\sigma \log \sigma]
$ 
is the von Neumann entropy of a quantum state $\sigma$.
Again, concavity of entropy implies that an optimal input $\rho$ can be assumed to be a rank-one projection.

For those quantities, one can ask the following questions of multiplicativity and additivity \cite{AmoHolWer00, king2001minimal}:
\eq{\label{eq:question}
\|\Phi \otimes \Omega \|_p \stackrel{?}{=} \|\Phi \|_p \cdot \|\Omega \|_p
\quad \text{and} \quad 
S_{\min} (\Phi \otimes \Omega) \stackrel{?}{=} S_{\min} (\Phi) + S_{\min} (\Omega) \ .
}
Since the following inequalities are obviously true:
\eq{\label{eq:obvious_inequality}
\|\Phi \otimes \Omega \|_p  \geq  \|\Phi \|_p \cdot \|\Omega \|_p
\quad \text{and} \quad 
S_{\min} (\Phi \otimes \Omega)  \leq  S_{\min} (\Phi) + S_{\min} (\Omega) \ ,
}
the above questions in \eqref{eq:question} are actually whether the inequalities in \eqref{eq:obvious_inequality} are strict or not. 
Although some positive examples were found, see for example 
\cite{king2002additivity, king_maxp, king2003capacity, shor2002additivity},
the equalities in \eqref{eq:question} do not hold in general. 
The existence of quantum channels which violate multiplicativity or additivity was proved by using randomly generated quantum channels, first in \cite{hw_vio} and \cite{has}, respectively.
Later, some more detailed or general proofs followed  \cite{fukuda2010comments, brandao2010hastings, fukuda2010entanglement}. Further, different proofs were also made via asymptotic geometric analysis  \cite{aubrun2010nonadditivity, aubrun2011hastings, f-vio} and via free probability \cite{belinschi2016almost, collins_random_unitary, collins2015estimates, collins2020concentration}. Recently this problem showed connection to black hole entropy \cite{hayden2020black}. 

In the breakthrough results \cite{hw_vio, has}, they showed that such violations are asymptotically typical by using the following randomly defined quantum channels,
which are called random unitary channels or random external fields:
\eq{\label{eq:ruc}
\mathcal E (\rho) = \sum_{i=1}^k p_i U_i \rho U_i^* \ .
}
Here, $\{U_i\}_{i=1}^k \subset \mathcal U(n)$ are i.~i.~d.~Haar-distributed unitary matrices with $1 \ll k \ll n$.

In this paper we replace those unitary matrices by i.~i.~d.~Gaussian Unitary Ensembles (GUEs) $\{S_i\}_{i=1}^k$ or i.~i.~d.~Ginibre Ensembles (GEs) $\{\gini_i\}_{i=1}^k$, i.e.
\eq{\label{eq:our-map}
\Phi_n(\rho) =\frac 1k \sum_{i=1}^k X_i \rho X_i^*, 
}
where $X_i = S_i$ for all $i \in [k]:=\{1,2,\dots,k\}$ or $\gini_i$ for all $i \in [k]$. 
Here, each $n \times n$ Hermitian matrix $S$ for GUEs is defined by the following i.~i.~d.~entries: for $i<j$, 
\eq{
(S)_{i,i}, \quad \sqrt{2} \mathrm{Re}(S)_{i,j}, \quad \sqrt{2}\mathrm{Im}(S)_{i,j} \sim N(0,1/n) \ ,
}
and GEs are then defined by the relation 
\eq{\label{eq:gini_circ}
\gini_i = \frac{S_i + \sqrt{-1} \, S_i^\prime}{\sqrt{2}}\ , 
}
where $\{S_i,S_i'\}_{i\in[k]}$ are i.~i.~d.~GUEs. Note that each GE above has entries whose real parts and imaginary parts are all independent and distributed as $N(0,1/(2n))$. 
Then, by using free probability we explore not only the violation of multiplicativity of MO$p$N but also violation of additivity of MOE for quantum channels made out of these CP maps, as stated below.

Let us summarize our results briefly here. We show first typicality of violation of multiplicativity: for $1.5<p\leq \infty$ and $1 \ll k \ll n$
\eq{\label{eq:vio_multi}
\|\Phi_n \otimes \bar\Phi_n \|_p > \|\Phi_n \|_p \cdot \|\bar\Phi_n\|_p
}
where we defined the complex conjugate of $\Phi_n$ by
\eq{\label{eq:our-map_conjugate}
\bar\Phi_n(\rho) =\frac 1k \sum_{i=1}^k \bar X_i \rho X_i^T \ .
}
Note that $\overline{X^*} = X^T$.
One can find more precise statements in Theorem \ref{theorem:violation-cp}, where 
we also identify the shape of optimal output eigenvalue distributions of $\Phi_n$ in the asymptotic regime. 
In \cite{belinschi2016almost}, exact calculations were made to give precise estimates on such problems for different random quantum channels to show larger additivity violation of MOE than before.

These maps $\Phi_n$ in \eqref{eq:our-map} are not trace-preserving but nearly so almost surely when $k$ and $n$ are large.
Hence, we can rectify $\Phi_n$ to
create quantum channels $\Psi_n$, and prove typicality of violation of additivity: for $1 \ll k \ll n$,
\eq{\label{eq:vio_add}
S_{\min} (\Psi_n\otimes \bar \Psi_n) < S_{\min} (\Psi_n) + S_{\min} (\bar \Psi_n) \ .
}
This is our second result, which is stated in detail in Theorem \ref{theorem:violation-qc}.
To prove the claim we adopted the strategy in
 \cite{collins_random_unitary}, where the Haagerup inequality was first applied to additivity questions.
In addition, the above violation of multiplicativity is extended to these quantum channels in Theorem \ref{theorem:violation-cp2}.
In case of GEs, the quantum channels $\Psi_n$ have the same distribution as the well-known random quantum channels; see the end of Section \ref{sec:rectify}.

Those questions in \eqref{eq:question} were made in the field of quantum information theory. Classical communication theory was established by C. Shannon \cite{shannon1948mathematical}, but it does not extend to quantum communication because of the quantum correlation called entanglement. In fact the questions in \eqref{eq:question} are about whether or not entanglement improves MO$p$N or MOE, and are related to an operational quantity called capacity \cite{Holevo1998, SchumacherWestmoreland1997}, through the proven facts in \cite{shor2004equivalence}.
Also, note that the von Neumann entropy is the limit of R\'enyi $p$-entropy as $p \to 1$, and multiplicativity of MO$p$N is equivalent to additivity of R\'enyi $p$-entropy.

This paper is organized as follows. First, in Section \ref{sec:free_prob} necessary facts in free probability are provided.
Next, then violation of multiplicativity is proved in Section \ref{sec:vio-cp} by investigating the limiting output eigenvalues of the CP maps in \eqref{eq:our-map}.
Finally, additivity violation for quantum channels which are made by rectifying the CP maps, is proved in Section \ref{sec:vio-qc}.

\section{Free probability and polynomials in GUEs and GEs}\label{sec:free_prob}

Free probability has applications to the study of eigenvalues of random matrices as the size of the matrices tends to infinity. The first main result was by Voiculescu \cite{voiculescu1991} who found that a sequence of tuples of independent GUEs is asymptotically free.  The asymptotic freeness implies that the empirical eigenvalue distribution of self-adjoint polynomials in the independent GUEs weakly converges to a probability measure almost surely, and the limit distribution can be described by free probabilistic methods, e.g.\ the $R$-transform and $S$-transform.   
Note that asymptotic freeness also holds for rotationally invariant random matrices, see e.g.\ \cite{nica2006lectures, mingo2017free}.

For applications to random quantum channels, strong convergence is important. For Hermitian random matrices, the strong convergence says that in addition to weak convergence of the empirical eigenvalue distributions to a probability measure $\mu$, the largest (resp.\ the smallest) eigenvalue converges to the maximum (resp.\ the minimum) of the support of the limiting distribution $\mu$. 
The strong convergence holds for polynomials in independent GUEs \cite{haagerup_thorbjornsen} (see also Theorem \ref{thm:strong} below) and for polynomials in independent Haar unitary matrices \cite{collins_male}. 
On the other hand, the failure of the strong convergence means that some eigenvalues, called the outliers or spikes,  remain to lie outside the support of  the limit distribution $\mu$. This often occurs when random matrices are perturbed, see \cite{BBP2005,peche2006} and a more recent article \cite{BBC2017}. 

We introduce below  some necessary concepts on free probability and also prove some supplementary results.  

\subsection{Free probability}
This subsection collects some basic notions in free probability. See \cite{nica2006lectures, mingo2017free} for further details.

Let $\mathcal{A}$ be a $C^*$-algebra and $\varphi$ be a tracial faithful state on $\mathcal A$. We call the pair $(\mathcal A, \varphi)$ a $C^*$-probability space. For a self-adjoint element $a \in \mathcal A$ the unique probability measure $\mu$ on $\R$ satisfying the condition
\begin{equation}\label{eq:distribution}
\varphi(a^n) = \int_\R x^n \,d\mu_a(x), \qquad n \in \N,   
\end{equation}
is called the distribution of $a$ and denoted by $\mu_a$. In fact, $\mu_a$ is supported on the spectrum of $a$. 

For a possibly non self-adjoint element $a \in \mathcal A$, the set of its $*$-moments 
\eq{
\{\varphi(a^{\epsilon_1} a^{\epsilon_2} \cdots a^{\epsilon_n}): \epsilon_1,\epsilon_2,\dots,\epsilon_n \in \{1,\ast\}, n\in \N\}
}
is called the $*$-distribution of $a$. The $*$-distribution of $a$ can also be written as 
\eq{
\{\varphi(P(a))\}_{P(z)\in \C^*[z]} \ , 
}
where $\C^*[z]$ is the unital $*$-polynomial ring; see Section \ref{section:support}. Two elements $a$ and $b$ are said to have the same $*$-distribution if 
\eq{
\varphi(a^{\epsilon_1} a^{\epsilon_2} \cdots a^{\epsilon_n}) = \varphi(b^{\epsilon_1} b^{\epsilon_2} \cdots b^{\epsilon_n})
}
for any $ \epsilon_1,\epsilon_2,\dots,\epsilon_n \in \{1,\ast\}, n\in \N$. When $a$ and $b$ are self-adjoint, then this is equivalent to $\mu_a=\mu_b$.

A self-adjoint element $s \in \mathcal A$ is called a standard semi-circular element if $\mu_s$ is the standard semi-circle law
\begin{equation}
\frac{1}{2\pi}\sqrt{4-x^2}\, \mathbf1_{(-2,2)}(x)\,dx. 
\end{equation}
A family $\{s_i: i\in I\}$ of elements of $\mathcal A$ is called a semi-circular system if it is free and each member $s_i$ is a standard semi-circular element. 

An element $c \in \mathcal A$ is called a standard circular element if $c=\frac{s + \sqrt{-1}s'}{\sqrt{2}} $, where $\{s,s'\}$ is a semi-circular system. A family $\{\ce_i: i\in I\}$ is called a circular system if 
\eq{\label{eq:circular}
\ce_i=\frac{s_i+\sqrt{-1}s_i'}{\sqrt{2}},\qquad i\in I\ ,
}
where $\{s_i,s_i': i \in I\}$ is a semi-circular system.

A typical construction of a semi-circular system is given on  full Fock spaces, see Section \ref{sec:haagerup}. However, when we discuss semi-circular systems, the underlying $C^*$-probability space will not be specified unless needed.

The Marchenko-Pastur distribution of rate $\lambda>0$ is 
\begin{equation}
\max\{0,1-\lambda\}\delta_0(dx) + \frac{ \sqrt{(b-x)(x-a)} }{2\pi x} \,\mathbf1_{(a,b)}(x)\,dx, 
\end{equation}
where $a=(1-\sqrt{\lambda})^2$ and $b=(1+\sqrt{\lambda})^2$. 
For a standard semi-circular element $s$, computing the probability density functions yields that its square $s^2$ has the Marchenko-Pastur distribution of rate 1. For a standard circular element $c$, the self-adjoint element $c^* c$ also has the Marchenko-Pastur distribution of  rate 1; see \cite[Example 15.15]{nica2006lectures}.

For a self-adjoint element $a \in \mathcal A$, its Cauchy transform is defined by 
\begin{equation}
G_a(z) = \varphi((z-a)^{-1}) =  \int_{\text{Spec}(a)} \frac{1}{z-x}\,d\mu_a(x), \qquad z \in \C \setminus\text{Spec}(a). 
\end{equation}
where $\text{Spec}(a)$ is the spectrum of $a$. Note that a maximal domain for the analyticity of $G_a|_{\C \setminus\R}$ can identify $\sup \text{Spec}(a)$: 
\begin{equation}\label{eq:support}
\sup \text{Spec}(a) = \sup\{ t>0: \text{$G_a|_{\C \setminus\R}$ has analytic continuation to  $\C\setminus(-\infty,t]$}\}. 
\end{equation}
Furthermore, $G_a$ has the series expansion  
\begin{equation}
G_a(z)= \sum_{n=0}^\infty\frac{\varphi(a^n)}{z^{n+1}}, \qquad |z| > \|a\|. 
\end{equation}
From this $G_a$ has a compositional inverse series $G_a^{-1}(z)=\frac{1}{z} + \varphi(a) + [\varphi(a^2)-\varphi(a)^2]z + \cdots$. 
Then the R-transform is defined by
\begin{equation}
R_a(z) = G_a^{-1}(z) -\frac{1}{z} = \sum_{n=1}^\infty \kappa_n[a]z^{n-1}, 
\end{equation}
which is a convergent series in a neighborhood of $0$ of the complex plane. The coefficients $\kappa_n[a]$ are called the free cumulants of $a$. If $a,b \in\mathcal A$ are free then the additivity of R-transform
\begin{equation}\label{eq:R}
R_{a+b} = R_a + R_b
\end{equation}
holds in a neighborhood of $0$. This additivity enables us to compute $\mu_{a+b}$ in terms of $\mu_a,\mu_b$, when $a$ and $b$ are self-adjoint and free. In the language of probability measures, $\mu_{a+b}$ is called the free convolution of $\mu_a$ and $\mu_b.$ 

\begin{example}\label{ex:R}
\begin{enumerate}[label=\rm(\arabic*)]
\item For a standard semi-circular element $s$ we have $R_s(z)=z$. 

\item  Let $a\in\mathcal A$ be a self-adjoint element having the Marchenko-Pastur distribution with rate $\lambda>0$. It has the R-transform $R_a(z) = \lambda/(1-z)$. 
\end{enumerate}
\end{example}

A combinatorial description of free cumulants is also useful and allows us to define multi-variate free cumulants. 
A set partition $\pi$ of a finite set $I$ is a family of disjoint non-empty subsets, called the blocks, of $I$ so that their union equals $I$. When $I$ is linearly ordered, then a set partition is called crossing if there exist  $i,j$ in a common block and $k,\ell$ in another distinct  common block such that $i<k<j<\ell$; otherwise, a set partition is called non-crossing. For example, the set partition $\{\{1,3,5\},\{2,4\}\}$ of the finite linearly ordered set $[5]$ is crossing, while the set partition $\{\{1,2,5\},\{3,4\}\}$ is non-crossing. Let $NC(n)$ stand for the collection of all non-crossing set partitions of $[n]$. 
Some special non-crossing set partitions we need are summarized below: for $k,n \in \N$, let  
\begin{enumerate}[label=(\arabic*)]
\item $\hat{1}_{k}:=\{[k]\}\in NC(k)$;  

\item $\hat{1}^n_k :=\{\{1,2,\dots,k\},\{k+1,k+2,\dots,2k\}, \dots, \{(n-1)k +1,\dots, nk\}\} \in NC(nk)$;  

\item $NC_2(2k) = \{\pi\in NC(2k): \text{every block of $\pi$ has cardinality two}\}$.
\end{enumerate}
For a set partition $\pi=\{B_1,B_2,\dots,B_k\}$ of $[n]$ and a family of multilinear functionals  $\{L_n\colon \mathcal A^n \to \mathbb C\}_{n\ge1}$, we define the multiplicative extension
$L_\pi\colon\mathcal A^n \to \mathbb C$ by 
\[
L_\pi[a_1,a_2,\dots,a_n]= \prod_{B\in \pi} L_{|B|}[a_B], 
\]
where $a_B$ stands for the vector $(a_{i_1},a_{i_2},\dots, a_{i_k})$ for $B=\{i_1,i_2,\dots,i_k\}$ with the order $i_1<i_2<\cdots <i_k$. 

The state $\varphi$ naturally associates the multilinear functional $\varphi_n\colon \mathcal A^n \to \mathbb C$ defined by 
$\varphi_n[a_1,a_2,\dots,a_n]=\varphi(a_1a_2\cdots a_n)$. 
This further associates $\varphi_\pi$ for all non-crossing set partitions $\pi$ of $[n]$ for all $n \in \mathbb N$. 

A family of multilinear functionals $\{\kappa_n\colon\mathcal A^n\to\C\}_{n\ge1}$ is defined recursively via
\[
\varphi(a_1a_2\cdots a_n) = \sum_{\pi \in NC(n)} \kappa_\pi[a_1,\dots, a_n], 
\]
and each $\kappa_n$ is called a (multivariate) free cumulant. Some free cumulants of low orders are $\kappa_1[a]=\varphi(a)$ and $\kappa_2[a_1,a_2]=\varphi(a_1a_2)- \varphi(a_1)\varphi(a_2)$. There is an inversion formula, called the M\"obius inversion formula, that expresses $\kappa_n$ in terms of $(\varphi_\pi)_{\pi\in NC(n), n \in \mathbb N}$. The coefficients $\kappa_n[a]$ in the R-transform $R_a(z)$ are obtained as the multi-variate free cumulants evaluated at the diagonal vector $(a,a,\dots,a)$. 

\subsection{Strong convergence for GUEs and GEs}
\label{section:support}

Combining (the almost sure version of) asymptotic freeness (see e.g.\ \cite[Theorem 4 in Chapter 4]{mingo2017free}) and \cite[Theorem A]{haagerup_thorbjornsen}, the strong convergence for self-adjoint polynomials in GUEs (and then GEs) is explained below.

Let $\C^*[z_1,z_2,\dots, z_k]$ be the unital  $\ast$-polynomial ring generated by non-commutative indeterminates $z_1,z_2,\dots, z_k$; that is, it is a unital polynomial ring generated by non-commutative indeterminates $z_1,z_2,\dots, z_k, y_1,y_2,\dots, y_k$ equipped with the anti-linear mapping $*$ defined by $1^*= 1$, $z_i^*= y_i$ and $y_i^*= z_i$ for all $i$ and then extended to $\C^*[z_1,z_2,\dots, z_k]$ so that $(PQ)^*=Q^* P^*$ for all $P,Q\in\C^*[z_1,z_2,\dots, z_k]$.  Later on the element $y_i$ will always be denoted by $z_i^*$.

A polynomial $P \in \C^*[z_1,z_2,\dots, z_k]$ is said to be self-adjoint if $P^*=P$. For example, the polynomials $1, z_1z_2^*z_2z_1^*, z_1 +z_1^*$ are self-adjoint, while $z_1, z_1^* z_2 z_1 $ are not.

\begin{theorem}[Strong convergence] \label{thm:strong}
Suppose we have $k$ independent $n \times n$ GUEs (or GEs) $X_1, \ldots, X_k$. Let $x_1,\dots, x_k$ be a semi-circular (or circular) system. For a self-adjoint polynomial $P(z_1,z_2,\cdots, z_k) \in \C^*[z_1,z_2,\dots, z_k]$ the following 
assertions hold as $n\to\infty$. 
\begin{enumerate}[label=\rm(\arabic*)]
\item The empirical eigenvalue distribution of the random matrix $P(X_1,X_2,\dots, X_k)$ converges weakly to $\mu_{P(x_1,x_2,\dots, x_k)}$ almost surely.  
\item The largest and smallest eigenvalues of $P(X_1,X_2,\dots, X_k)$ converge to $\sup \text{\normalfont Spec}(P(x_1,x_2,\dots, x_k))$ and $\inf \text{\normalfont Spec}(P(x_1,x_2,\dots, x_k))$ almost surely, respectively. 
\end{enumerate}
\end{theorem}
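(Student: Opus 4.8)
The plan is to reduce the statement to two already-available inputs: the almost-sure asymptotic freeness of independent GUEs (Voiculescu, in the almost-sure form of \cite[Theorem 4 in Chapter 4]{mingo2017free}) and the strong convergence result of Haagerup--Thorbjørnsen \cite[Theorem A]{haagerup_thorbjornsen}. For the GUE case, assertion (1) is immediate: almost-sure asymptotic freeness says that for any fixed $*$-polynomial $Q$ in $k$ variables, $\frac1n\Tr\, Q(X_1,\dots,X_k) \to \varphi(Q(x_1,\dots,x_k))$ almost surely, where $(x_1,\dots,x_k)$ is a semi-circular system; applying this with $Q = P^m$ for each $m\in\N$ (using that $P$ is self-adjoint, so $P^m$ has real moments) gives convergence of all moments of the empirical eigenvalue distribution of $P(X_1,\dots,X_k)$ to the moments of $\mu_{P(x_1,\dots,x_k)}$, hence weak convergence almost surely, since $\mu_{P(x_1,\dots,x_k)}$ is compactly supported and therefore moment-determinate.

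For assertion (1) in the GE case, I would invoke the relation \eqref{eq:gini_circ}: writing $\gini_i = (S_i + \sqrt{-1}\,S_i')/\sqrt2$ realizes the $k$ independent GEs as $*$-polynomials in $2k$ independent GUEs $S_1,S_1',\dots,S_k,S_k'$, and correspondingly $c_i = (s_i+\sqrt{-1}\,s_i')/\sqrt2$ writes the circular system in terms of a semi-circular system of size $2k$ (the matching of joint $*$-distributions being exactly the definition \eqref{eq:circular}). Thus $P(\gini_1,\dots,\gini_k)$ is itself a self-adjoint polynomial $\tilde P(S_1,S_1',\dots,S_k,S_k')$ in $2k$ independent GUEs, and $P(c_1,\dots,c_k) = \tilde P(s_1,s_1',\dots,s_k,s_k')$; the GUE case already proved then applies verbatim.

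For assertion (2), the weak convergence in (1) already forces $\liminf \lambda_{\max} \ge \sup\text{Spec}(P(x_1,\dots,x_k))$ and $\limsup \lambda_{\min} \le \inf\text{Spec}$ almost surely, so the real content is the reverse: no outliers escape the support. This is precisely where \cite[Theorem A]{haagerup_thorbjornsen} enters — it gives, for any self-adjoint polynomial in independent GUEs, that $\|P(X_1,\dots,X_k)\| \to \|P(x_1,\dots,x_k)\|$ almost surely (more precisely the norm version, which upgrades convergence in distribution to convergence of the extreme eigenvalues). Applying it to $P$ controls $\lambda_{\max}$ of $P(X_1,\dots,X_k)$, and applying it to $-P$ (also self-adjoint) controls $-\lambda_{\min}$; together with the matching lower/upper bounds from (1), both extreme eigenvalues converge to the edges of $\text{Spec}(P(x_1,\dots,x_k))$ almost surely. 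The GE case again follows by the substitution $\gini_i \mapsto \tilde P(S_1,S_1',\dots)$ into the GUE statement. The one genuine subtlety to be careful about is that \cite{haagerup_thorbjornsen} is typically stated for polynomials with matrix coefficients or in a convergence-in-probability form with a separate Borel--Cantelli argument upgrading it to almost sure; I would simply cite the almost-sure norm-convergence formulation as it has by now become standard (e.g. via \cite{mingo2017free}), so the only real work is the bookkeeping of the $k \mapsto 2k$ reduction and noting that self-adjointness is preserved under $P \mapsto -P$ and under the Ginibre substitution — none of which presents a substantial obstacle.
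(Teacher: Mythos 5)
Your proposal is correct and matches the paper's own (very brief) justification: the authors likewise obtain the result by combining almost-sure asymptotic freeness of independent GUEs with Haagerup--Thorbj{\o}rnsen's Theorem A, and handle the GE/circular case by the same substitution $\gini_i=(S_i+\sqrt{-1}\,S_i')/\sqrt{2}$, $c_i=(s_i+\sqrt{-1}\,s_i')/\sqrt{2}$ reducing to $2k$ GUEs. Your write-up simply supplies the bookkeeping (moment-determinacy, applying the norm result to $\pm P$) that the paper leaves implicit.
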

Note that GEs $\{C_i\}_{i\in I}$ and a circular system $\{\ce_i\}_{i\in I}$ have the respective relations \eqref{eq:gini_circ} and \eqref{eq:circular}. 
This is why the strong convergence of GUEs can easily be extended to self-adjoint polynomials in GEs.

\begin{theorem}\label{theorem:clt}
Suppose we have $k$ independent $n \times n$ GUEs or GEs $X_1, \ldots, X_k$.
Then, for the sequence of random matrices $\{W_{k,n}\}_{n=1}^\infty$ defined by
\eq{\label{eq:W_sum}
W_{k,n} = \sum_{i=1}^k X_i^* X_i \ ,
 }
almost surely we have
\eq{ \label{eq:asymptotic_bound}
\lim_{n \to \infty} \mu_{\min} (W_{k,n})= \left(\sqrt{k} -1\right)^2   
\eqtext{and} 
 \lim_{n \to \infty} \mu_{\max} (W_{k,n}) = \left(\sqrt{k} +1\right)^2  
 \ ,
} 
where $\mu_{\min}(\cdot)$ and $\mu_{\max}(\cdot)$ are respectively the minimum and maximum eigenvalues.
\end{theorem}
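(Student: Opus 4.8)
The plan is to deduce the statement from the strong convergence theorem (Theorem \ref{thm:strong}) after identifying the spectrum of the limiting operator. First I would write $W_{k,n}=P(X_1,\dots,X_k)$ for the polynomial $P(z_1,\dots,z_k)=\sum_{i=1}^k z_i^* z_i$, which is a self-adjoint element of $\C^*[z_1,\dots,z_k]$ since $(z_i^* z_i)^*=z_i^* z_i$. Theorem \ref{thm:strong}(2) then gives, almost surely,
\[
\lim_{n\to\infty}\mu_{\min}(W_{k,n}) = \inf\mathrm{Spec}(W), \qquad \lim_{n\to\infty}\mu_{\max}(W_{k,n}) = \sup\mathrm{Spec}(W),
\]
where $W:=\sum_{i=1}^k x_i^* x_i$ and $\{x_i\}_{i=1}^k$ is a semi-circular system (in the GUE case) or a circular system (in the GE case). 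So the problem reduces to showing $\mathrm{Spec}(W) = [(\sqrt k-1)^2,(\sqrt k+1)^2]$.

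Second, I would identify $\mu_W$ by a free convolution computation. In either case each $x_i^* x_i$ has the Marchenko--Pastur distribution of rate $1$: for a semi-circular element $x_i=x_i^*$ this is $x_i^2$, and for a circular element it is $c^* c$, both recalled in Section \ref{sec:free_prob}; by Example \ref{ex:R}(2) its R-transform is $R_{x_i^* x_i}(z)=1/(1-z)$. Since $x_i^* x_i$ belongs to the unital $*$-subalgebra generated by the single element $x_i$, and $\{x_i\}_{i=1}^k$ is free, the family $\{x_i^* x_i\}_{i=1}^k$ is free; hence the additivity of the R-transform \eqref{eq:R} yields $R_W(z)=k/(1-z)$. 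As all these elements are bounded, the distributions involved are compactly supported and thus determined by their R-transforms, so comparing with Example \ref{ex:R}(2) again shows that $\mu_W$ is the Marchenko--Pastur distribution of rate $k$ (equivalently, $\mu_{W}$ is the $k$-fold free additive convolution of the rate-$1$ law).

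Third, I would read off the edges. For $k\in\N$ the mass $\max\{0,1-k\}$ of the atom at $0$ vanishes, so $\mathrm{supp}(\mu_W)=[(1-\sqrt k)^2,(1+\sqrt k)^2]=[(\sqrt k-1)^2,(\sqrt k+1)^2]$. Because $\varphi$ is faithful, $\mathrm{Spec}(W)$ coincides with $\mathrm{supp}(\mu_W)$, whence $\inf\mathrm{Spec}(W)=(\sqrt k-1)^2$ and $\sup\mathrm{Spec}(W)=(\sqrt k+1)^2$. Substituting into the limits from the first step gives \eqref{eq:asymptotic_bound}. (As a sanity check, for $k=1$ this returns the endpoints $0$ and $4$ of the support of the Marchenko--Pastur law of rate $1$, consistent with $x_1^2=x_1^* x_1$.)

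The genuine difficulty has already been packaged into Theorem \ref{thm:strong}: it is the strong convergence, not merely weak convergence of the empirical spectral distributions, that rules out outliers and pins the extreme eigenvalues of $W_{k,n}$ to the edges of $\mu_W$; without it one could only conclude a statement about bulk behaviour. Beyond invoking that theorem, the only steps that deserve an explicit word are the freeness of $\{x_i^* x_i\}_{i=1}^k$ (polynomials in a free family indexed by disjoint singletons) and the identity $\mathrm{Spec}(W)=\mathrm{supp}(\mu_W)$ coming from faithfulness of the trace, both of which are routine.
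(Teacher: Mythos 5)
Your proposal is correct and follows essentially the same route as the paper: invoke the strong convergence of Theorem \ref{thm:strong} for the self-adjoint polynomial $\sum_i z_i^* z_i$, identify the limit element as Marchenko--Pastur of rate $k$ via the R-transform additivity \eqref{eq:R} applied to the free family $\{x_i^* x_i\}$, and read off the support edges $(\sqrt{k}\pm 1)^2$. The extra remarks you supply (freeness of $\{x_i^* x_i\}$, faithfulness giving $\mathrm{Spec}(W)=\mathrm{supp}(\mu_W)$) are routine details the paper leaves implicit.
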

\begin{proof} 
Let $\{x_i\}_{i=1}^k$ be a semi-circular or circular system, and $w_k:=x_1^*x_1+x_2^*x_2 + \cdots + x_k^*x_k$.  
In view of the strong convergence in Theorem \ref{thm:strong}, it suffices to compute the infimum and the supremum of the support of $\mu_{w_k}$. Recall that each $x_ix_i^*$ has the Marchenko-Pastur distribution with rate 1. According to \eqref{eq:R}, $w_k$ has the R-transform $R_{w_k}(z)= k/(1-z)$, so that it has the Marchenko-Pastur distribution with rate $k$. In particular, the infimum and the supremum of the support of $\mu_{w_k}$ are respectively given by $k+1 - 2\sqrt{k}$ and $k+1 + 2\sqrt{k} $. 
\end{proof}

We will also use the norm of a general homogeneous quadratic polynomial in GUEs. The following function will be crucial in the analysis of our random CP map \eqref{eq:our-map}, for example see \eqref{eq:key}.
However, our statement is more general than needed as we apply the following proposition only to self-adjoint polynomials.

\begin{proposition}\label{proposition:f}
For a semi-circular system $\{s_1, \ldots, s_k\}$,  a circular system $\{\ce_1, \ldots, \ce_k\}$ and a matrix $A \in M_k(\mathbb C)$, 
we have 
\eq{\label{eq:f_def}
\left \| \sum_{i,j =1}^k  a_{i,j} \, s_i s_j \right\| 
=
\left \| \sum_{i,j =1}^k  a_{i,j} \, \ce_i^* \ce_j \right\| 
\ ,
}
where $a_{i,j} = (A)_{i,j}$ are the $(i,j)$-elements of $A$.
The common value above will be denoted by $f(A)$ below.
\end{proposition}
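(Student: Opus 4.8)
The plan is to realize both sides via the same abstract combinatorial data, namely the joint $*$-distribution of a semicircular/circular system, and then invoke the moment-norm formula on a $C^*$-probability space. The key observation is that for a self-adjoint element $x$ in a $C^*$-probability space with faithful tracial state, one has $\|x\| = \lim_{m\to\infty} \varphi(x^{2m})^{1/(2m)}$; hence it suffices to show that the self-adjoint elements
\[
P_s := \sum_{i,j=1}^k a_{i,j}\, s_i s_j
\qquad\text{and}\qquad
P_c := \sum_{i,j=1}^k a_{i,j}\, \ce_i^* \ce_j
\]
have the same $*$-distribution, i.e.\ $\varphi(P_s^m) = \varphi(P_c^m)$ for all $m$. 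Wait — the proposition is stated for general $A$, not only self-adjoint, so I should not assume $P_s = P_s^*$. In that case I would instead compare the $*$-distributions of the (non-self-adjoint) elements $P_s$ and $P_c$ themselves, or more cleanly apply the moment-norm identity to the self-adjoint elements $P_s^* P_s$ and $P_c^* P_c$ via $\|P\|^2 = \|P^*P\| = \lim_m \varphi((P^*P)^{m})^{1/m}$. So the real target is: every $*$-moment $\varphi(P_s^{\epsilon_1}\cdots P_s^{\epsilon_n})$ equals the corresponding $\varphi(P_c^{\epsilon_1}\cdots P_c^{\epsilon_n})$.

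The heart of the matter is a moment computation via free cumulants. Expanding $P_s^{\epsilon_1}\cdots P_s^{\epsilon_n}$ produces a sum over multi-indices of words in the $s_i$'s, and $\varphi$ of such a word is computed by the moment–cumulant formula summing over $NC$-partitions; only pair partitions survive because the only nonvanishing free cumulants of a semicircular system are the second-order ones $\kappa_2[s_i,s_j]=\delta_{i,j}$, and moreover a pairing contributes only when it pairs matching indices. For the circular side, the nonvanishing mixed cumulants of a circular system are $\kappa_2[\ce_i,\ce_j^*]=\kappa_2[\ce_i^*,\ce_j]=\delta_{i,j}$ while $\kappa_2[\ce_i,\ce_j]=\kappa_2[\ce_i^*,\ce_j^*]=0$; see \cite[Chapter on circular elements]{nica2006lectures}. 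The combinatorial claim I must verify is that, once one tracks the alternation of $\ce$ versus $\ce^*$ imposed by the pattern $\ce_i^*\ce_j$ appearing in each copy of $P_c$, a non-crossing pairing is admissible on the circular side if and only if the corresponding pairing of plain letters is admissible on the semicircular side, and that admissible pairings carry the same weight $\prod a_{i,j}$ in both cases. The structural reason this works is that in the quadratic building block the left letter is always starred and the right letter unstarred (for the $\ce^*\ce$ expression), so any non-crossing pair must join a starred letter to an unstarred one — exactly the circular selection rule — and this matches, index for index, the semicircular pairing where no star bookkeeping is present.

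Concretely, I would organize it as follows. First, reduce $\|\cdot\|$ to a limit of $*$-moments as above, so that the proposition follows from equality of $*$-distributions of $P_s$ and $P_c$. Second, write out $\varphi(P_s^{\epsilon_1}\cdots P_s^{\epsilon_n})$ as a sum over $2n$ letters (two per $P$) indexed by $[k]$, apply the moment–cumulant formula, and use $\kappa_2[s_i,s_j]=\delta_{i,j}$, $\kappa_m=0$ for $m\neq 2$, to collapse it to a sum over $NC_2$ of the $2n$ positions with a Kronecker-delta constraint forcing paired positions to share an index, weighted by the product of the relevant $a_{i,j}$'s. Third, do the identical expansion for $P_c$, now carrying the star pattern; use the circular cumulant rule to see that exactly the same set of non-crossing pairings survives with exactly the same index constraints and weights; conclude the two sums are term-by-term equal. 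The main obstacle I anticipate is purely bookkeeping: setting up notation for the $2n$ positions, their word-letters, their star-type (for the circular case), and the induced index-labelling cleanly enough that the bijection between admissible pairings is transparent rather than a case analysis — in particular handling the mixed $*$-moments $P_c^{\epsilon_1}\cdots$ where some factors are $P_c$ and some are $P_c^*=\sum \bar a_{i,j}\,\ce_j^*\ce_i$, and checking the conjugate coefficients still match the semicircular side where $P_s^*=\sum \bar a_{i,j}\, s_j s_i$. Once the dictionary is fixed, the cumulant rules do the rest.
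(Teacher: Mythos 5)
Your proposal is correct and follows essentially the same route as the paper: reduce via the $C^*$-identity and faithfulness to comparing the distributions of $P_s^*P_s$ and $P_c^*P_c$ (equivalently, $*$-moments), expand by the moment--cumulant formula so that only second-order cumulants survive, and observe that non-crossing pairings necessarily join odd (starred) with even (unstarred) positions, so the same pairings, Kronecker constraints and coefficient weights appear on both sides. The paper phrases this through free cumulants of products (via the condition $\pi\lor\hat{1}^r_4=\hat{1}_{4r}$) rather than raw moments, but the combinatorial core is identical.
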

\begin{proof}
First, we define
\eq{
 s_A=\sum_{i,j =1}^k  a_{i,j} \, s_i s_j \qquad \text{and}\qquad \ce_A =\sum_{i,j =1}^k  a_{i,j} \, \ce_i^* \ce_j  \ .
}
By the $C^*$-condition, it suffices to prove that $\|s_A^*s_A\|=\|\ce_A^* \ce_A\|$, which follows if the distributions of $s_A^*s_A$ and $\ce_A^* \ce_A$ coincide since the state is faithful. The element $\ce_A^* \ce_A$ has the obvious formula
\[
\ce_A^* \ce_A = \sum_{i,j, x, y=1}^k \overline{a_{j,i}} a_{x,y} \ce_i^* \ce_j \ce_x^* \ce_y.
\]
With the notation 
\[
\alpha_{i_1,i_2,\dots,i_{4r}} =\overline{a_{i_2,i_1}} a_{i_3,i_4}\overline{a_{i_6,i_5}}a_{i_7,i_8} \cdots \overline{a_{i_{4r-2},i_{4r-3}}}  a_{i_{4r-1},i_{4r}},
\]
the free cumulants of $\ce_A^* \ce_A$ are computed as 
\eq{
\kappa_r[\ce_A^* \ce_A] 
&= \sum_{i_1,i_2,\dots, i_{4r}\in[k]} \alpha_{i_1,i_2,\dots, i_{4r}} \kappa_r[\ce_{i_1}^* \ce_{i_2} \ce_{i_3}^* \ce_{i_4}, \ce_{i_5}^* \ce_{i_6} \ce_{i_7}^* \ce_{i_8}, \dots, \ce_{i_{4r-3}}^* \ce_{i_{4r-2}} \ce_{i_{4r-1}}^* \ce_{i_{4r}}] \\
&= \sum_{i_1,i_2,\dots, i_{4r}\in [k]} \alpha_{i_1,i_2,\dots, i_{4r}} \sum_{\substack{\pi \in NC(4r) \\ \pi\lor\hat{1}^r_4=\hat{1}_{4r}}} \kappa_\pi[\ce_{i_1}^*, \ce_{i_2}, \ce_{i_3}^*, \ce_{i_4}, \ce_{i_5}^*,  \dots, \ce_{i_{4r-1}}^*, \ce_{i_{4r}}] \\
&= \sum_{i_1,i_2,\dots, i_{4r}\in [k]} \alpha_{i_1,i_2,\dots, i_{4r}} \sum_{\substack{\pi \in NC_2(4r) \\ \pi\lor\hat{1}^r_4=\hat{1}_{4r}}} \kappa_\pi[\ce_{i_1}^*, \ce_{i_2}, \ce_{i_3}^*, \ce_{i_4}, \ce_{i_5}^*,  \dots, \ce_{i_{4r-1}}^*, \ce_{i_{4r}}] \ ,
}
where \cite[Theorem 11.12]{nica2006lectures} was used on the second line and the known fact
\eq{\label{eq:2ndm2}
\kappa_{2}[\ce_i, \ce_j^*] = \kappa_2[\ce_i^*, \ce_j] = \delta_{i,j}, 
\text{ and the other free cumulants are zero,}
}
was used on the third line. 

Note that for $\pi \in NC_2(4r)$, each block of $\pi$ consists of an odd number and an even number; otherwise there would be a crossing. This fact together with \eqref{eq:2ndm2} implies that for such $\pi$
\[
\kappa_\pi[\ce_{i_1}^*, \ce_{i_2}, \ce_{i_3}^*, \ce_{i_4}, \ce_{i_5}^*,  \dots, \ce_{i_{4r-1}}^*, \ce_{i_{4r}}] = \prod_{\{p,q\} \in \pi}\delta_{i_p,i_q} \ . 
\]
We thus arrive at the expression 
\[
\kappa_r[\ce_A^* \ce_A] = \sum_{\substack{\pi \in NC_2(4r) \\ \pi\lor\hat{1}^r_4=\hat{1}_{4r}}} \sum_{\substack{i_1,i_2,\dots, i_{4r}\in [k] \\ i_p=i_q \text{~if~} \{p,q\} \in \pi}} \alpha_{i_1,i_2,\dots, i_{4r}} \ .
\]
A similar (actually simpler) reasoning yields the same formula for $\kappa_r[s_A^*s_A]$, where the fact \eqref{eq:2ndm2} needs to be replaced by 
\eq{\label{eq:2ndm}
\kappa_{2}[s_i, s_j] = \delta_{i,j} \ , 
\text{ and the other free cumulants are zero.}
}
The details are omitted,
but notice that one can pair only elements at odd and even positions, giving the same combinatorial structure as before.
\end{proof}

\begin{remark} The proof above is actually applicable to a more general result: for every 
polynomial in $\C^*[z_1,\dots,z_k]$ of the following form
\eq{
P(z_1,z_2,\dots, z_k) = \alpha + \sum_{p=1}^q \sum_{i_1,i_2,\dots, i_{2p} \in[k]} \alpha_{i_1,\dots, i_{2p}} z_{i_1}^* z_{i_2} z_{i_3}^* z_{i_4} \cdots z_{i_{2p-1}}^* z_{i_{2p}} \ , 
}
$P(s_1,s_2,\dots,s_k)$ and $P(\ce_1,\ce_2,\dots,\ce_k)$ have the same $*$-distribution. 
\end{remark}

\begin{theorem}\label{theorem:bounding-condition}
For $f(\cdot)$ in Proposition \ref{proposition:f} and
a positive semi-definite matrix $A \in M_k(\mathbb C)$ we have
\eq{\label{eq:f_estimate}
f(A)= \min_{x \in (0,1/\|A\|)} h(x,A) \ , \quad \text{where} \quad
h(x,A) = 
\frac{1}{x}+\sum_{i=1}^k\frac{\lambda_i}{1-\lambda_i x} \ ,
}
and $\lambda_i$'s are the eigenvalues of $A$. 
Moreover, $x\mapsto h(x,A)$ is strictly convex on $(0,1/\|A\|)$, and the above minimum corresponds to a unique critical point of $h(x,A)$.
In particular,  $f(\cdot)$ is rotationally invariant on the set of positive semi-definite matrices. 
\end{theorem}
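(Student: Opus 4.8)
The plan is to collapse the two-variable quantity $f(A)$ to a one-dimensional analysis of the explicit function $h(\cdot,A)$, recognized as the functional inverse of the Cauchy transform of a concrete positive free-convolution element. First I would diagonalize $A=UDU^*$ with $U$ unitary and $D=\mathrm{diag}(\lambda_1,\dots,\lambda_k)$, $\lambda_i\ge0$. Fixing a circular system $\{\ce_i\}$ and putting $d_p:=\sum_j\overline{U_{j,p}}\,\ce_j$, the identity $a_{i,j}=\sum_p U_{i,p}\lambda_p\overline{U_{j,p}}$ gives $\sum_{i,j}a_{i,j}\,\ce_i^*\ce_j=\sum_p\lambda_p\,d_p^*d_p$, while, using multilinearity of the free cumulants and~\eqref{eq:2ndm2}, one checks $\kappa_2[d_p,d_q^*]=\kappa_2[d_p^*,d_q]=(U^*U)_{p,q}=\delta_{p,q}$ with all other free cumulants of $\{d_p\}\cup\{d_p^*\}$ vanishing; hence $\{d_p\}$ has the same joint $*$-distribution as $\{\ce_p\}$. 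By Proposition~\ref{proposition:f} this yields $f(A)=\|\sum_p\lambda_p\,d_p^*d_p\|=\|T\|$, where $T:=\sum_{p=1}^k\lambda_p\,\ce_p^*\ce_p$; in particular $f(A)$ depends only on the spectrum of $A$, which is already the asserted rotational invariance (it also reappears from the final formula). The degenerate case $A=0$ is trivial, and I would set it aside.

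Next I would compute the distribution of $T$. Since $\{\ce_p\}$ is a circular system, the $\ce_p^*\ce_p$ are free, and each has the Marchenko--Pastur law of rate $1$, with $R$-transform $z\mapsto 1/(1-z)$ (Example~\ref{ex:R}). The elementary scaling property $R_{\lambda a}(z)=\lambda R_a(\lambda z)$ together with the additivity~\eqref{eq:R} then gives
\[
R_T(z)=\sum_{p=1}^k\frac{\lambda_p}{1-\lambda_p z}\quad\text{near }0,\qquad\text{so}\qquad G_T^{-1}(z)=\frac1z+R_T(z)=h(z,A).
\]
On $(0,1/\|A\|)$, which avoids all poles $0,1/\lambda_i$ of $h(\cdot,A)$, termwise differentiation gives $\partial_x^2 h(x,A)=\frac{2}{x^3}+\sum_{i=1}^k\frac{2\lambda_i^3}{(1-\lambda_i x)^3}>0$, so $h(\cdot,A)$ is strictly convex there, while $h(x,A)\to+\infty$ as $x\downarrow0$ (from $1/x$) and as $x\uparrow 1/\|A\|$ (from $\|A\|/(1-\|A\|x)$). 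Hence $h(\cdot,A)$ attains its infimum at a unique interior critical point $x^*$, which is the stated unique critical point. Writing $\beta:=h(x^*,A)=\min_x h(x,A)$ and $\beta_0:=\|T\|=\sup\mathrm{Spec}(T)$ (recall $T\ge0$), it remains to prove $\beta=\beta_0$.

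This identification is the heart of the argument. For $\beta_0\le\beta$: on $(0,x^*)$ the map $h(\cdot,A)$ is a strictly decreasing analytic bijection onto $(\beta,\infty)$ with non-vanishing derivative, and its inverse extends the germ of $G_T$ at infinity; by the holomorphic inverse function theorem $G_T$ then continues analytically across the segment $(\beta,\infty)$, and, glued with $G_T$ on $\C\setminus\R$, to all of $\C\setminus(-\infty,\beta]$, so $\beta_0\le\beta$ by~\eqref{eq:support}. For $\beta\le\beta_0$: the real-analytic functions $h(\cdot,A)$ and the inverse of $G_T|_{(\beta_0,\infty)}$ agree near $0$, hence on $(0,\min\{G_T(\beta_0^+),1/\|A\|\})$ with $G_T(\beta_0^+):=\lim_{z\downarrow\beta_0}G_T(z)\in(0,+\infty]$, and there $h(\cdot,A)$ is strictly decreasing; since $h(\cdot,A)$ is strictly decreasing precisely on $(0,x^*)$ and blows up at $1/\|A\|$, this forces $G_T(\beta_0^+)\le x^*<1/\|A\|$, and then continuity of $h(\cdot,A)$ at $G_T(\beta_0^+)$ gives $h(G_T(\beta_0^+),A)=\lim_{z\downarrow\beta_0}z=\beta_0$, so $\beta=\min_x h(x,A)\le\beta_0$. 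Combining, $f(A)=\|T\|=\beta=\min_{x\in(0,1/\|A\|)}h(x,A)$, attained at $x^*$.

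I expect the main obstacle to be this last step, namely controlling the interaction between the explicitly defined $h(\cdot,A)$ and the genuine Cauchy-transform inverse $G_T^{-1}$ (which a priori lives only on $(0,G_T(\beta_0^+))$) and making the analytic-continuation argument rigorous — in particular the gluing of a holomorphic continuation of $G_T$ across $(\beta,\infty)$ with $G_T$ on the cut plane, and the fact that $G_T(\beta_0^+)$ lands strictly inside $(0,1/\|A\|)$. By contrast the diagonalization step, the $R$-transform computation and the convexity analysis are routine: the first is the standard unitary invariance of circular systems, the second a textbook computation, and the third elementary calculus.
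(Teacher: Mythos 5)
Your proof is correct, and it reaches the same two milestones as the paper — the identity $G^{-1}(z)=h(z,A)$ and the identification of $\min_{x}h(x,A)$ with the top of the spectrum via the maximal-analyticity characterization \eqref{eq:support} — but it gets to the first milestone by a genuinely different route. The paper stays with the semi-circular realization $s_A=\sum_{i,j}a_{i,j}s_is_j$ and computes its free cumulants directly, $\kappa_r[s_A]=\Tr[A^r]$, using the cumulants-with-products formula (\cite[Theorem 11.12]{nica2006lectures}), from which the R-transform $\sum_i\lambda_i/(1-\lambda_i z)$ drops out; this one-line cumulant identity needs no positivity of $A$ and is of independent interest. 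You instead diagonalize $A$, use the unitary invariance of a circular system (checked via second-order cumulants) to replace $\sum_{i,j}a_{i,j}\ce_i^*\ce_j$ by $T=\sum_p\lambda_p\,\ce_p^*\ce_p$, and then invoke freeness of the $\ce_p^*\ce_p$, the Marchenko--Pastur law of rate $1$, and scaling plus additivity of the R-transform; this is a more "standard toolbox" derivation of the same R-transform, and it makes the rotational invariance manifest from the start rather than as a corollary of the final formula. For the endgame, your treatment is actually more detailed than the paper's: the paper only remarks that strict convexity of $h$ means $G$ continues analytically to $\C\setminus[0,v]$ and no further, whereas you separate the two inequalities, continuing $G_T$ across $(\beta,\infty)$ via the inverse function theorem for one direction and matching $h$ with the inverse of $G_T|_{(\beta_0,\infty)}$ up to $G_T(\beta_0^+)$ for the other; both arguments use \eqref{eq:support} in its intended sense (the edge is the infimum of $t$ for which continuation to $\C\setminus(-\infty,t]$ holds), and your gluing/identity-theorem details are exactly the points the paper leaves implicit.
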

\begin{proof}
We follow the notation in the proof of Proposition \ref{proposition:f}. It suffices to focus on a semi-circular system. 

Since $A$ is positive semi-definite, $s_A$ is a positive element, so that $f(A)= \sup \text{Spec}(s_A)$. 
The free cumulants of $s_A$ will be
\eq{
\kappa_r[s_A] = \sum_{\substack{i_1,i_2,\cdots,i_{r}\in [k]\\j_1,j_2,\cdots,j_{r}\in [k]}}
a_{i_1,j_1} \cdots a_{i_r,j_r}\,\, \kappa_{r}[s_{i_1}s_{j_1},
\ldots, s_{i_r}s_{j_r}] \ .
}
Now, by using \cite[Theorem 11.12]{nica2006lectures}, we have
\eq{\label{eq:no1}
\kappa_r[s_{i_1}s_{j_1},\ldots, s_{i_r}s_{j_r}] 
= \sum_{\substack{\pi\in NC_2(2r)\\\pi\lor \hat{1}^r_2=\hat{1}_{2r}}}\kappa_\pi[s_{i_1}, s_{j_1},\cdots,s_{i_r}, s_{j_r}] 
=\kappa_\nu [s_{i_1}, s_{j_1},\cdots,s_{i_r}, s_{j_r}] \ .
}
Here we used the fact \eqref{eq:2ndm} and set $\nu=\{\{2r,1\},\{2,3\},\{4,5\}, \ldots, \{2r-2,2r-1\}\}$. 
Hence
\eqind{\label{eq:p-norm-A}
\kappa_r[s_A]
=\sum_{i_1,i_2,\cdots,i_{r}\in [n]}
a_{i_1,i_2}a_{i_2,i_3} \cdots a_{i_r,i_1}
=\trace [A^r]
=\sum_{i=1}^k\lambda_i^r
}
where $\lambda_1,\ldots,\lambda_k$ are the eigenvalues of the  matrix $A$.

Then we get the R-transform of $\mu_{s_A}$: 
\eqind{
R_{s_A}(z)
=\sum_{r=0}^\infty\lrb\sum_{i=1}^k\lambda_i^{r+1}\rrb z^r
=\sum_{i=1}^k\lrb\sum_{r=0}^\infty\lambda_i^{r+1} z^r\rrb
=\sum_{i=1}^k\frac{\lambda_i}{1-\lambda_iz}
}
which is well-defined for $|z| < 1/ \|A\|$.
Hence, similarly as before, 
\eq{
G_{s_A}^{-1}(z)=\frac{1}{z}+\sum_{i=1}^k\frac{\lambda_i}{1-\lambda_iz} =: h(z,A)  \ .
}

Finally, to show \eqref{eq:f_estimate} one needs to know where analyticity of $G_{s_A}(z)$ breaks down
starting from $+ \infty$, which indicates the upper edge of the support of the measure $\mu_{s_A}$.
Detailed explanations are found, for example, in \cite{mingo2017free}.
Now, since $\displaystyle \frac{1}{z}$ and $\displaystyle \frac{\lambda_i}{1-\lambda_i z}$'s are all strictly convex on $(0,1/\|A\|)$, so is their sum. Then, the fact that $\displaystyle \lim_{z\searrow 0}h(x,A) = +\infty = \lim_{z \nearrow 1/\|A\|} h(z,A)$ 
shows that $h(z,A)$ is strictly convex  on $(0, 1/\| A\|)$ and has the unique minimum value $v$ at $c \in (0, 1/ \| A\|)$. 
This means that $h(z,A)$ is strictly decreasing on $(0, c)$ but not on $(0,d)$ with $d>c$. Hence,
$G_{s_A}(z)$ is analytically continued to $\mathbb C \setminus [0,v]$, but not to $\mathbb C \setminus [0,u]$ for any $u<v$. 
By \eqref{eq:support}, this completes the proof.
\end{proof}

\begin{remark}
The result in \eqref{eq:p-norm-A} had been shown for real symmetric matrices
by a different proof in \cite{hiwatashi1999free}. In addition, Theorem \ref{theorem:bounding-condition} is included in \cite{sato_thesis}.
\end{remark}

Before concluding this section, let us introduce a lemma for later use. 
\begin{lemma}\label{lemma:4th-distribution}
For a system of semi-circular or circular elements $\{x_i: i \in I\}$, we have
\eq{
\varphi(x_i x_j^* x_u x_v^*) = 
\begin{cases}
2& i=j=u=v\\
1& i=j \not=u=v \text{ or }i=v \not = j=u  \\
0& \text{otherwise}
\end{cases}
}
\end{lemma}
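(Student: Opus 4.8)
The plan is to compute the mixed $*$-moment $\varphi(x_i x_j^* x_u x_v^*)$ directly from the moment-cumulant formula, exploiting the fact that for a semi-circular or circular system only the second-order cumulants survive and these are given by \eqref{eq:2ndm2} (in the circular case) and \eqref{eq:2ndm} (in the semi-circular case). Writing $\varphi(x_i x_j^* x_u x_v^*) = \sum_{\pi \in NC(4)} \kappa_\pi[x_i, x_j^*, x_u, x_v^*]$ and discarding all $\pi$ with a block of size $\ne 2$, only the pair partitions $NC_2(4)$ remain, namely $\{\{1,2\},\{3,4\}\}$ and $\{\{1,4\},\{2,3\}\}$ (the partition $\{\{1,3\},\{2,4\}\}$ being crossing). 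So the moment equals $\kappa_2[x_i,x_j^*]\,\kappa_2[x_u,x_v^*] + \kappa_2[x_i,x_v^*]\,\kappa_2[x_j^*,x_u]$.

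In the circular case, \eqref{eq:2ndm2} tells us that each such factor equals a Kronecker delta: the first term contributes $\delta_{i,j}\delta_{u,v}$ and the second $\delta_{i,v}\delta_{j,u}$. Adding them yields exactly the claimed case distinction — the value $2$ arises precisely when both deltas fire simultaneously, i.e.\ $i=j=u=v$; the value $1$ when exactly one of the two index-matching patterns holds; and $0$ otherwise. For the semi-circular case one first notes $x_i^* = x_i$, so the moment is $\varphi(x_i x_j x_u x_v)$, and then \eqref{eq:2ndm} gives $\kappa_2[x_i,x_j] = \delta_{i,j}$ for the surviving pair partitions $\{\{1,2\},\{3,4\}\}$ and $\{\{1,4\},\{2,3\}\}$, producing the same sum $\delta_{i,j}\delta_{u,v} + \delta_{i,v}\delta_{j,u}$. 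A brief remark should handle the fact that, although the crossing partition $\{\{1,3\},\{2,4\}\}$ is excluded from $NC(4)$ by definition, in the circular case it would in any event vanish because $\kappa_2[x_i, x_u]$ and $\kappa_2[x_j^*, x_v^*]$ are zero by \eqref{eq:2ndm2}; mentioning this keeps the two cases parallel.

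Honestly, there is no real obstacle here — the statement is a direct specialization of the general structure already laid out in the proof of Proposition \ref{proposition:f}. The only point requiring a sentence of care is bookkeeping the index patterns: one must check that the four non-trivial cases (all four equal; $i=j\ne u=v$; $i=v\ne j=u$; everything else) are mutually exclusive and exhaustive, and that $\delta_{i,j}\delta_{u,v} + \delta_{i,v}\delta_{j,u}$ evaluates to $2,1,1,0$ respectively on them. In particular one should observe that $i=j\ne u=v$ and $i=v\ne j=u$ cannot hold at the same time (that would force all indices equal), so no double-counting occurs in the "$=1$" regime. I would present the computation once for the circular system and then note that the semi-circular computation is identical after setting $x_i^*=x_i$.
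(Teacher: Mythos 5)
Your proposal is correct and follows essentially the same route as the paper: expand the mixed moment over $NC(4)$, discard everything but the two non-crossing pair partitions $\{\{1,2\},\{3,4\}\}$ and $\{\{1,4\},\{2,3\}\}$ using \eqref{eq:2ndm} and \eqref{eq:2ndm2}, and read off $\delta_{i,j}\delta_{u,v}+\delta_{i,v}\delta_{j,u}$. The extra bookkeeping you include (mutual exclusivity of the index patterns, the vanishing of the crossing partition) is fine but not needed beyond what the paper already records.
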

\begin{proof}
By using \eqref{eq:2ndm} and \eqref{eq:2ndm2}, non-vanishing terms correspond to the following two non-crossing set partitions:
$\{\{1,2\},\{3,4\}\}$ and $\{\{1,4\},\{2,3\}\}$. Therefore, we have
\eq{
\varphi(x_i x_j^* x_u x_v^*) = \kappa_2[x_i, x_j^*] \kappa_2[x_u, x_v^*] + \kappa_2[x_v^*, x_i] \kappa_2[x_j^*, x_u]
= \delta_{i,j}\delta_{u,v} + \delta_{i,v} \delta_{j,u} \ .
}
This completes the proof. 
\end{proof}

\subsection{Haagerup inequality for semi-circular and circular systems}\label{sec:haagerup}

The quantity $f(\cdot)$ in \eqref{eq:f_def} has an explicit form only in exceptional cases, e.g.\ $A=I_k$ as in Theorem \ref{theorem:clt}. For a general matrix $A$, a Haagerup-type inequality provides a bound for $f(\cdot)$ from above, which is stated in Theorem \ref{thm:Haagerup} below. What we need is already stated in \cite{bozejko1991q} but 
 we will give a self-contained proof for a more precise constant; see Remark \ref{remark:const}. 
Also, Theorem \ref{thm:Haagerup} shares the same spirit as \cite[Theorem 5.3.4]{biane1998stochastic}, in which diagonal elements are all zero.

Since the expression \eqref{eq:f_estimate} for $f(\cdot)$ does not depend on a choice of a $C^*$-probability space $(\mathcal A, \varphi)$, we may and do take a specific model in which semi-circular elements have useful decompositions. Let $F(H)$ be the full Fock space over a Hilbert space $H$ of dimension $k$ with a fixed orthonormal basis $\{e_i\}_{i=1}^k$, 
\begin{equation}
F(H) = \C \Omega \oplus \bigoplus_{n=1}^\infty H^{\otimes n},
\end{equation}
where $\Omega$ is a unit vector.  Let $\varphi_\Omega$ be the vacuum state on $B(F(H))$, that is, $\varphi_\Omega(a) = \langle \Omega| a | \Omega \rangle $ for $a \in B(F(H))$. Let $\ell^*(f)$ be the left creation operator for $f \in H$, namely, it acts on $F(H)$ adding the new tensor component $f$ from the left, and let $\ell(f)$ be its adjoint operator. Define $\ell_i := \ell(e_i)$ and $\ell_i^*:= \ell^\ast(e_i)$ for $i \in [k]$. 
 Then the family $\{s_i\}_{i=1}^k$ defined by 
\begin{equation}
s_i = \ell_i + \ell_i^* 
\end{equation}
is known to be a semi-circular system. 
The state $\varphi_\Omega$ is not faithful on $B(F(H))$, but is faithful on $\mathcal A_k := C^*\{s_i: 1 \le i \le k\}$, the $C^*$-algebra generated by $\{s_i\}_{i=1}^k$. For further details, see \cite[Lecture 7]{nica2006lectures}. 

Now, we state the main estimate in this subsection. 
\begin{theorem} \label{thm:Haagerup} Let $k \in \N$ and $A \in M_k(\C)$.  
The following bound holds. 
\eq{ \label{eq:haagerup1}
f(A) \leq 3\|A\|_2 + |\Tr(A)| \ .
}
In particular, if $\trace[A]=0$ then we have
$f(A) \leq 3 \| A \|_2.$
\end{theorem}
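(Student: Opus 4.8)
The plan is to exploit the concrete Fock-space model fixed just above, in which $s_i=\ell_i+\ell_i^*$ and $\varphi_\Omega$ is faithful on $\mathcal A_k$, so that $f(A)$ is literally the operator norm on $F(H)$ of $T_A:=\sum_{i,j=1}^k a_{i,j}\,s_is_j$. First I would expand $s_is_j=\ell_i\ell_j+\ell_i\ell_j^*+\ell_i^*\ell_j+\ell_i^*\ell_j^*$ and use the canonical relation $\ell_i\ell_j^*=\delta_{i,j}I$ (only the annihilation-before-creation products collapse to scalars) to obtain the decomposition
\[
T_A=\Tr(A)\,I+L+M+L',\qquad L=\sum_{i,j}a_{i,j}\ell_i\ell_j,\quad M=\sum_{i,j}a_{i,j}\ell_i^*\ell_j,\quad L'=\sum_{i,j}a_{i,j}\ell_i^*\ell_j^*.
\]
By the triangle inequality $f(A)\le|\Tr(A)|+\|L\|+\|M\|+\|L'\|$, and it remains to estimate the three operator norms on the right.

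For the double-creation term, I would associate with $A$ the vector $v_A:=\sum_{i,j}a_{i,j}\,e_i\otimes e_j\in H^{\otimes 2}$, for which $\|v_A\|=\|A\|_2$. Then $L'\xi=v_A\otimes\xi$ for all $\xi\in F(H)$; since $L'$ maps the mutually orthogonal subspaces $H^{\otimes n}$ into the mutually orthogonal subspaces $H^{\otimes(n+2)}$ and scales the norm on each by exactly $\|v_A\|$, one gets $\|L'\|=\|A\|_2$ \emph{exactly}. Passing to adjoints (and reindexing) shows that $L^*$ is the double-creation operator attached to $A^*$, so $\|L\|=\|L^*\|=\|A^*\|_2=\|A\|_2$. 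For the mixed term, a short computation on basis vectors gives $M\Omega=0$ and, for $n\ge1$, that $M$ preserves $H^{\otimes n}$ and acts there as $A$ on the first tensor leg and the identity on the rest; hence $\|M\|=\|A\|\le\|A\|_2$. Combining the three estimates yields $f(A)\le|\Tr(A)|+3\|A\|_2$, and when $\Tr(A)=0$ the scalar term drops out, giving $f(A)\le3\|A\|_2$. Retaining $\|M\|=\|A\|$ rather than relaxing it to $\|A\|_2$ produces the sharper constant recorded in Remark~\ref{remark:const}.

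All of the above is elementary; there is no serious obstacle, and in particular no combinatorial summation over non-crossing pair partitions is needed here because the quadratic degree is low enough for the Fock-space picture to give everything directly. The only points that genuinely require care are (i) getting the canonical commutation right so that the scalar part is precisely $\Tr(A)$ (and not, say, $\overline{\Tr(A)}$ or $\Tr(A^*)$), and (ii) reading off correctly the block structure on the Fock space — $L'$ is a two-level shift that is isometric up to the factor $\|v_A\|$ on each level, while $M$ is level-preserving and equals $A\otimes\mathrm{id}$ on the first leg. Once these bookkeeping points are settled the norm identities for $L,M,L'$ are exact, and the claimed inequality follows immediately.
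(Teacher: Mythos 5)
Your proposal is correct and follows essentially the same route as the paper: the same Fock-space decomposition of $\sum a_{i,j}s_is_j$ into the scalar part $\Tr(A)\,\mathbf 1$ (from $\ell_i\ell_j^*=\delta_{i,j}\mathbf 1$) plus the double-annihilation, mixed, and double-creation terms, followed by the triangle inequality and the norm identities $\|L\|=\|L'\|=\|A\|_2$, $\|M\|=\|A\|_\infty\le\|A\|_2$. Your direct level-by-level computations of these norms are just a hands-on rendering of the paper's Lemma on the norms of $\sum a_{i,j}\ell_i\ell_j$ and $\sum a_{i,j}\ell_i^*\ell_j$, so there is no substantive difference.
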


\begin{lemma}\label{lem:Haagerup} For any $k \in \N$ and $A =(a_{i,j}) \in M_k(\C)$ the following equalities hold.  
\[
\left\|\sum_{i,j=1}^k a_{i,j} \ell_i \ell_j\right\| = \|A\|_2  \qquad\text{and}\qquad \left\|\sum_{i,j=1}^k a_{i,j} \ell_i^* \ell_j\right\| = \|A\|_\infty. 
 \]
\end{lemma}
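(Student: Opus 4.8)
The plan is to read off both norm identities from the Toeplitz--Cuntz relation $\ell_i \ell_j^* = \delta_{i,j}\, I$ on $F(H)$, which holds because $\ell_i = \ell(e_i)$ is the adjoint of the creation operator $\ell_i^* = \ell^*(e_i)$ and $\{e_i\}_{i=1}^k$ is orthonormal. I would record at the start that the \emph{opposite} product $\ell_i^*\ell_j$ does not collapse to a scalar, which is why the two halves of the lemma are handled by slightly different computations.

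For the first identity, set $T = \sum_{i,j=1}^k a_{i,j}\,\ell_i\ell_j$, so that $T^* = \sum_{i,j=1}^k \overline{a_{i,j}}\,\ell_j^*\ell_i^*$, and compute $TT^*$ (not $T^*T$, which is where the convention matters). Expanding the double sum and applying $\ell_i\ell_j^* = \delta_{i,j}I$ twice — first to the inner factor $\ell_j\ell_{j'}^*$, then to the outer factor $\ell_i\ell_{i'}^*$ — makes the four index sums collapse to $TT^* = \bigl(\sum_{i,j}|a_{i,j}|^2\bigr) I = \|A\|_2^2\, I$. Then the $C^*$-identity in the form $\|T\|^2 = \|TT^*\|$ gives $\|T\| = \|A\|_2$.

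For the second identity, set $S = \sum_{i,j=1}^k a_{i,j}\,\ell_i^*\ell_j$ and use the length grading $F(H) = \C\Omega \oplus \bigoplus_{n\ge 1} H^{\otimes n}$. Since $\ell_j$ drops the tensor length by one and $\ell_i^*$ raises it by one, $S$ preserves each component $H^{\otimes n}$ and kills $\Omega$. Writing $H^{\otimes n} \cong H \otimes H^{\otimes(n-1)}$ for $n \ge 1$, a one-line check shows that $\ell_i^*\ell_j$ acts there as $(|e_i\rangle\langle e_j|)\otimes I_{H^{\otimes(n-1)}}$, hence $S$ acts as $A \otimes I_{H^{\otimes(n-1)}}$ on the first leg. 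Therefore $\|S\| = \sup_{n\ge 1}\bigl\|A \otimes I_{H^{\otimes(n-1)}}\bigr\| = \|A\|_\infty$, using $\|A \otimes I_m\| = \|A\|_\infty$ for every $m \ge 0$.

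Both computations are routine; I do not expect a genuine obstacle. The only places calling for care are keeping the creation/annihilation conventions straight (so that one works with $TT^*$ rather than $T^*T$ in the first part) and, in the second part, being explicit enough about the identification $H^{\otimes n} \cong H \otimes H^{\otimes(n-1)}$ to recognize $\ell_i^*\ell_j$ as a rank-one operator on the first tensor leg tensored with the identity.
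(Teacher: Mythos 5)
Your proof is correct. For the first identity you do exactly what the paper does: expand $TT^*$ and collapse it with the relation $\ell_i\ell_j^*=\delta_{i,j}\mathbf 1$ (your reading of the paper's convention, with $\ell_i^*$ the creation operator and $\ell_i$ its adjoint, is the right one), then invoke the $C^*$-identity. For the second identity your route differs from the paper's: the paper observes that $\alpha(A)=\sum_{i,j}a_{i,j}\ell_i^*\ell_j$ defines an injective $*$-homomorphism $M_k(\C)\to B(F(H))$ (multiplicativity again coming from $\ell_j\ell_p^*=\delta_{j,p}$) and then appeals to the general fact that an injective $*$-homomorphism of $C^*$-algebras is isometric, whereas you compute directly that $S$ is block-diagonal for the length grading of the Fock space, annihilates $\Omega$, and acts on $H^{\otimes n}\cong H\otimes H^{\otimes(n-1)}$ as $A\otimes I$, so $\|S\|=\|A\|_\infty$. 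The paper's argument is shorter but leaves the verification of multiplicativity and injectivity of $\alpha$, plus the abstract isometry theorem, to the reader; yours is more elementary and self-contained, and it exhibits the operator concretely enough that not only the norm but the whole spectral picture of $S$ is visible. Either is a complete proof of the lemma.
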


\begin{proof} Those inequalities are found in \cite[Theorem (I), p.\ 320]{bozejko1991q} and \cite[Lemma 1, p.\ 318]{bozejko1991q}.  For the sake of convenience, they are reproved below. 

For the first formula, using the $C^*$-condition and the property $\ell_i \ell_j^* = \delta_{i,j}\mathbf1_{\mathcal A_k}$ one has 
\begin{align*}
\left\|\sum_{i,j=1}^k a_{i,j} \ell_i \ell_j\right\|^2 
&= \left\|\left(\sum_{i,j=1}^k a_{i,j} \ell_i \ell_j \right)\left(\sum_{p,q=1}^k a_{p,q} \ell_p \ell_q \right)^*\right\| \\
&= \left\|\sum_{i,j,p,q=1}^k a_{i,j}\overline{a_{p,q}} \ell_i \ell_j \ell_q^* \ell_p^*\right\|\\
&=   \left\|\sum_{i,j=1}^k a_{i,j}\overline{a_{i,j}} \mathbf1_{\mathcal A_k}\right\| = \|A\|_2^2. 
\end{align*}

For the second formula, define a linear map $\alpha\colon M_n(\C) \to B (F(H))$ by 
\begin{equation*}
\alpha(A) = \sum_{i,j=1}^k a_{i,j} \ell_i^* \ell_j. 
\end{equation*}
This map is an injective $\ast$-homomorphism between $C^*$-algebras, and hence is an isometry. 
\end{proof}

\begin{proof}[Proof of Theorem \ref{thm:Haagerup}]
Let $a_{i,j}$ be the $(i,j)$-entry of $A$. By the triangle inequality, Lemma \ref{lem:Haagerup},  $\ell_i \ell_j^* = \delta_{i,j}\mathbf1_{\mathcal A_k}$ and the well known inequality $\|A\|_\infty \le \|A\|_2$ we obtain
\begin{align*}
\left\|\sum_{i,j=1}^k a_{i,j} s_i s_j\right\| 
&= \left\|\sum_{i,j=1}^k a_{i,j} (\ell_i + \ell_i^*)(\ell_j + \ell_j^*)\right\| \\
& \le \left\|\sum_{i,j=1}^k a_{i,j} \ell_i\ell_j\right\| + \left\|\sum_{i,j=1}^k a_{i,j} \ell_i\ell_j^*\right\| + \left\|\sum_{i,j=1}^k a_{i,j} \ell_i^*\ell_j\right\| + \left\|\sum_{i,j=1}^k a_{i,j} \ell_i^*\ell_j^*\right\|  \\
&= \|A\|_2  + |\Tr(A)| + \|A\|_\infty + \|A\|_2\\
&\le 3\|A\|_2 + |\Tr(A)|. 
\end{align*}
\end{proof}

\begin{remark}\label{remark:const}
The Cauchy-Schwarz inequality $|\Tr(A)| \le \sqrt{k} \|A\|_2$ and \eqref{eq:haagerup1} give
\eq{\label{eq:inequality_f}
f(A) \leq (3+\sqrt{k})\|A\|_2 \ . 
}
The constant on the RHS of \eqref{eq:inequality_f} contains $\sqrt{k}$, which depends on the size of the matrix $A$, while \cite[Theorem (III), p.\ 320]{bozejko1991q} gives a similar inequality whose constant does not depend on $k$. We think that $\sqrt{k}$ is necessary, since for $A=I_k$ the operator $s_1^2 + s_2^2 + \cdots + s_k^2$ follows the Marchenko-Pastur law with parameter $k$, and hence its norm is $(1+\sqrt{k})^2$,  and $\|I_k\|_2 = \sqrt{k}$.
\end{remark}

\section{Violation of multiplicativity}\label{sec:vio-cp}

\subsection{Nearly trace-preserving} 

The following theorem states that, asymptotically as $n\to\infty$, the random CP maps in \eqref{eq:our-map} are nearly trace-preserving for $k \gg 1$.

\begin{theorem}\label{theorem:almost-trace-preserving2}
For every $k \in \N$ the following statements are almost surely true as $n \to \infty$.
\eq{\left(1- \frac{1}{\sqrt{k}} \right)^2 
\leq
\liminf_{n \to \infty} \min_{\rho \in D_n} \trace \left[ \Phi_n(\rho) \right]  
\leq
\limsup_{n \to \infty} \max_{\rho \in D_n} \trace \left[ \Phi_n(\rho) \right]  
\leq 
\left(1+ \frac{1}{\sqrt{k}} \right)^2  
\ .
}
The same bounds hold for $\Phi_n^c$ (see \eqref{eq:complement} for the definition).
\end{theorem}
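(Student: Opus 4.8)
The plan is to express the trace $\trace[\Phi_n(\rho)]$ as a quadratic form in the input state and reduce the optimization over $\rho$ to an eigenvalue problem for an explicitly identifiable random matrix. Writing $\rho = \sum_\ell q_\ell |\xi_\ell\rangle\langle\xi_\ell|$ with $q_\ell \geq 0$, $\sum_\ell q_\ell = 1$, we have
\[
\trace[\Phi_n(\rho)] = \frac{1}{k}\sum_{i=1}^k \trace[X_i \rho X_i^*] = \trace\!\left[\rho \cdot \frac{1}{k}\sum_{i=1}^k X_i^* X_i\right] = \trace[\rho \, W_{k,n}/k],
\]
where $W_{k,n} = \sum_{i=1}^k X_i^* X_i$ is exactly the matrix in Theorem \ref{theorem:clt}. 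Since $\rho$ ranges over all density matrices, $\min_\rho \trace[\rho\, W_{k,n}/k] = \mu_{\min}(W_{k,n})/k$ and $\max_\rho \trace[\rho\, W_{k,n}/k] = \mu_{\max}(W_{k,n})/k$. Thus the two-sided inequality is equivalent to the almost sure asymptotics of $\mu_{\min}(W_{k,n})/k$ and $\mu_{\max}(W_{k,n})/k$.

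The second step is to invoke Theorem \ref{theorem:clt} directly: it gives, almost surely, $\lim_n \mu_{\min}(W_{k,n}) = (\sqrt{k}-1)^2$ and $\lim_n \mu_{\max}(W_{k,n}) = (\sqrt{k}+1)^2$. Dividing by $k$ yields $(\sqrt{k}-1)^2/k = (1 - 1/\sqrt{k})^2$ and $(\sqrt{k}+1)^2/k = (1 + 1/\sqrt{k})^2$, which are precisely the outer bounds in the statement (and in fact the $\liminf$ and $\limsup$ coincide and equal these values). One minor bookkeeping point: Theorem \ref{theorem:clt} provides convergence of the extreme eigenvalues, so the $\liminf$/$\limsup$ in the statement are genuine limits; I would phrase the conclusion accordingly or simply note that the stated weaker form follows a fortiori.

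For the complementary map $\Phi_n^c$ (defined later in \eqref{eq:complement}), the plan is to repeat the same computation. The complementary channel is built from the same Gaussian data, and $\trace[\Phi_n^c(\rho)]$ will again be a quadratic form $\trace[\rho\, \widetilde{W}_{k,n}]$ for a matrix $\widetilde{W}_{k,n}$ that is either literally $W_{k,n}/k$ or a matrix with the same asymptotic spectral edges by the strong convergence in Theorem \ref{thm:strong} (the complement of a random-external-field CP map has an output trace functional governed by the same Gram-type matrix). Once one checks that $\widetilde{W}_{k,n}$ has the same limiting extreme eigenvalues — which again reduces to a Marchenko–Pastur computation via the $R$-transform exactly as in the proof of Theorem \ref{theorem:clt} — the identical bounds follow. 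The only genuine obstacle is the first reduction: one must be careful that the optimum over $\rho \in D_n$ of a linear functional on density matrices is attained at a rank-one projection onto an extreme eigenvector (a standard fact, already noted in the introduction for norm/entropy convexity reasons), and that no subtlety arises from the supremum/infimum being over a compact set so the extremal values are genuinely the smallest and largest eigenvalues of the relevant Hermitian matrix. Everything else is a direct appeal to results already established in the excerpt.
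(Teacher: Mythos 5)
Your proposal is correct and takes essentially the same route as the paper: rewrite $\trace[\Phi_n(\rho)]=\trace[\rho\, W_{k,n}]/k$, identify the optima over $D_n$ with $\mu_{\min}(W_{k,n})/k$ and $\mu_{\max}(W_{k,n})/k$, and apply Theorem \ref{theorem:clt}. For the complementary map you can drop the hedge: $\trace\left[\Phi_n^c(\rho)\right]=\frac{1}{k}\sum_{i=1}^k \trace\left[X_i\rho X_i^*\right]=\trace\left[\Phi_n(\rho)\right]$ exactly, so no separate spectral computation is needed, which is precisely the paper's remark.
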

\begin{proof}
Notice that by linearity the optimum is achieved when $\rho$ is a rank-one projection, which we denote by $|x \rangle \langle x|$. Then, we calculate:
\eq{
\trace[\Phi_n(|x \rangle \langle x|)] =  \frac 1 k  \langle x| \left [ \sum_{i=1}^k X_iX_i^*  \right] |x\rangle
}
and this means that 
\eq{
\min_{\rho \in D_n} \trace \left[ \Phi_n(\rho) \right]  = \mu_{\min}(W_{k,n})/k
\quad\text{and}\quad
\max_{\rho \in D_n} \trace \left[ \Phi_n(\rho) \right]  = \mu_{\max}(W_{k,n})/k \ ,
}
where $\mu_{\min}(W_{k,n})$ and $\mu_{\max}(W_{k,n})$ are the smallest and the largest eigenvalues of $W_{k,n}$ in \eqref{eq:W_sum}.
Hence, Theorem \ref{theorem:clt} proves our claim.
Also, it is clear that $\trace \left[ \Phi_n(\rho) \right] = \trace \left[ \Phi_n^c(\rho) \right]$ from the definition of complementary channel in \eqref{eq:complement}.
\end{proof}

\subsection{Convergence of maximum output norms}
In this section, we show that the MO$p$N of the random CP maps in \eqref{eq:our-map}
converge almost surely to some value formulated by the function in \eqref{eq:f_estimate}.
Similar free-probability methods had been adopted in \cite{belinschi2016almost,  cfn3,collins2015estimates}
to show the additivity/multiplicativity violation. 
Below, most of claims on $\Phi_n$ and $\Phi_n^c$ are interchangeable because of Proposition \ref{proposition:complement}, and this fact will be repeatedly used without mentioning.
\begin{definition}
We define the following compact set: for $q>1$
\eq{\label{eq]compact_domain}
\hat D_{k,q} = \{A \in M_k (\mathbb C): A \geq 0, \, \| A \|_q = 1 \} \ .
}
Also, for a matrix $A \in M_k(\C)$, a sequence of $k$ independent $n \times n$ GUEs $\{(S_1,\ldots,S_k)\}_{n=1}^\infty$ and GEs  $\{(\gini_1,\ldots,\gini_k)\}_{n=1}^\infty$ we define 
\eq{\label{eq:f_n}
f_n^S(A) =  \left \| \sum_{i,j =1}^k  a_{i,j} \, S_i S_j \right\| 
\eqtext{and} 
f_n^\gini(A) =  \left \| \sum_{i,j =1}^k  a_{i,j} \, \gini_i^* \gini_j \right\| 
\ ,
}
where $a_{i,j} = (A)_{i,j}$ are the $(i,j)$-elements of $A$.
When it is not necessary to distinguish $f_n^S(\cdot)$ and $f_n^\gini(\cdot)$, we denote both by $f_n(\cdot)$.

Note that $\{(S_1,\ldots,S_k)\}_{n=1}^\infty$ or $\{(\gini_1,\ldots,\gini_k)\}_{n=1}^\infty$ defines a sequence of random CP maps $\{\Phi_n\}_{n=1}^\infty$ via \eqref{eq:our-map}, and a sequence of random variables
$f_n(A)$.
\end{definition}

Before stating lemmas and theorems we introduce a key idea in this subsection:
\eq{\label{eq:key}
\max_{|x \rangle \langle x| \in D_n}\trace \left[ \Phi_n^c (|x \rangle \langle x| ) A \right]
&=\max_{|x \rangle \langle x| \in D_n} \frac 1k  \sum_{i,j} \trace \left[X_i |x \rangle \langle x | X_j^* \right]  \trace \left[ |i \rangle \langle j| A \right] \\
&= \frac 1k  \max_{|x \rangle \langle x| \in D_n}  \langle x | \left(  \sum_{i,j} a_{j,i} X_j^* X_i  \right)  |x \rangle 
= \frac 1k \, f_n(A) \ ,
} 
which will be used later, for example:
\begin{lemma}\label{lemma:uniform}
Fix $1<q$, and almost surely, $f_n$ in \eqref{eq:f_n} converges uniformly on $\hat D_{k,q}$ to $f$ in \eqref{eq:f_def}:
\eq{
\lim_{n \to \infty} f_n (A) = f(A) \quad\text{for}\quad A \in \hat D_{k,q} \ .
}
\end{lemma}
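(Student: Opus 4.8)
The plan is to prove uniform convergence by combining (a) pointwise almost sure convergence with (b) equicontinuity of the family $\{f_n\}$ together with continuity of the limit $f$, on the compact set $\hat D_{k,q}$; a standard Arzel\`a--Ascoli type argument then upgrades pointwise to uniform convergence. The key observation that makes everything work is that each $f_n$, and $f$ itself, is a norm-like function that is Lipschitz in $A$ with a constant uniform in $n$. Concretely, for $A,B\in M_k(\mathbb C)$, by the triangle inequality for the operator norm,
\eq{
|f_n^\gini(A) - f_n^\gini(B)| \leq \left\| \sum_{i,j=1}^k (a_{i,j}-b_{i,j})\, \gini_i^*\gini_j \right\| \leq \sum_{i,j=1}^k |a_{i,j}-b_{i,j}|\, \|\gini_i\|\,\|\gini_j\|\,,
}
and similarly for $f_n^S$ with $\|S_i\|\|S_j\|$. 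So the first step is to control these matrix norms uniformly: by Theorem \ref{thm:strong} applied to the self-adjoint polynomials $S_i^2$ (resp.\ $\gini_i^*\gini_i$), almost surely $\|S_i\| \to 2$ and $\|\gini_i\| \to \sqrt{\mu_{\max}(c_i^*c_i)}$, which is finite, for each $i\in[k]$; hence almost surely there is a (random) constant $M$ with $\sup_n \max_i \|X_i\| \leq M$. On that almost sure event, $|f_n(A)-f_n(B)| \leq k^2 M^2 \|A-B\|_{\max} \leq C_k \|A-B\|_q$ for all $n$ and all $A,B$, giving uniform equicontinuity of $\{f_n\}$ on $\hat D_{k,q}$. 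An identical bound with the semicircular/circular elements in place of the matrices (using $\|s_i\|=2$, $\|\ce_i\|<\infty$) shows $f$ is Lipschitz, hence continuous, on $\hat D_{k,q}$.

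The second step is pointwise almost sure convergence: for a fixed $A\in \hat D_{k,q}$ (indeed for any fixed self-adjoint $A$), the polynomial $P = \sum_{i,j} a_{i,j} z_i z_j$ (resp.\ $\sum_{i,j} a_{i,j} z_i^* z_j$) is self-adjoint when $A$ is, so Theorem \ref{thm:strong}(2) gives that the largest eigenvalue of $P(X_1,\dots,X_k)$ converges almost surely to $\sup\mathrm{Spec}(P(x_1,\dots,x_k))$; since $A \geq 0$ makes these operators positive, their norms equal their largest eigenvalues, so $f_n(A) \to f(A)$ almost surely, where Proposition \ref{proposition:f} identifies the semicircular and circular limits as the common value $f(A)$.

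The third step patches the two together. Since $\hat D_{k,q}$ is a compact metric space, fix a countable dense subset $\{A_m\}_{m\geq 1}$; by the second step and a countable union of null sets, on an almost sure event we have $f_n(A_m) \to f(A_m)$ for every $m$ simultaneously. Given $\varepsilon>0$, use the uniform equicontinuity from step one (with Lipschitz constant $C_k$, also on an almost sure event) and continuity of $f$ to pick a finite $\delta$-net $\{A_{m_1},\dots,A_{m_L}\}$ so that any $A\in\hat D_{k,q}$ is within $\varepsilon/(3C_k)$ of some $A_{m_\ell}$; then for $n$ large enough that $|f_n(A_{m_\ell})-f(A_{m_\ell})|<\varepsilon/3$ for all $\ell$, the triangle inequality gives $|f_n(A)-f(A)| \leq |f_n(A)-f_n(A_{m_\ell})| + |f_n(A_{m_\ell})-f(A_{m_\ell})| + |f(A_{m_\ell})-f(A)| < \varepsilon$ uniformly in $A$. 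Intersecting the finitely/countably many almost sure events involved yields the claim.

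The main obstacle is the uniform-in-$n$ control of the operator norms $\|X_i\|$, i.e.\ ensuring the Lipschitz constant $C_k$ can be chosen independent of $n$ on a single almost sure event. This is handled cleanly by strong convergence (Theorem \ref{thm:strong}) applied separately to each of the finitely many polynomials $z_i^* z_i$: each gives an almost sure event on which $\|X_i\|$ converges (hence is bounded over $n$), and a finite intersection over $i\in[k]$ keeps it almost sure. Everything else is the routine ``pointwise + equicontinuous on a compact set $\Rightarrow$ uniform'' mechanism, with the only care being to collect all the almost sure events (one per net point, one per $\|X_i\|$ bound) into one, which is fine as a countable intersection.
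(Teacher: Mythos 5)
Your proposal is correct and follows essentially the same route as the paper: pointwise almost-sure convergence on a countable dense subset via strong convergence, a Lipschitz/equicontinuity bound for $f_n$ and $f$ obtained by controlling $\|X_i\|$ (respectively $\|s_i\|,\|\ce_i\|$) uniformly in $n$ through strong convergence, and then a finite $\epsilon$-net on the compact set $\hat D_{k,q}$ with the three-term triangle inequality. The only cosmetic difference is that you phrase the patching step as an Arzel\`a--Ascoli-type argument with a single uniform bound $M$ on $\sup_n\|X_i\|$, while the paper uses $\limsup_n\|X_i\|\le 2$ and the explicit constants $9k\epsilon+\epsilon+4k\epsilon$; the substance is identical.
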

\begin{proof}
First, we form a countable dense subset of $\hat D_{k,q}$, which we denote by $N_{k,q}$. Then, almost surely, the sequence $\{(X_1,\ldots,X_k)\}_{n=1}^\infty$ have the following properties via strong convergence.
\eq{
\forall A \in N_{k,q}, \quad \lim_{n \to \infty} f_n(A) = f(A), \quad
\text{and} \quad
\forall i \in [k], \quad  \limsup_{n \to \infty} \| X_i \|_\infty \leq 2 \ .
}
Now, fix $\epsilon >0$ 
and, by compactness, there is a finite subset  $N_{k,q,\epsilon} \subseteq N_{k,q}$ to form an open cover: 
\eq{
\hat D_{k,q} \subseteq \bigcup_{A \in N_{k,q,\epsilon}} B(A, \epsilon) \ . 
}
Here, $B(A, \epsilon)$ is an open ball with center $A$ and the radius $\epsilon >0$ with respect to the Euclidean distance. 
Then, $\exists N_0 \in \mathbb N$, $\forall n \geq N_0$,
\eq{
\forall A^\prime \in N_{k,q,\epsilon}, \quad |f_n(A^\prime) - f(A^\prime)| < \epsilon, \qquad
\text{and} \quad
\forall i \in [k],\quad \| X_i \|_\infty \leq 3 \ .
}
In this case, for all $A \in \hat D_{k,p}$ and $n \geq N_0$ we can choose a proper corresponding $A^\prime \in N_{k,q,\epsilon}$ so that
\eq{
\left|f_n(A) - f(A) \right|
\leq
\left|f_n(A) - f_n(A^\prime) \right|+
\left|f_n(A^\prime) - f(A^\prime) \right|+
\left|f(A^\prime) - f(A) \right| 
\leq 9 k \epsilon + \epsilon + 4 k \epsilon \ .
}
Since the RHS does not depend on the choice of $A \in \hat D_{k,p}$,
the claim has been proved. 
\end{proof}

Now, we discuss the convergence of MO$p$N for $p>1$:
\begin{theorem}[Convergence of MO$p$N] \label{theorem:p-norm-convergence}
For all $p>1$, almost surely we have
\eq{\label{eq:max-f}
\lim_{n \to \infty } \| \Phi_n\|_{p} = \lim_{n \to \infty } \| \Phi_n^c\|_{p} =  \frac 1k  \max_{A \in \hat D_{k,q}}   f(A) \ ,
}
where  $q>1$ is such that $1/p +1/q =1$, and
the above function $f(\cdot)$ is defined in \eqref{eq:f_def}.
\end{theorem}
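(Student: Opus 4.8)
The plan is to use duality between the Schatten $p$-norm and $q$-norm together with the key identity \eqref{eq:key} to reduce the statement to Lemma \ref{lemma:uniform}. First I would recall that for a positive semi-definite matrix $M$ (here $M = \Phi_n^c(|x\rangle\langle x|)$, which is positive because $\Phi_n^c$ is completely positive) one has the variational formula $\|M\|_p = \max\{\trace[MA] : A \geq 0, \|A\|_q = 1\}$, where $1/p+1/q=1$; the restriction to positive $A$ is legitimate since $M\ge 0$. Hence
\eq{
\|\Phi_n^c(|x\rangle\langle x|)\|_p = \max_{A \in \hat D_{k,q}} \trace\left[\Phi_n^c(|x\rangle\langle x|)\,A\right],
}
and therefore, taking the maximum over rank-one inputs and using that an optimal input is rank-one by norm convexity,
\eq{
\|\Phi_n^c\|_p = \max_{|x\rangle\langle x| \in D_n}\ \max_{A \in \hat D_{k,q}} \trace\left[\Phi_n^c(|x\rangle\langle x|)\,A\right]
= \max_{A \in \hat D_{k,q}} \frac{1}{k}\, f_n(A),
}
where in the last step I swap the two maxima (both are over compact sets, so this is harmless) and invoke \eqref{eq:key}. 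The analogous chain of equalities holds for $\Phi_n$ via Proposition \ref{proposition:complement}, i.e.\ $\|\Phi_n\|_p = \|\Phi_n^c\|_p$ since $\Phi_n$ and $\Phi_n^c$ produce the same nonzero output spectra.

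Next I would take the limit $n\to\infty$. By Lemma \ref{lemma:uniform}, almost surely $f_n \to f$ uniformly on the compact set $\hat D_{k,q}$. Uniform convergence on a compact set commutes with taking the maximum: $\max_{A \in \hat D_{k,q}} f_n(A) \to \max_{A \in \hat D_{k,q}} f(A)$. Dividing by the constant $k$ gives
\eq{
\lim_{n\to\infty}\|\Phi_n^c\|_p = \frac{1}{k}\max_{A \in \hat D_{k,q}} f(A),
}
and the same for $\|\Phi_n\|_p$, which is exactly \eqref{eq:max-f}. The fact that the maximum of $f$ over $\hat D_{k,q}$ is attained follows from compactness of $\hat D_{k,q}$ together with continuity of $f$ (which is itself a consequence of the uniform-limit description, or directly from the formula in Theorem \ref{theorem:bounding-condition}).

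The only genuinely delicate point is the justification that the supremum over input states and the supremum over $A$ may be interchanged and that \eqref{eq:key} applies at the level of the maximum rather than for a fixed $x$; but this is immediate because \eqref{eq:key} is already stated with the maximum over $|x\rangle\langle x|$ on the inside, and Fubini-type exchange of two maxima over arbitrary sets is unconditionally valid. Thus the real content has been pushed into Lemma \ref{lemma:uniform} (uniform convergence via strong convergence and a net argument) and into the duality formula for Schatten norms, and the proof is essentially a short assembly of these pieces. One small item to check is that the duality maximum can be restricted to positive $A$ — this is where positivity of the output, i.e.\ complete positivity of the channel, is used, and it is what makes the domain $\hat D_{k,q}$ (positive matrices of unit $q$-norm) the correct index set rather than the full unit ball of the $q$-norm.
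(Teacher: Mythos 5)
Your proposal is correct and follows essentially the same route as the paper: H\"older-type duality for positive semi-definite matrices (the paper's Lemma \ref{lemma:duality}) combined with the identity \eqref{eq:key} to rewrite $\|\Phi_n^c\|_p$ as $\frac{1}{k}\max_{A\in\hat D_{k,q}} f_n(A)$, then Lemma \ref{lemma:uniform} (uniform convergence of $f_n$ to $f$ on the compact set $\hat D_{k,q}$) to pass to the limit, with $\|\Phi_n\|_p=\|\Phi_n^c\|_p$ handled by Proposition \ref{proposition:complement}. The extra remarks you make about exchanging the two maxima and restricting the duality to positive $A$ are exactly the implicit steps in the paper's argument, so nothing is missing.
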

\begin{proof}
Fix $p>1$ and calculate MO$p$N, by restricting inputs to rank-one projections $|x \rangle\langle x|$ as before:
\eq{
\|\Phi_n^c \|_p 
= \max_{|x \rangle \langle x| \in  D_n} \| \Phi_n^c (|x \rangle \langle x| ) \|_p
= \max_{|x \rangle \langle x|  \in  D_n}  \max_{A \in \hat D_{k,q}} \trace [\Phi_n^c(|x \rangle \langle x|) A] 
=\frac{1}{k}\max_{A \in \hat D_{k,q}}f_n(A)}
\ .
Here, we used Lemma \ref{lemma:duality} and \eqref{eq:key}.

Then, by Lemma \ref{lemma:uniform}, we have almost surely as $n \to \infty$
\eq{
 \max_{A \in \hat D_{k,q}}f_n(A)
 \quad \longrightarrow \quad \max_{A \in \hat D_{k,q}} f(A) \ .
}
This completes the proof.
\end{proof}

The above convergence was discussed in terms of convergence of random sets in \cite{cfn3}, i.e.
the random sets of output states converge almost surely to some convex set,
but we do not need a corresponding result in this paper. 

To conclude this section, we also discuss the convergence of the maximum output $\infty$-norm:
\begin{theorem} [Convergence of MO$\infty$N] \label{theorem:largest-eigenvalue-convergence}
Almost surely, we have
\eq{
\lim_{n \to \infty} \| \Phi_n \|_\infty = \lim_{n \to \infty} \| \Phi_n^c \|_\infty = \frac 4k \ .
}
\end{theorem}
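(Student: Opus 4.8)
The plan is to trap the maximum output $\infty$-norm between the maximum output $p$-norms treated in Theorem~\ref{theorem:p-norm-convergence} and a factor $k^{1/p}$ of itself, and then let $p\to\infty$; the fact that outputs of $\Phi_n$ on pure inputs have rank at most $k$, independently of $n$, is exactly what makes this exchange of limits legitimate. So first I would record that, for every $n$ and every $p>1$,
\eq{
\|\Phi_n\|_\infty \ \leq\ \|\Phi_n\|_p \ \leq\ k^{1/p}\,\|\Phi_n\|_\infty \ ,
}
and the same two inequalities for $\Phi_n^c$. The left inequality is immediate from $\|\sigma\|_\infty\leq\|\sigma\|_p$ for positive $\sigma$. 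For the right one, recall from the introduction that an optimal input for both $\|\cdot\|_p$ and $\|\cdot\|_\infty$ may be taken to be a rank-one projection $|x\rangle\langle x|$; since $\Phi_n(|x\rangle\langle x|)=\frac1k\sum_{i=1}^k X_i|x\rangle\langle x|X_i^*$ has rank at most $k$ (and $\Phi_n^c(|x\rangle\langle x|)$ is a $k\times k$ matrix), we get $\|\Phi_n(|x\rangle\langle x|)\|_p\leq k^{1/p}\|\Phi_n(|x\rangle\langle x|)\|_\infty$, and maximizing over $x$ gives the claim.

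Next, for $p\in\{2,3,4,\dots\}$ with conjugate exponent $q=p/(p-1)\in(1,2]$, Theorem~\ref{theorem:p-norm-convergence} provides a full-measure event $\mathcal{E}_p$ on which $\|\Phi_n\|_p$ and $\|\Phi_n^c\|_p$ both converge to $\frac1k\max_{A\in\hat D_{k,q}}f(A)$. On the full-measure event $\mathcal{E}=\bigcap_p\mathcal{E}_p$ the above sandwich then yields, for every such $p$,
\eq{
\frac{1}{k^{1/p}}\cdot\frac1k\max_{A\in\hat D_{k,q}}f(A)\ \leq\ \liminf_{n\to\infty}\|\Phi_n\|_\infty\ \leq\ \limsup_{n\to\infty}\|\Phi_n\|_\infty\ \leq\ \frac1k\max_{A\in\hat D_{k,q}}f(A) \ ,
}
and the same chain with $\Phi_n$ replaced by $\Phi_n^c$.

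It then remains to show $\max_{A\in\hat D_{k,q}}f(A)\to 4$ as $q\searrow 1$. For the lower bound, a rank-one projection $A=|v\rangle\langle v|$ with $\|v\|=1$ has eigenvalue list $(1,0,\dots,0)$, hence lies in $\hat D_{k,q}$ for \emph{every} $q$, and Theorem~\ref{theorem:bounding-condition} gives $f(A)=\min_{x\in(0,1)}\big(\tfrac1x+\tfrac1{1-x}\big)=4$; thus $\max_{A\in\hat D_{k,q}}f(A)\geq4$ for all $q$. For the upper bound, if $1<q\leq2$ and $A\in\hat D_{k,q}$ then $\|A\|_2\leq\|A\|_q=1$ and $\Tr(A)=\|A\|_1\leq k^{1-1/q}$, so the Haagerup-type inequality of Theorem~\ref{thm:Haagerup} gives $f(A)\leq3\|A\|_2+|\Tr(A)|\leq3+k^{1-1/q}$, whence $\max_{A\in\hat D_{k,q}}f(A)\leq3+k^{1-1/q}$, which tends to $4$ as $q\searrow1$. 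Feeding this into the previous display, and using $k^{1/p}\to1$ as $p\to\infty$, forces both $\lim_n\|\Phi_n\|_\infty$ and $\lim_n\|\Phi_n^c\|_\infty$ to equal $\frac4k$ on $\mathcal{E}$.

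I expect the only delicate point to be the exchange of the limits $n\to\infty$ and $p\to\infty$; everything else is a routine computation, and the exchange is precisely what the dimension-free sandwich $\|\Phi_n\|_\infty\leq\|\Phi_n\|_p\leq k^{1/p}\|\Phi_n\|_\infty$ is engineered to handle. As an alternative to the first two steps, one could instead note via the duality between $\|\cdot\|_\infty$ and $\|\cdot\|_1$ (cf.\ Lemma~\ref{lemma:duality}) together with the identity \eqref{eq:key} that $\|\Phi_n^c\|_\infty=\frac1k\max_{A\in\hat D_{k,1}}f_n(A)$, where $\hat D_{k,1}=\{A\geq0:\Tr(A)=1\}$ is compact, and then run the proof of Lemma~\ref{lemma:uniform} verbatim on $\hat D_{k,1}$ (it uses only compactness of the domain) to obtain $\max_{\hat D_{k,1}}f_n\to\max_{\hat D_{k,1}}f$, the latter value being $4$ by the computation in the previous paragraph.
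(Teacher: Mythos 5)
Your argument is correct, and it takes a genuinely different route from the paper. The paper's proof works directly at $p=\infty$: by the duality between the operator norm and trace-one positive matrices (equivalently, rank-one projections), together with \eqref{eq:key}, it writes $\|\Phi_n^c\|_\infty=\frac1k\max f_n(A)$ with $A$ ranging over rank-one projections, then applies the uniform convergence of Lemma \ref{lemma:uniform} (legitimate since rank-one projections lie in $\hat D_{k,q}$ for every $q$) and evaluates $f$ on a rank-one projection via Theorem \ref{theorem:bounding-condition}, giving $\frac1k\min_{0<z<1}\left(\frac1z+\frac1{1-z}\right)=\frac4k$. You instead avoid the $q=1$ endpoint altogether: the dimension-free sandwich $\|\Phi_n\|_\infty\le\|\Phi_n\|_p\le k^{1/p}\|\Phi_n\|_\infty$ (valid because outputs on pure inputs have rank at most $k$, and likewise for the $k\times k$ outputs of $\Phi_n^c$) lets you import Theorem \ref{theorem:p-norm-convergence} for countably many $p$, and you then pin down $\max_{\hat D_{k,q}}f$ as $q\searrow1$ between the rank-one value $4$ (from Theorem \ref{theorem:bounding-condition}) and the Haagerup bound $3\|A\|_2+|\Tr(A)|\le 3+k^{1-1/q}$ from Theorem \ref{thm:Haagerup}. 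Both steps check out ($\|A\|_2\le\|A\|_q$ for $q\le2$, $\Tr A\le k^{1-1/q}$ by H\"older on the eigenvalues, and the countable intersection of full-measure events handles the almost-sure statement), so the exchange of limits you flag is indeed harmless. What each approach buys: the paper's is shorter and purely at $p=\infty$, needing only the uniform-convergence lemma and the explicit formula for $f$; yours reuses the already-established MO$p$N convergence and shows that the Haagerup inequality alone suffices to close the gap at the endpoint, at the modest cost of the limiting argument in $p$. Your closing alternative (duality with $\hat D_{k,1}$ plus rerunning Lemma \ref{lemma:uniform} on that compact set) is essentially the paper's own proof.
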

\begin{proof}
First, the convexity of norms enables inputs to be restricted to rank-one projections, which we denote by $|x \rangle\langle x|$. 
Then, similarly as in 
 \eqref{eq:key},
\eq{
\|\Phi_n^c \|_\infty
= \max_{|x \rangle \langle x| \in  D_n} \| \Phi_n^c (|x \rangle \langle x| ) \|_\infty
= \max_{|x \rangle \langle x|  \in  D_n}  \max_{A } \trace [\Phi_n^c(|x \rangle \langle x|) A] 
= \frac{1}{k}\max_{A} f_n(A) \ ,
}
where $A$s are rank-one projections.

Next, we apply Theorem \ref{lemma:uniform} and Theorem \ref{theorem:bounding-condition} to have 
\eq{
\lim_{n \to \infty} \| \Phi_n^c \|_\infty  = \frac 1k \min \left\{\frac{1}{z}+ \frac{1}{1- z}: 0 < z <1\right\} =\frac 4 k \ .
}
The calculation was completed. 
\end{proof}

\subsection{Violation of multiplicativity}\label{sec:vilation_phi}
First, we find an upper bound for the single channel:
\begin{theorem}\label{theorem:bound-p-single}
For all $p>1$ choose $k$ large enough, and then almost surely 
\eq{
\lim_{n \to \infty } \| \Phi_n\|_{p} = \lim_{n \to \infty } \| \Phi_n^c\|_{p} \leq  \left(\left(\frac 4 k \right)^p +\left(\frac1{k-1} \right)^{p-1} \left[ 1-\frac 3k + \frac 2 {\sqrt{k}} \right]^p \right)^{\frac 1p} \ .
}
\end{theorem}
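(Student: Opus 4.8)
The plan is to combine the exact asymptotic formula from Theorem \ref{theorem:p-norm-convergence} with the Haagerup-type bound of Theorem \ref{thm:Haagerup}, optimizing over a one-parameter family of test decompositions of the positive semi-definite matrix $A$. By Theorem \ref{theorem:p-norm-convergence} the limit of $\|\Phi_n\|_p$ equals $\frac1k\max_{A\in\hat D_{k,q}}f(A)$ with $1/p+1/q=1$, so it suffices to produce, for each $A\ge0$ with $\|A\|_q=1$, an upper bound on $f(A)$ of the form $k$ times the right-hand side of the claimed inequality. The natural idea is to split off the top eigenvalue: write $A = \lambda_1 P_1 + B$, where $P_1$ is the rank-one spectral projection onto the leading eigenvector, $\lambda_1=\|A\|_\infty$ is the largest eigenvalue, and $B\ge0$ is supported on the orthogonal complement with $\|B\|_\infty = \lambda_2 \le \lambda_1$ and $\Tr B = \Tr A - \lambda_1$. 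By subadditivity of the operator norm (i.e. of $f$, which is a norm-type quantity since $s_{A+B}=s_A+s_B$), $f(A)\le f(\lambda_1 P_1)+f(B)$. For the rank-one part, Theorem \ref{theorem:largest-eigenvalue-convergence}'s computation (or directly Theorem \ref{theorem:bounding-condition} with a single eigenvalue) gives $f(\lambda_1 P_1)=4\lambda_1$. For the remainder, Theorem \ref{thm:Haagerup} gives $f(B)\le 3\|B\|_2 + \Tr B$.

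Next I would control each of the three resulting terms using the constraint $\|A\|_q = 1$. The constraint bounds the eigenvalues: $\lambda_1 = \|A\|_\infty \le \|A\|_q = 1$, so the first term contributes at most $4$, matching the $(4/k)^p$ piece after dividing by $k$ and raising the whole bound to the appropriate power. For $f(B)$, since $B$ has at most $k-1$ nonzero eigenvalues, each $\le \lambda_1\le 1$, and $\|B\|_q \le \|A\|_q = 1$, one estimates $\|B\|_2^2 = \sum_{i\ge2}\lambda_i^2$ and $\Tr B = \sum_{i\ge2}\lambda_i$ via Hölder/power-mean inequalities in terms of $\|B\|_q$ and the number of eigenvalues $k-1$; specifically $\|B\|_2 \le (k-1)^{1/2-1/q}\|B\|_q$ and $\Tr B \le (k-1)^{1-1/q}\|B\|_q$. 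Combining, $f(B) \le (k-1)^{1-1/q}\bigl(3(k-1)^{1/q-1/2}+1\bigr) = (k-1)^{1-1/q}\bigl(3/\sqrt{k-1}+1\bigr)$, and since $1-1/q = 1/p$ this is $(k-1)^{1/p}(1 + 3/\sqrt{k-1})$. Then $f(A) \le 4 + (k-1)^{1/p}(1+3/\sqrt{k-1})$, so that $\frac1k f(A) \le \frac1k\bigl(4 + (k-1)^{1/p}(1+3/\sqrt{k-1})\bigr)$; rewriting $\frac1k(k-1)^{1/p} = (1/(k-1))^{1-1/p}\cdot\frac{k-1}{k}\cdot\ldots$ — more cleanly, one checks $\frac1k (k-1)^{1/p} \le \bigl((k-1)^{-1}\bigr)^{(p-1)/p}$ and that $(1+3/\sqrt{k-1})^p \le 1 - 3/k + 2/\sqrt k$ fails in general, so in fact the bound must be obtained by first raising the two-term decomposition to the $p$-th power rather than summing then taking the $p$-th root.

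Let me correct the combination step, which is the delicate part: the exponent $1/p$ on the right-hand side of the claim signals that one should bound $\|\Phi_n(|x\rangle\langle x|)\|_p^p$ directly, i.e. treat the contributions of the leading output eigenvalue and of the remaining $k-1$ output eigenvalues separately before taking the $p$-th root. The leading output eigenvalue is at most $4/k$ (Theorem \ref{theorem:largest-eigenvalue-convergence}), contributing $(4/k)^p$. The sum of the $p$-th powers of the other eigenvalues is, by the power-mean inequality for the $\ell^p$ versus $\ell^1$ norm on $k-1$ coordinates, at most $(k-1)^{1-p}$ times the $p$-th power of their sum, i.e. at most $(k-1)^{1-p}\bigl(\Tr\Phi_n(|x\rangle\langle x|) - 4/k\bigr)^p$; and by Theorem \ref{theorem:almost-trace-preserving2} the trace is asymptotically at most $(1+1/\sqrt k)^2 = 1 + 2/\sqrt k + 1/k$, so the bracket is at most $1 + 2/\sqrt k + 1/k - 4/k = 1 - 3/k + 2/\sqrt k$. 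Raising to the $p$-th power and noting $(k-1)^{1-p} = (1/(k-1))^{p-1}$ gives the second summand. Adding the two contributions and taking the $p$-th root yields exactly the stated bound. The phrase ``choose $k$ large enough'' enters because one needs the leading eigenvalue to genuinely be the largest (so that the remaining $k-1$ eigenvalues each do not exceed it and the power-mean split is legitimate), and to absorb the $o(1)$ errors from the almost-sure convergence statements; I expect the main obstacle to be making the separation of ``the top output eigenvalue'' from ``the rest'' rigorous in the $n\to\infty$ limit — one should argue via the almost-sure convergence of $\|\Phi_n\|_\infty$ to $4/k$ and of the trace, noting the output state has rank at most $k$ so only $k$ eigenvalues are in play — and in verifying that the power-mean inequality is applied to the correct number of coordinates.
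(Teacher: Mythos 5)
Your final combination step contains a genuine error: the power-mean inequality goes the other way. For nonnegative numbers $t_2,\dots,t_k$ and $p>1$ one has $\bigl(\sum_i t_i\bigr)^p \le (k-1)^{p-1}\sum_i t_i^p$, i.e.\ $\sum_i t_i^p \ge (k-1)^{1-p}\bigl(\sum_i t_i\bigr)^p$, with equality precisely when all $t_i$ are equal; the bound you invoke, $\sum_i t_i^p \le (k-1)^{1-p}\bigl(\sum_i t_i\bigr)^p$, is false in general. Knowing only that every output eigenvalue is at most $4/k$ and that the trace is at most $1+2/\sqrt k+1/k$ does not force the tail of the spectrum to be flat: roughly $k/4$ tail eigenvalues each equal to $4/k$ satisfy both constraints and give a tail contribution of order $4^{p-1}k^{1-p}$, larger than the claimed $\bigl(\tfrac1{k-1}\bigr)^{p-1}\bigl[1-3/k+2/\sqrt k\bigr]^p$ by a factor of about $4^{p-1}$; such a weaker bound would not suffice for the subsequent comparison in Theorem \ref{theorem:violation-cp}, where for $1.5<p\le 2$ the precise coefficient of the leading term matters. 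A secondary slip of the same kind: you bound the tail sum by $\mathrm{trace}-4/k$, but if the actual top eigenvalue is smaller than $4/k$ the tail sum can exceed that quantity, so the reduction to the displayed pair $(4/k,\,1-3/k+2/\sqrt k)$ also needs the optimization over the constraint region. (Your first attempt, splitting $f(A)\le f(\lambda_1P_1)+f(B)$ and invoking Theorem \ref{thm:Haagerup}, you rightly abandon yourself.)

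What is missing is exactly the structural input the paper supplies. By the Lagrange-multiplier analysis of Appendix \ref{section:lagrange} (Lemma \ref{lemma:optimal-shape}), every maximizer $A$ of $f$ on $\hat D_{k,q}$ has eigenvalues of the form $(\alpha,\beta,\dots,\beta)$ with $\alpha\ge\beta>0$; by the equality condition in the duality Lemma \ref{lemma:duality}, the corresponding optimal output state then has spectrum of the form $(s,t,\dots,t)$. The paper transfers this to finite $n$ by showing, via uniform convergence (Lemma \ref{lemma:uniform}), a compactness argument and Weyl's perturbation theorem, that the actual optimizers $\tilde A_n$ and outputs $\tilde B_n=\Phi_n^c(|\tilde x_n\rangle\langle\tilde x_n|)$ have asymptotically this shape up to arbitrarily small errors. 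Only once the tail is known to be flat do the constraints $s\le 4/k$ (Theorem \ref{theorem:largest-eigenvalue-convergence}) and $s+(k-1)t\le 1+1/k+2/\sqrt k$ (Theorem \ref{theorem:almost-trace-preserving2}) give the stated bound, the worst case being $s=4/k$ and $t=\frac1{k-1}\bigl[1-3/k+2/\sqrt k\bigr]$. Note also that ``choose $k$ large enough'' is needed not merely to absorb $o(1)$ errors but for Lemma \ref{lemma:optimal-shape} itself when $1<q<3$.
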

\begin{proof}
Firstly, fix a sample sequence $\{\Phi_n\}_{n=1}^\infty$ which shows uniform convergence as in Lemma \ref{lemma:uniform} and Theorem \ref{theorem:p-norm-convergence}. Then, choose $|\tilde x_n \rangle \langle \tilde x_n| \in  D_n$ and $\tilde A_n \in \hat D_{k,q}$ properly, in terms of Lemma \ref{lemma:duality}, so that 
\eq{\label{eq:fn}
\|\Phi_n^c \|_p = \|\Phi_n^c(|\tilde x_n \rangle \langle \tilde x_n|)\|_p
=\trace [\Phi_n^c(|\tilde x_n \rangle \langle \tilde x_n|) \tilde A_n] 
\leq \frac 1k  f_n(\tilde A_n)
\leq \frac{1}{k} \max_{A \in \hat D_{k,q}}
f_n(A) 
= \|\Phi_n^c \|_p \ ,
}
where \eqref{eq:key} was used again.
This means that 
\eq{\label{eq:optimal_matrices_convergence}
 \lim_{n \to\infty} f_n(\tilde A_n)
=  f(\tilde A)
=k\lim_{n \to \infty}\trace [\tilde B_n \tilde A_n] \ ,
}
where we wrote $\tilde B_n = \Phi_n^c(|\tilde x_n \rangle \langle \tilde x_n|)$ and chose an arbitrary $\tilde A$ from the following set of optimal matrices:
\eq{
M_{k,q} = \left\{\tilde A \in \hat D_{k,q}: f(\tilde A) =  \max_{A \in \hat D_{k,q}} f(A) \right\} \ .
}

Secondly, define the $\epsilon$-neighborhood of $M_{k,q}$:
\eq{
M_{k,q,\epsilon} = \left\{A \in \hat D_{k,q}: d(A, M_{k,q})<\epsilon \right\} \ ,
}
where $d(\cdot,\cdot)$ is made of the Euclidean norm.
Then, we claim that $\forall \epsilon > 0$, $\exists N_0 \in \mathbb N$, $\forall n \geq N_0$, it holds that $\tilde A_n \in M_{k,q,\epsilon}$.
Indeed, otherwise, there is some $\epsilon_0>0$ and a sub-sequence $\displaystyle \{\tilde A_{n_m}\}_{m=1}^\infty$ such that $\tilde A_{n_m} \not \in M_{k,q,\epsilon_0}$.
Since $\hat D_{k,q} \setminus M_{k,q,\epsilon_0}$ is closed and bounded, there is a convergent sub-sequence $\displaystyle \{\tilde A_{n_{m_\ell}}\}_{\ell=1}^\infty$ such that $\displaystyle \lim_{\ell \to \infty} \tilde A_{n_{m_\ell}} = A_0$ with $d(A_0, M_{k,q}) \geq \epsilon_0$. On the other hand, suppose $\|\tilde A_{n_{m_\ell}} - A_0\|_2 < \epsilon_1$, and
 uniform convergence of $f_n(\cdot)$ in Lemma  \ref{lemma:uniform} indicates that
\eq{
\left| f_{n_{m_\ell}} (\tilde A_{n_{m_\ell}}) - f(A_0)\right|
\leq \left| f_{n_{m_\ell}} (\tilde A_{n_{m_\ell}}) - f(\tilde A_{n_{m_\ell}})\right|+
\left| f(\tilde A_{n_{m_\ell}}) - f(A_0)\right|
\leq \epsilon_1 + 4k\epsilon_1 
}
for large enough $\ell$.
This would yield a contradiction:
\eq{
\lim_{\ell \to \infty}   f_{n_{m_\ell}} (\tilde A_{n_{m_\ell}})  = f(A_0)< f(\tilde A) \ .
}
The above deduction means that there exists a sequence $\{\tilde A^{(n)}\}_{n=1}^\infty$ in $M_{k,q}$ such that $\|\tilde A_n - \tilde A^{(n)}\|_2 \to 0$ as $n\to\infty$.

Thirdly, then note that via Lemma \ref{lemma:optimal-shape} all matrices $\tilde A^{(n)}$ have the following shape of eigenvalues:
$ (\alpha,\beta, \ldots, \beta)$
for some $0 < \beta < \alpha$ (possibly depending on $n$). Now, Weyl's perturbation theorem, see for example \cite[Corollary III.2.6]{bhatia2013matrix}, indicates that
\eq{
\max_{i\in [k]} \left|\lambda_i^\downarrow(\tilde A_n) - \lambda_i^\downarrow(\tilde A^{(n)})\right|
\leq \| \tilde A_n - \tilde A^{(n)} \|_\infty \ ,
}
where $\lambda_i^\downarrow (A)$ is the $i$-th largest eigenvalue of the Hermitian matrix $A$. So, we can assume that the eigenvalue distribution of $\tilde A_n$ is asymptotically the same as that of $\tilde A^{(n)}$ with arbitrary small errors.
This in turn implies by Lemma \ref{lemma:duality} that we can assume $\tilde B_n$ has the following eigenvalue distributions asymptotically:
$(s,t,t, \ldots, t)$
for some $0 < t < s$ (possibly depending on $n$) because the error of this approximation can be made arbitrary small. 
On the other hand, Theorem \ref{theorem:largest-eigenvalue-convergence} implies that the larger eigenvalue $s$ is asymptotically equal to $\frac 4k$.
Moreover, Theorem \ref{theorem:almost-trace-preserving2} forces the condition that $ 1 + \frac 1k - \frac 2 {\sqrt{k}} \leq s + (k-1)t \leq  1 + \frac 1k + \frac 2 {\sqrt{k}}$.
This implies that, in the limit, maximum output $p$-norm is bounded from above by the $p$-norm of 
\eq{
\left\{s, \underbrace{t \ldots, t}_{k-1} \right\} \eqtext{where} s = \frac 4 k
\eqtext{and hence} t = \frac 1{k-1} \left[ 1-\frac 3k + \frac 2 {\sqrt{k}} \right] \ ,
}
which corresponds to the worst possible case under the above necessary constraints.
\end{proof}

\begin{theorem}\label{theorem:bell-input}
Almost surely, we have
\eq{
\lim_{n \to \infty}  \left(\Phi_n^c \otimes \bar \Phi_n^c \right)(|b_n\rangle \langle b_n|) 
= \frac{1}{k^2} I_{k^2} + \frac{1}{k} |b_k \rangle\langle b_k|
}
and in particular, 
\eq{\label{eq:bell-input}
\liminf_{n \to \infty} \left\| \Phi_n\otimes \bar \Phi_n \right\|_p 
=\liminf_{n \to \infty} \left\| \Phi_n^c\otimes \bar \Phi_n^c \right\|_p
\geq \left(\left( \frac 1k + \frac 1{k^2} \right)^p + (k^2-1) \left( \frac 1 {k^2}\right)^p \right)^{\frac 1p} \ .
}
Here, \eq{\label{eq:bell_state}
|b_n \rangle = \frac{1}{\sqrt{n}}\sum_{i = 1}^n  |i\rangle \otimes | i\rangle
}
is a Bell state, where $\{|i\rangle\}_i$ is the canonical basis in $\mathbb C^n$.
\end{theorem}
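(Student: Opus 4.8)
The plan is to feed the Bell state $|b_n\rangle$ through $\Phi_n^c \otimes \bar\Phi_n^c$ and compute the output explicitly using the convenient ``transpose trick'': applying a channel and its conjugate to a maximally entangled state produces something with a large overlap on the Bell state. First I would write out $(\Phi_n^c \otimes \bar\Phi_n^c)(|b_n\rangle\langle b_n|)$ in terms of the Kraus operators. Since $\Phi_n$ has Kraus operators $\tfrac{1}{\sqrt k} X_i$ acting on the input index, its complement $\Phi_n^c$ acts by moving the $i$-index into the environment; concretely $\Phi_n^c(|x\rangle\langle x|) = \tfrac1k \sum_{i,j} \langle x| X_j^* X_i |x\rangle \, |i\rangle\langle j|$. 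Using $\bar X = \overline{X}$ and the standard identity $(\mathbf{1}\otimes \bar M)|b_n\rangle = (M^*\otimes \mathbf{1})|b_n\rangle$ (up to the $1/\sqrt n$ normalization), the tensor output should collapse to a sum over the environment indices of terms $\langle b_n| (\text{something}) |b_n\rangle$ times $|i\rangle\langle j| \otimes |i'\rangle\langle j'|$, where the scalar is a normalized trace $\tfrac1n \trace[X_j^* X_i X_{j'}^{T} \bar X_{i'}]$ or similar.

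The key computational step is then to evaluate $\lim_{n\to\infty} \tfrac1n \trace[X_i^* X_j X_{i'}^* X_{j'}]$ — equivalently $\varphi(x_i^* x_j x_{i'}^* x_{j'})$ in the limiting semi-circular/circular system, by strong convergence (Theorem \ref{thm:strong}) applied entry-wise to the fixed finite collection of polynomials indexed by $(i,j,i',j')\in[k]^4$. This is exactly what Lemma \ref{lemma:4th-distribution} computes: it equals $\delta_{i,j}\delta_{i',j'} + \delta_{i,j'}\delta_{j,i'}$ in the circular case and the same in the semi-circular case (with the value $2$ on the full diagonal, which is harmless here since those terms are counted consistently). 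Plugging this in, the $\delta_{i,j}\delta_{i',j'}$ part contributes $\tfrac{1}{k^2}\sum_{i,i'} |i\rangle\langle i|\otimes|i'\rangle\langle i'| = \tfrac{1}{k^2} I_{k^2}$, while the $\delta_{i,j'}\delta_{j,i'}$ part contributes $\tfrac1{k^2}\sum_{i,j} |i\rangle\langle j|\otimes |j\rangle\langle i| = \tfrac1k |b_k\rangle\langle b_k|$. This yields the claimed limit. One should double-check the bookkeeping of which index carries a star and which carries a transpose so that the conjugation in $\bar\Phi_n^c$ lands correctly; I expect the main obstacle to be precisely this index-chasing through the complement construction and the $\bar{\;\cdot\;}$-versus-${}^T$-versus-${}^*$ conversions, together with verifying that the diagonal $i=j=i'=j'$ term (value $2$ in Lemma \ref{lemma:4th-distribution}) is correctly absorbed — it is, because $\tfrac{1}{k^2}+\tfrac1k$ on those diagonal entries is exactly $1/k^2$ from the identity plus $1/k^2$ from the $|b_k\rangle\langle b_k|$ diagonal, matching $2/k^2$.

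Finally, for the norm bound \eqref{eq:bell-input}: the limiting output state $\tfrac{1}{k^2}I_{k^2} + \tfrac1k|b_k\rangle\langle b_k|$ is a positive matrix whose eigenvalues are $\tfrac1k + \tfrac1{k^2}$ (once, along $|b_k\rangle$) and $\tfrac1{k^2}$ (with multiplicity $k^2-1$, on the orthocomplement). Its Schatten $p$-norm is therefore $\big((\tfrac1k+\tfrac1{k^2})^p + (k^2-1)(\tfrac1{k^2})^p\big)^{1/p}$. Since $|b_n\rangle$ is one particular (rank-one) input, $\|\Phi_n^c\otimes\bar\Phi_n^c\|_p \ge \|(\Phi_n^c\otimes\bar\Phi_n^c)(|b_n\rangle\langle b_n|)\|_p$, and taking $\liminf$ and using continuity of $\|\cdot\|_p$ on the convergent sequence of output matrices gives the stated lower bound; the equality $\liminf\|\Phi_n\otimes\bar\Phi_n\|_p = \liminf\|\Phi_n^c\otimes\bar\Phi_n^c\|_p$ is Proposition \ref{proposition:complement} applied to the tensor-product channels (as invoked throughout the section). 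No hard analysis is needed beyond Lemma \ref{lemma:4th-distribution} and strong convergence; the delicacy is entirely in setting up the first displayed identity correctly.
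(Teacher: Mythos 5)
Your overall route is the paper's: expand $(\Phi_n^c\otimes\bar\Phi_n^c)(|b_n\rangle\langle b_n|)$ in environment matrix units, recognize the coefficients as normalized traces of fourth-degree words in the $X_i$'s, pass to the limit by asymptotic freeness together with Lemma \ref{lemma:4th-distribution}, and then take the Schatten $p$-norm of the limit matrix, invoking Proposition \ref{proposition:complement} (for the tensor product, whose complement is $\Phi_n^c\otimes\bar\Phi_n^c$ for the natural Kraus choice) to pass between $\Phi_n\otimes\bar\Phi_n$ and the complementary pair. That part, and the eigenvalue count $(1/k+1/k^2,\,1/k^2\text{ with multiplicity }k^2-1)$, match the paper.

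There is, however, a genuine error in the assembly step, precisely at the index chase you deferred. Carrying it out: the coefficient of $|i,i'\rangle\langle j,j'|$ is $\tfrac{1}{k^2 n}\trace[X_j^*X_i X_{i'}^*X_{j'}]$ (the conjugations in $\bar\Phi_n^c$ together with the transpose identity for $|b_n\rangle$ turn the second factor's $\bar X_{i'},X_{j'}^T$ into $X_{i'}^*,X_{j'}$), so by Lemma \ref{lemma:4th-distribution} and traciality the limiting kernel is $\delta_{i,j}\delta_{i',j'}+\delta_{i,i'}\delta_{j,j'}$. The second pattern pairs the two ket indices and the two bra indices, and its sum is $\sum_{i,j}|i\rangle\langle j|\otimes|i\rangle\langle j|=k\,|b_k\rangle\langle b_k|$, which is what produces the $\tfrac1k|b_k\rangle\langle b_k|$ term. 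You instead wrote the pattern $\delta_{i,j'}\delta_{j,i'}$, whose sum $\sum_{i,j}|i\rangle\langle j|\otimes|j\rangle\langle i|$ is the swap operator, and the identity you then assert, $\tfrac{1}{k^2}\sum_{i,j}|i\rangle\langle j|\otimes|j\rangle\langle i|=\tfrac1k|b_k\rangle\langle b_k|$, is false (the swap has eigenvalues $\pm1$ and is not rank one). This is not cosmetic: if the swap pattern were the correct one, the limit would be $\tfrac{1}{k^2}(I_{k^2}+F)$ with largest eigenvalue $2/k^2$ instead of $1/k+1/k^2$, and the lower bound \eqref{eq:bell-input} — and with it the multiplicativity violation — would evaporate. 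The emergence of the Bell projector, rather than the swap, is exactly why the conjugate channel $\bar\Phi_n$ appears in the second factor, so this bookkeeping must be done explicitly, as in the first display of the paper's proof; once it is, the remainder of your argument (the diagonal value $2$, continuity of $\|\cdot\|_p$ along the convergent outputs, and the complement identification) goes through.
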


\begin{proof}
First, notice that 
\eq{
 \left(\Phi_n^c \otimes \bar \Phi_n^c \right) (|b_n \rangle \langle b_n|) 
&=\sum_{i,j,u,v =1}^k\frac1{k^2} \trace \left[(X_i \otimes \bar X_u)|b_n \rangle \langle b_n|
(X_j^* \otimes \bar X_v ^*) \right] \, |i,u\rangle \langle j,v|\\
&=\frac1{k^2}\sum_{i,j,u,v =1}^k \frac 1n \trace [X_i X_u^*  X_v X_j^*]
\, |i,u\rangle \langle j,v| \ .
}

Next, then by using the asymptotic freeness for GUEs and GEs, \cite{voiculescu1991} and Lemma \ref{lemma:4th-distribution} we calculate:
\eq{\label{eq:bell_limit}
 \left(\Phi_n^c \otimes \bar \Phi_n^c \right)  (|b_n \rangle \langle b_n|)
& \to \frac 1{k^2}\sum_{i,j,u,v =1}^k \varphi(x_ix_u^*x_vx_j^*) \, |i,u\rangle \langle j,v|  \qquad (n \to \infty)  \\
& = \frac 1{k^2} \left[ 
\sum_{\substack{i,u=1\\i \not =u}}^k | i,u \rangle \langle i,u|
+\sum_{\substack{i,j=1\\i \not =j}}^k | i,i \rangle \langle j,j |
+2\sum_{\substack{i=1}}^k | i,i \rangle \langle i,i |
\right] \\
&= \frac 1{k^2} I_{k^2} + \frac 1k |b_k \rangle \langle b_k| \ .
}
This showed our first claim. 
The bound in the second statement is the $p$-norm of this limit matrix.
\end{proof}

\begin{theorem}[Violation of multiplicativity]\label{theorem:violation-cp}
For $1.5<p\leq \infty$,
asymptotically as $n\to\infty$, the random CP maps in \eqref{eq:our-map} violate multiplicativity of MO$p$N for $k \gg 1$.
More precisely, choose large enough $k$, and then almost surely we have 
\eq{
\liminf_{n \to \infty}\| \Phi_n \otimes  \bar \Phi_n\|_p  > 
\lim_{n \to \infty}  
\| \Phi_n \|_p \cdot \| \bar \Phi_n \|_p \ ,
}
where $\bar \Phi$ is defined in \eqref{eq:our-map_conjugate}.
\end{theorem}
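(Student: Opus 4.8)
The plan is to compare the lower bound for the entangled channel from Theorem \ref{theorem:bell-input} against the upper bound for the product of single channels obtained by multiplying the estimate in Theorem \ref{theorem:bound-p-single} by itself. More precisely, for $1.5 < p < \infty$ I would fix a sample sequence for which the almost sure conclusions of Lemma \ref{lemma:uniform}, Theorem \ref{theorem:p-norm-convergence}, Theorem \ref{theorem:bound-p-single} and Theorem \ref{theorem:bell-input} all hold. Using $\|\bar\Phi_n\|_p = \|\Phi_n^c\|_p = \|\Phi_n\|_p$ in the limit (the first from the definition \eqref{eq:our-map_conjugate}, the second from Proposition \ref{proposition:complement}), it suffices to show that for all sufficiently large $k$,
\begin{equation*}
\left(\left(\frac 1k + \frac 1{k^2}\right)^p + (k^2-1)\left(\frac 1{k^2}\right)^p\right)^{1/p}
>
\left(\left(\frac 4k\right)^p + \left(\frac 1{k-1}\right)^{p-1}\left[1 - \frac 3k + \frac 2{\sqrt k}\right]^p\right)^{2/p}.
\end{equation*}
The $\infty$-norm case is handled separately and more simply using Theorem \ref{theorem:largest-eigenvalue-convergence}: the left side of the analogous inequality has leading term $1/k$ from $\|(1/k^2)I_{k^2} + (1/k)|b_k\rangle\langle b_k|\|_\infty = 1/k + 1/k^2$, while the product of single-channel norms is $(4/k)^2 = 16/k^2 = o(1/k)$, so violation is immediate for large $k$.

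For the finite-$p$ case the key is a large-$k$ asymptotic comparison of the two sides. On the left, raising to the $p$-th power, the dominant contribution is $(1/k)^p(1 + 1/k)^p \sim k^{-p}$, since the residual term $(k^2-1)k^{-2p} = k^{-(2p-2)} + O(k^{-2p})$ is of strictly smaller order precisely when $2p - 2 > p$, i.e. $p > 2$; for $1.5 < p \le 2$ one must keep this term but it only helps the inequality. On the right, raising to the $p$-th power, the first summand is $(4/k)^p \sim 4^p k^{-p}$ and the second is of order $k^{-(p-1)} \cdot 1 = k^{-(p-1)}$, which for $p$ close to $1.5$ is actually \emph{larger} than $k^{-p}$; this is why the threshold $p > 1.5$ appears. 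Taking the $p$-th root and comparing: the left side behaves like $k^{-1}$ while the right side behaves like $(4^p k^{-p} + k^{-(p-1)})^{1/p}$. When $p > 1.5$ we have $k^{-(p-1)}$ with exponent $p - 1$ in the range; raising the right side's dominant term to power $1/p$ gives exponent $(p-1)/p < 1$ when... wait, this needs care — I would instead argue directly that the \emph{square} of the single-channel limit is $o(k^{-1})$ for $p > 1.5$. Indeed $\|\Phi_n\|_p^2 \le (4/k)^2 + (4/k)\cdot(\text{smaller}) + (1/(k-1))^{(2p-2)/p}[\cdots]^2$, and the controlling term here is $k^{-(2p-2)/p} = k^{-2 + 2/p}$, which is $o(k^{-1})$ exactly when $-2 + 2/p < -1$, i.e. $p > 2$... so in fact one should not square the bound but rather estimate $\|\Phi_n\|_p \cdot \|\bar\Phi_n\|_p$ via the single-channel bound directly, noting it is $O(k^{-1/2 - \delta})$ for some $\delta > 0$ when $p > 1.5$, which squared gives $o(k^{-1})$.

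The main obstacle, and the step requiring the most care, is pinning down these competing powers of $k$ and verifying that the threshold $p > 1.5$ is exactly what makes the product of single-channel norms decay strictly faster than $k^{-1}$ while the entangled norm stays of order $k^{-1}$. One clean way to organize this: write $U_p(k)$ for the single-channel upper bound in Theorem \ref{theorem:bound-p-single} and $L_p(k)$ for the entangled lower bound in \eqref{eq:bell-input}, show $L_p(k) \ge k^{-1}(1 + o(1))$ and $U_p(k)^2 = o(k^{-1})$ as $k \to \infty$ for each fixed $p \in (1.5, \infty)$, and conclude that $L_p(k) > U_p(k)^2$ for all $k$ beyond a threshold depending on $p$. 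The strictness of the inequality in the theorem then follows since $L_p(k)$ is a genuine lower bound for $\liminf_n\|\Phi_n\otimes\bar\Phi_n\|_p$ and $U_p(k)$ is a genuine upper bound for $\lim_n\|\Phi_n\|_p$. One should also double-check the boundary behavior as $p \downarrow 1.5$ to confirm no uniformity in $p$ is claimed — the theorem only asserts, for each such $p$, the existence of a suitable $k$, so a $p$-dependent threshold is acceptable.
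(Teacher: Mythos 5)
Your overall setup is the same as the paper's: lower-bound the entangled norm by the Bell-state output of Theorem \ref{theorem:bell-input}, upper-bound the product of single-channel norms via Theorem \ref{theorem:bound-p-single}, and dispose of $p=\infty$ separately with Theorem \ref{theorem:largest-eigenvalue-convergence} (that part of your argument is correct). However, the quantitative core of your finite-$p$ argument has a genuine gap on the range $1.5<p\le 2$, which is exactly the nontrivial part of the theorem. Writing $U_p(k)$ for the single-channel upper bound, one has $U_p(k)^p=(4/k)^p+(k-1)^{-(p-1)}\bigl[1-\tfrac3k+\tfrac2{\sqrt k}\bigr]^p$, whose dominant term is $k^{-(p-1)}$; hence $U_p(k)\sim k^{1/p-1}$ and $U_p(k)^2\sim k^{2/p-2}$. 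Your organizing claim that $U_p(k)=O(k^{-1/2-\delta})$, equivalently $U_p(k)^2=o(k^{-1})$, for every $p>1.5$ requires $2/p-2<-1$, i.e.\ $p>2$; for $1.5<p<2$ it is simply false, since then $U_p(k)^2\sim k^{2/p-2}\gg k^{-1}$. Moreover the Bell-state lower bound $L_p(k)$ has the \emph{same} leading order in that range: $L_p(k)^p\approx k^{-p}+k^{2-2p}$ is dominated by $k^{2-2p}$ when $p<2$, so $L_p(k)\sim k^{2/p-2}$ as well. Thus for $1.5<p\le 2$ the two quantities you want to separate coincide at leading order, and no pure order-of-magnitude comparison of the kind you propose can establish the strict inequality.

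The paper resolves precisely this point by a subleading-order comparison: after raising to the $p$-th power and multiplying by $k^{2p}$, the entangled side is $\approx k^2+k^p$ (see \eqref{eq:lower_bound_pair_multi}), while the product side, upon expanding $\bigl[1-\tfrac3k+\tfrac2{\sqrt k}\bigr]^{2p}$ and the factor $(k/(k-1))^{2p-2}$, is $\approx k^2+4p\,k^{3/2}$ (see \eqref{eq:upper_bound_pair_multi}); the violation then rests on $k^p$ eventually dominating $4p\,k^{3/2}$, which is exactly where the threshold $p>1.5$ enters (for $p>2$ the leading orders already differ, and at $p=2$ it is a comparison of the coefficients of $k^2$; see Remark \ref{remark:degree}). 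Your proposal never carries out this finer expansion — the correction term $2/\sqrt k$ in Theorem \ref{theorem:bound-p-single}, which produces the competing $4p\,k^{3/2}$, is treated as negligible — so as written the proof only covers $p>2$ (and $p=\infty$), not the full claimed range $1.5<p\le\infty$.
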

\begin{proof} 
The violation is obvious for $p=\infty$ from Theorem \ref{theorem:largest-eigenvalue-convergence} and Theorem \ref{theorem:bell-input}. Indeed,
\eq{
\liminf_{n \to \infty}\| \Phi_n \otimes  \bar \Phi_n\|_\infty
\geq \frac{1}{k} + \frac{1}{k^2}
> \frac{16}{k^2}
= \lim_{n \to \infty}  \| \Phi_n \|_\infty \cdot \| \bar \Phi_n \|_\infty \ ,
}
for $k \geq 16$, where the violation occurs.

For $1.5<p<\infty$ we
show the following inequality:
\eq{
k^{2p} \cdot \liminf_{n \to \infty}\| \Phi_n \otimes  \bar \Phi_n\|_p^p 
> k^{2p} \cdot \lim_{n \to \infty}  \| \Phi_n \|_p^{2p} \ .
}
To this end, we get the lower bound for the LHS from Theorem \ref{theorem:bell-input}:
\eq{\label{eq:lower_bound_pair_multi}
\left(k + 1\right)^p + (k^2-1) \approx  k^2 + k^p  \ .
}
On the other hand, the upper bound for the RHS can be obtained via Theorem \ref{theorem:bound-p-single}:
\eq{\label{eq:upper_bound_pair_multi}
&4^{2p} + k^2 \left( \frac{k}{k-1} \right)^{2p-2} \left[ 1- \frac{3}{k} + \frac{2}{\sqrt{k}}\right]^{2p} + 2 \cdot 4^p \cdot k \cdot \left(\frac{k}{k-1}\right)^{p-1}\left[ 1- \frac{3}{k} + \frac{2}{\sqrt{k}}\right]^{p}\\
&\approx k^2 \left[ \left( 1 + (2p-2)\frac{1}{k-1} \right)
\left( 1 + 2p \left(\frac{2}{\sqrt{k}} - \frac{3}{k} \right) \right) \right]
\approx k^2 + 4p \cdot k^{\frac{3}{2}} \ .
}
Therefore, if $p>1.5$, then $\eqref{eq:lower_bound_pair_multi} > \eqref{eq:upper_bound_pair_multi}$ for large enough $k$, so that
our claimed strict inequality holds. See Remark \ref{remark:degree} for the details.
\end{proof}

\begin{remark}\label{remark:degree}
In the proof of Theorem \ref{theorem:violation-cp}, we can see that the proven degree of violation of multiplicativity changes dramatically depending on the choice of $p$.
To this end, we compare \eqref{eq:lower_bound_pair_multi} and \eqref{eq:upper_bound_pair_multi}.
For $p>2$ the inequality holds as a result of difference in degrees of $k$. For $p=2$, however, it is the matter of the coefficients of $k^2$. In case $1.5<p<2$, the violation results only from the terms lower than $k^2$.
One can find in Appendix \ref{sec:<1.5} discussions on the case $1<p\leq 1.5$.

\end{remark}

To close this section, we write down a corollary derived from Theorem \ref{theorem:p-norm-convergence} and Lemma \ref{lemma:optimal-shape}.
The solution will lead us to the precise estimate of the limit MO$p$N, but at the moment we do not know how to calculate it. 
\begin{corollary}
For all $p>1$, and large enough $k$, almost surely we have 
\eq{
& \lim_{n \to \infty } \| \Phi_n\|_{p} = \lim_{n \to \infty } \| \Phi_n^c\|_{p} \\
& = \frac 1k  \max_{\alpha, \beta} \min_{0<x<\frac{1}{\alpha}} \left\{\frac{1}{x}+ \frac{\alpha}{1-\alpha x} + \frac{(k-1)\beta}{1-\beta x} : 0< \beta < \alpha, \quad
 \alpha^q + (k-1)\beta^q = 1\right\} \ .
}
\end{corollary}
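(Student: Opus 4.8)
The plan is to combine three results already established. Theorem~\ref{theorem:p-norm-convergence} gives $\lim_{n\to\infty}\|\Phi_n\|_p = \lim_{n\to\infty}\|\Phi_n^c\|_p = \frac1k\max_{A\in\hat D_{k,q}}f(A)$, so the task reduces to an explicit evaluation of $\max_{A\in\hat D_{k,q}}f(A)$. By Theorem~\ref{theorem:bounding-condition}, $f$ restricted to the positive semi-definite cone is rotationally invariant and, for such $A$, $f(A)=\min_{0<x<1/\|A\|}h(x,A)$ with $h(x,A)=\frac1x+\sum_{i=1}^k\frac{\lambda_i}{1-\lambda_i x}$, the $\lambda_i$ being the eigenvalues of $A$. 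Finally, Lemma~\ref{lemma:optimal-shape} identifies, for $k$ large, the eigenvalue pattern of the maximizers of $f$ over $\hat D_{k,q}$.

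First I would use rotational invariance to replace the maximization over $\hat D_{k,q}$ by a maximization over eigenvalue multisets $\lambda_1\ge\cdots\ge\lambda_k\ge 0$ subject to $\sum_i\lambda_i^q=1$, this being the Schatten-$q$ constraint $\|A\|_q=1$ on the positive semi-definite cone. Next I would invoke Lemma~\ref{lemma:optimal-shape} to restrict attention to multisets of the form $(\alpha,\beta,\dots,\beta)$ with $0<\beta<\alpha$; on these the norm constraint becomes $\alpha^q+(k-1)\beta^q=1$, the operator norm is $\|A\|=\alpha$, and the function $h$ specializes to $\frac1x+\frac{\alpha}{1-\alpha x}+\frac{(k-1)\beta}{1-\beta x}$ on the interval $0<x<1/\alpha$. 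Substituting this into the formula of Theorem~\ref{theorem:bounding-condition} for $f(A)$ and then into Theorem~\ref{theorem:p-norm-convergence} produces exactly the claimed expression.

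The only points needing a word of care are the existence of a maximizer, which is immediate since $f$ is continuous on the compact set $\hat D_{k,q}$, and the fact that Lemma~\ref{lemma:optimal-shape} is valid only for $k$ large enough, which is built into the hypothesis of the corollary. I do not expect any real obstacle: the analytic content has already been absorbed into Theorems~\ref{theorem:p-norm-convergence} and~\ref{theorem:bounding-condition} and Lemma~\ref{lemma:optimal-shape}, and what remains is a direct substitution and bookkeeping of the norm constraint. As the paper itself notes, the genuinely open question — not addressed by this corollary — is how to carry out the outer optimization over $(\alpha,\beta)$ in closed form.
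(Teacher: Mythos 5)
Your proposal is correct and follows exactly the route the paper intends: the paper states the corollary as "derived from Theorem \ref{theorem:p-norm-convergence} and Lemma \ref{lemma:optimal-shape}," and your argument simply makes the substitution explicit using the formula and rotational invariance of $f$ from Theorem \ref{theorem:bounding-condition}. No gaps beyond the bookkeeping you already note.
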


\section{Violation of additivity of quantum channels}\label{sec:vio-qc}
\subsection{Rectifying CP maps to quantum channels}\label{sec:rectify}
In this section, we rectify the CP maps in \eqref{eq:our-map} into quantum channels.
First, we can define the rectifying operator $R$ based on Theorem \ref{theorem:clt}:
\begin{lemma}\label{lemma:rectifier}
For any $k \in \N (k\ge 2)$ and $0<\epsilon <1$, one can define the following operator almost surely for large enough $n$:
\eq{
R := \sqrt{k} \left(\sum_{i=1}^k X_i^*X_i \right)^{- \frac 12}
\qquad
\text{such that}
\qquad
 \frac{\sqrt{k(1 - \epsilon)}}{\sqrt{k} +1}  \leq R \leq  \frac{\sqrt{k(1 + \epsilon)}}{\sqrt{k} - 1} \ .
}
Here, $\{X_i\}_{i=1}^k$ are GUEs or GEs of size $n$.
\end{lemma}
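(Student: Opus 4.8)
The plan is to read everything off Theorem~\ref{theorem:clt} through the functional calculus, so the proof is short. Write $W_{k,n}=\sum_{i=1}^k X_i^*X_i$, a positive semi-definite random matrix. Theorem~\ref{theorem:clt} says that, almost surely, $\mu_{\min}(W_{k,n})\to(\sqrt{k}-1)^2$ and $\mu_{\max}(W_{k,n})\to(\sqrt{k}+1)^2$ as $n\to\infty$. Since $k\ge 2$ we have $(\sqrt{k}-1)^2>0$, so on the full-probability event on which this convergence holds there is $N_1$ with $\mu_{\min}(W_{k,n})>0$ for all $n\ge N_1$; hence $W_{k,n}$ is invertible and $R=\sqrt{k}\,W_{k,n}^{-1/2}$ is a well-defined positive-definite operator for $n\ge N_1$.

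First I would translate the spectrum of $R$ into that of $W_{k,n}$: the eigenvalues of $R$ are exactly $\sqrt{k}/\sqrt{\lambda}$ as $\lambda$ runs over the eigenvalues of $W_{k,n}$, so as self-adjoint operators
\[
\frac{\sqrt{k}}{\sqrt{\mu_{\max}(W_{k,n})}}\,I_n \;\le\; R \;\le\; \frac{\sqrt{k}}{\sqrt{\mu_{\min}(W_{k,n})}}\,I_n .
\]
Then I would convert the convergence into one-sided bounds on the extreme eigenvalues: fix $0<\epsilon<1$; since $1/(1-\epsilon)>1$ and $1/(1+\epsilon)<1$, on the same full-probability event there is $N_0\ge N_1$ such that for all $n\ge N_0$
\[
\mu_{\max}(W_{k,n})\le \frac{(\sqrt{k}+1)^2}{1-\epsilon}
\qquad\text{and}\qquad
\mu_{\min}(W_{k,n})\ge \frac{(\sqrt{k}-1)^2}{1+\epsilon},
\]
because the left-hand sides converge to $(\sqrt{k}+1)^2$ and $(\sqrt{k}-1)^2$, which lie strictly inside these thresholds. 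Substituting into the operator inequality above and using $\sqrt{(\sqrt{k}\pm1)^2}=\sqrt{k}\pm1$ gives
\[
\frac{\sqrt{k(1-\epsilon)}}{\sqrt{k}+1}\le R\le \frac{\sqrt{k(1+\epsilon)}}{\sqrt{k}-1},
\]
which is the claim.

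There is essentially no obstacle here; the only point needing a little care is the order of quantifiers — one wants a single almost-sure event (the event on which Theorem~\ref{theorem:clt} applies) on which the displayed bound holds for every $0<\epsilon<1$ once $n$ is large enough depending on $\epsilon$, and this is immediate since both extreme eigenvalues converge to deterministic limits on that event. It is also worth remarking that $R$ itself does not depend on $\epsilon$; the parameter $\epsilon$ only governs how large $n$ must be taken.
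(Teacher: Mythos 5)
Your proposal is correct and follows exactly the paper's route: the paper likewise deduces invertibility of $\sum_{i=1}^k X_i^*X_i$ (hence the existence of $R$) and both operator bounds directly from the almost-sure convergence of the extreme eigenvalues in Theorem \ref{theorem:clt}; your write-up merely spells out the spectral-mapping and $\epsilon$-threshold details that the paper leaves implicit.
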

\begin{proof}
Theorem \ref{theorem:clt} shows that $\displaystyle \sum_{i=1}^k  X_i^*X_i$ has
its positive inverse almost surely for large enough $n$.
Hence the asymptotic bounds in \eqref{eq:asymptotic_bound} show our assertion. 
\end{proof}
Therefore, we can define the following random quantum channels as long as $R$ is defined:
\eq{\label{eq:rectified-map}
\Psi_n (\rho) := \Phi_n(R \rho R) =  \frac 1k \sum_{i=1}^k X_i R \rho R X_i^* \ .
}
These are not only CP maps by the definition itself but also trace-preserving:
\begin{lemma}
The maps in \eqref{eq:rectified-map} preserve trace whenever they are defined.
\end{lemma}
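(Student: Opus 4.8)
The plan is to compute $\Tr[\Psi_n(\rho)]$ directly and show it equals $\Tr[\rho]$ for every $\rho\in D_n$ (indeed for every trace-class $\rho$, so that $\Psi_n$ is trace-preserving as a linear map). First I would use linearity and cyclicity of the trace together with the self-adjointness of $R$ to rewrite
\[
\Tr[\Psi_n(\rho)] \;=\; \frac 1k \sum_{i=1}^k \Tr\bigl[X_i R \rho R X_i^*\bigr] \;=\; \frac 1k \Tr\!\Bigl[ R\rho R \sum_{i=1}^k X_i^* X_i \Bigr] \;=\; \frac 1k \Tr\bigl[ \rho \, R\, W_{k,n}\, R \bigr],
\]
where $W_{k,n}=\sum_{i=1}^k X_i^*X_i$ is the operator from \eqref{eq:W_sum}.

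The key step is then to observe that, by Lemma \ref{lemma:rectifier}, $R=\sqrt{k}\,W_{k,n}^{-1/2}$ is obtained by functional calculus from the operator $W_{k,n}$, which is strictly positive (hence invertible) almost surely for large $n$ by Theorem \ref{theorem:clt}. Consequently $R$ commutes with $W_{k,n}$, and $R\,W_{k,n}\,R = k\,W_{k,n}^{-1/2}\,W_{k,n}\,W_{k,n}^{-1/2} = k\,I_n$. Substituting this into the line above yields $\Tr[\Psi_n(\rho)] = \frac 1k \Tr[\rho\cdot k I_n] = \Tr[\rho]$, which is exactly the assertion; equivalently, one may phrase the conclusion as $\Psi_n^*(I_n)=I_n$.

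There is essentially no genuine obstacle beyond bookkeeping. The one point deserving a word of care is that $\Psi_n$ is only defined on the almost-sure event, and for $n$ large enough, on which $R$ exists — precisely the content of the phrase ``whenever they are defined'' in the statement — so the identity $R\,W_{k,n}\,R = k\,I_n$ is used only where it is legitimate. Positivity of $R$ and the fact that it is a function of $W_{k,n}$ are what make the cancellation exact rather than merely approximate, and both are guaranteed by Lemma \ref{lemma:rectifier}.
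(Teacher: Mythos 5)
Your proof is correct and follows essentially the same route as the paper: cyclicity of the trace plus the identity $R\,W_{k,n}\,R = k\,I_n$, which the paper carries out as a one-line computation. You merely make explicit the functional-calculus commutation that the paper leaves implicit.
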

\begin{proof}
The simple calculation
\eq{
\trace \left[ \Psi_n(\rho) \right] = \frac 1k\trace \left[ \rho R \sum_{i=1}^k X_i^*X_i R \right] = \trace [\rho]
}
proves our claim.
\end{proof}

The following lemma relates the quantum channels $\Psi_n$ to the  CP maps $\Phi_n$
in terms of MO$p$N.
\begin{lemma}\label{lemma:errors}
For any $0<\epsilon <1$, $k\in \mathbb N ~(k\ge2)$ and $m \in \mathbb N$, almost surely for large $n \in \mathbb N$,
we have the following bounds:
\eq{\label{eq:cp_qc}
\left(\frac{(1-\epsilon)k}{k+1+2\sqrt{k}}\right)^m  \left\| \bigotimes_{i=1}^m \Phi_n^{(i)} \right\|_p \leq \left\| \bigotimes_{i=1}^m \Psi_n^{(i)}  \right\|_p 
\leq \left(\frac{(1+\epsilon)k}{k+1-2\sqrt{k}}\right)^m \left\| \bigotimes_{i=1}^m \Phi_n^{(i)} \right\|_p
}
where $\Phi_n^{(i)} \in \{\Phi_n, \bar \Phi_n \}$  and $\Psi_n^{(i)} \in \{\Psi_n, \bar \Psi_n \}$.
\end{lemma}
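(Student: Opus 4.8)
The plan is to reduce the whole statement to the single fact that each factor $\Psi_n^{(i)}$ is the corresponding $\Phi_n^{(i)}$ precomposed with conjugation by a rectifier, together with the scaling-homogeneity of the Schatten norm.

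First I would observe that, by definition, $\Psi_n^{(i)}(\sigma)=\Phi_n^{(i)}(R_i\sigma R_i)$, where $R_i$ is the rectifier attached to that factor; for a factor equal to $\bar\Phi_n$ the rectifier is $\bar R$, the complex conjugate of the operator $R$ from Lemma~\ref{lemma:rectifier}. Since complex conjugation of matrices preserves positivity and, for a Hermitian operator, the spectrum, the two-sided bound of Lemma~\ref{lemma:rectifier} holds for every $R_i$ (conjugated or not). Hence, setting $T:=R_1\otimes R_2\otimes\cdots\otimes R_m$, which is again positive, we get $cI\le T\le dI$ with $c:=\bigl(\tfrac{\sqrt{k(1-\epsilon)}}{\sqrt{k}+1}\bigr)^m$ and $d:=\bigl(\tfrac{\sqrt{k(1+\epsilon)}}{\sqrt{k}-1}\bigr)^m$, and a direct computation gives $c^2=\bigl(\tfrac{(1-\epsilon)k}{k+1+2\sqrt{k}}\bigr)^m$ and $d^2=\bigl(\tfrac{(1+\epsilon)k}{k+1-2\sqrt{k}}\bigr)^m$, exactly the constants in \eqref{eq:cp_qc}. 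Moreover $\bigl(\bigotimes_i\Psi_n^{(i)}\bigr)(\rho)=\bigl(\bigotimes_i\Phi_n^{(i)}\bigr)(T\rho T)$. From $0\le T\le dI$ one also gets $T^2\le d^2I$, and from $cI\le T$ with $c>0$ one gets $T^{-1}\le c^{-1}I$, hence $T^{-2}\le c^{-2}I$.

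For the upper bound I would pick an optimal input for $\bigotimes_i\Psi_n^{(i)}$, which by norm convexity may be taken to be a rank-one projection $|x\rangle\langle x|$ with $\|x\|=1$. Then $T|x\rangle\langle x|T=|Tx\rangle\langle Tx|$ equals $\|Tx\|^2$ times a rank-one projection, and $\|Tx\|^2=\langle x|T^2|x\rangle\le d^2$. By positivity of $\bigotimes_i\Phi_n^{(i)}$ and homogeneity of $\|\cdot\|_p$ we get $\bigl\|\bigotimes_i\Psi_n^{(i)}\bigr\|_p=\|Tx\|^2\,\bigl\|\bigl(\bigotimes_i\Phi_n^{(i)}\bigr)\bigl(\|Tx\|^{-2}|Tx\rangle\langle Tx|\bigr)\bigr\|_p\le d^2\,\bigl\|\bigotimes_i\Phi_n^{(i)}\bigr\|_p$, which is the right-hand inequality in \eqref{eq:cp_qc}.

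For the lower bound I would run the argument in reverse, using that $T$ is invertible because $c>0$. Let $|z\rangle\langle z|$ be an optimal rank-one input for $\bigotimes_i\Phi_n^{(i)}$ with $\|z\|=1$, and set $|x\rangle:=T^{-1}|z\rangle/\|T^{-1}z\|$, a unit vector, so that $|x\rangle\langle x|$ is a legitimate input for $\bigotimes_i\Psi_n^{(i)}$. Then $T|x\rangle\langle x|T=\|T^{-1}z\|^{-2}\,|z\rangle\langle z|$ with $\|T^{-1}z\|^2=\langle z|T^{-2}|z\rangle\le c^{-2}$, whence $\bigl\|\bigotimes_i\Psi_n^{(i)}\bigr\|_p\ge\|T^{-1}z\|^{-2}\,\bigl\|\bigl(\bigotimes_i\Phi_n^{(i)}\bigr)(|z\rangle\langle z|)\bigr\|_p\ge c^2\,\bigl\|\bigotimes_i\Phi_n^{(i)}\bigr\|_p$, which is the left-hand inequality. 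There is essentially no hard step here: the only points needing a word of care are that the conjugated rectifiers $\bar R$ inherit the operator bounds of Lemma~\ref{lemma:rectifier}, and that transporting an optimizer between the two problems requires invertibility of $T$ together with the homogeneity of the Schatten $p$-norm under scalar multiples of the input; both are elementary.
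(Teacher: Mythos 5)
Your proof is correct and follows essentially the same route as the paper: restrict to rank-one inputs, rewrite $\bigotimes_i\Psi_n^{(i)}$ as $\bigotimes_i\Phi_n^{(i)}$ precomposed with conjugation by the tensor product of rectifiers, bound the resulting scalar factor via Lemma \ref{lemma:rectifier}, and use homogeneity of the Schatten $p$-norm. Your treatment is merely more explicit than the paper's (notably in handling the conjugated rectifiers $\bar R$ and in transporting optimizers both ways via invertibility of $T$), which is a welcome clarification but not a different argument.
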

\begin{proof}
As before 
we can restrict inputs to rank-one projections, which we denote by $|x \rangle\langle x|$, to calculate MO$p$N. Then, 
\eq{
\bigotimes_{i=1}^m \Psi_n^{(i)} (|x \rangle\langle x| )= \bigotimes_{i=1}^m \Phi_n^{(i)}  (R^{\otimes m} |x \rangle\langle x| R^{\otimes m}) 
}
but then $R^{\otimes m} |x \rangle\langle x| R^{\otimes m}$ is of rank one,
and its operator norm can be bounded from below and above by using Lemma \ref{lemma:rectifier}.
It implies the inequalities in \eqref{eq:cp_qc} because $p$-norms are homogeneous.
\end{proof}

We make a remark that the quantum channels $\Psi_n$ can be rewritten by truncated Haar-distributed unitary matrices $U \in \mathcal U (kn)$  when  $\{X_i\}_{i=1}^k$ are GEs, i.e.
\eq{
\Psi_n(\rho) = \trace_{\mathbb C^k} [V \rho V^*] \ ,
}
where for $1\leq i \leq kn$ and $1\leq j \leq n$,
\eq{
V =  (X_1^T, \ldots, X_k^T)^T \frac{R}{\sqrt{k}} 
\quad\text{ and } \quad V_{i,j} =  U_{i,j} \ .
}
Hence the proof about $\Psi_n$ in case of GEs provides a different proof method for the violations of the multipicativity and the additivity, compared with ones in \cite{hw_vio, fukuda2010entanglement, brandao2010hastings, aubrun2010nonadditivity, aubrun2011hastings, f-vio, belinschi2016almost}, where random quantum channels are generated by random isometies. 
The fact that $V$ is a truncation of a Haar unitary can be proved in the following way. Let $m=nk$ for notational simplicity. 

First, $V$ is unitarily left-invariant. 
Indeed,  let $X=(X_1^T, \ldots, X_k^T)^T$ and $W$ be an $m \times m$  unitary matrix. Since $X$ is a truncated Ginibre, the matrix $X' = W X$ has the same distribution as $X$. Also,  $R'= \sqrt{k}\left((X')^* X'\right)^{-1/2} = \sqrt{k}(X^*X)^{-1/2}  =R$. Hence $WV = X' \frac{R'}{\sqrt{k}}$ has the same distribution as $V$.  

Second, since $V$ is an isometry: $V^*V =I_{n}$, one can extend it to some $m \times m$ random unitary matrix $(V,V')$, which may not be Haar-distributed. Now, take a Haar unitary $U \in \mathcal U(m)$ independent of $(V,V')$, and let $(Y,Y'):=U(V,V')$.
Then $Y=UV$ has the same distribution as $V$, and $(Y,Y')$ is a Haar unitary because it is unitarily left-invariant.

\subsection{Additivity violation of quantum channels}\label{sec:violation}

The violation of multiplicativity proved in Theorem \ref{theorem:violation-cp} can be translated to the case of the rectified maps $\Psi_n$, which are now quantum channels.
\begin{theorem}[Violation of multiplicativity]\label{theorem:violation-cp2}
For $1.5 < p \leq \infty$, 
asymptotically as $n\to\infty$, the random quantum channels $\Psi_n$ in \eqref{eq:rectified-map} violate multiplicativity of MO$p$N for $k \gg 1$.
\end{theorem}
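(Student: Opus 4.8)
The plan is to transfer the multiplicativity violation already established for the CP maps $\Phi_n$ in Theorem~\ref{theorem:violation-cp} to the rectified quantum channels $\Psi_n$ using the comparison inequalities in Lemma~\ref{lemma:errors}. The point is that rectification multiplies the $p$-norm of an $m$-fold tensor product by a factor that is at most a fixed power (depending on $k$ and $\epsilon$ only, not on $n$) of the corresponding $\Phi_n$-norm, and this distortion factor is the same for the single-channel powers as for the tensored channel, up to a controlled discrepancy. So the strategy is: bound $\|\Psi_n \otimes \bar\Psi_n\|_p$ from below by the left inequality of \eqref{eq:cp_qc} with $m=2$, bound $\|\Psi_n\|_p$ and $\|\bar\Psi_n\|_p$ from above by the right inequality with $m=1$, and check that the resulting gap is still strictly positive once $k$ is large and $\epsilon$ is small.

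Concretely, first I would fix $1.5<p<\infty$ (the case $p=\infty$ being handled exactly as in the proof of Theorem~\ref{theorem:violation-cp} via Theorem~\ref{theorem:largest-eigenvalue-convergence}, Theorem~\ref{theorem:bell-input} and Lemma~\ref{lemma:errors}). Then, using Lemma~\ref{lemma:errors} with $m=2$ together with Theorem~\ref{theorem:bell-input},
\[
\liminf_{n\to\infty}\|\Psi_n\otimes\bar\Psi_n\|_p \;\geq\; \left(\frac{(1-\epsilon)k}{k+1+2\sqrt{k}}\right)^{2}\left(\left(\tfrac1k+\tfrac1{k^2}\right)^p+(k^2-1)\left(\tfrac1{k^2}\right)^p\right)^{1/p},
\]
and, using Lemma~\ref{lemma:errors} with $m=1$ together with Theorem~\ref{theorem:bound-p-single},
\[
\lim_{n\to\infty}\|\Psi_n\|_p\cdot\|\bar\Psi_n\|_p \;\leq\; \left(\frac{(1+\epsilon)k}{k+1-2\sqrt{k}}\right)^{2}\left(\left(\tfrac4k\right)^p+\left(\tfrac1{k-1}\right)^{p-1}\left[1-\tfrac3k+\tfrac2{\sqrt{k}}\right]^p\right)^{2/p}.
\]
Now divide both sides by the common prefactor structure: the ratio of the two rectification factors is $\left(\frac{(1+\epsilon)(k+1+2\sqrt{k})}{(1-\epsilon)(k+1-2\sqrt{k})}\right)^2 = \left(\frac{1+\epsilon}{1-\epsilon}\right)^2\left(\frac{\sqrt{k}+1}{\sqrt{k}-1}\right)^4$, which tends to $1$ as $k\to\infty$ (after first choosing $\epsilon=\epsilon(k)\to0$, e.g.\ $\epsilon=1/k$). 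So for large $k$ the rectification distorts the two sides by essentially the same amount, and the strict inequality $\eqref{eq:lower_bound_pair_multi}>\eqref{eq:upper_bound_pair_multi}$ from the proof of Theorem~\ref{theorem:violation-cp}, which already has slack growing like a positive power of $k$ when $p>1.5$, survives multiplication by a factor that converges to $1$. Hence for large enough $k$, almost surely,
\[
\liminf_{n\to\infty}\|\Psi_n\otimes\bar\Psi_n\|_p \;>\; \lim_{n\to\infty}\|\Psi_n\|_p\cdot\|\bar\Psi_n\|_p .
\]

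The main obstacle I anticipate is purely bookkeeping: one must verify that the rectification factors do not eat up the violation margin. In the regime $1.5<p<2$ the margin from Theorem~\ref{theorem:violation-cp} is only of order $k^{3/2}$ against a main term of order $k^{2}$, i.e.\ the relative gap is of order $k^{-1/2}\to0$, while the rectification ratio $\left(\tfrac{\sqrt k+1}{\sqrt k-1}\right)^4=1+O(k^{-1/2})$ also tends to $1$ only at rate $k^{-1/2}$; so one has to be careful that the constants work out and that the relative gap genuinely dominates the relative distortion. The cleanest fix is to keep both the lower and upper estimates as explicit functions of $k$ and $\epsilon$, substitute $\epsilon=1/k$, expand everything to leading and subleading order in $1/\sqrt{k}$ exactly as in Remark~\ref{remark:degree}, and observe that the $\Psi_n$ margin is $\left(1+O(k^{-1/2})\right)$ times the $\Phi_n$ margin, hence still strictly positive for $k$ large. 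For $p\geq2$ this is immediate since the $\Phi_n$ margin is of order $k^p$ (for $p>2$) or a positive fraction of $k^2$ (for $p=2$), comfortably beating the multiplicative distortion $1+O(k^{-1/2})$.
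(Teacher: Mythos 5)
Your proposal is correct and follows essentially the same route as the paper: the paper's proof likewise invokes Lemma~\ref{lemma:errors} with $\epsilon=1/k$ to transfer the gap between \eqref{eq:lower_bound_pair_multi} and \eqref{eq:upper_bound_pair_multi} from $\Phi_n$ to $\Psi_n$, noting the rectification factors are $1+O(k^{-1/2})$. Your explicit bookkeeping (which in particular confirms that for $1.5<p<2$ the margin is of order $k^{p}$, hence dominates the $O(k^{3/2})$ perturbation coming from the distortion of the $k^2$ main term) is exactly the verification the paper leaves implicit.
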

\begin{proof}
The violation in Theorem \ref{theorem:violation-cp} comes from the gap between \eqref{eq:upper_bound_pair_multi} and \eqref{eq:lower_bound_pair_multi}. 
Lemma \ref{lemma:errors} with $\epsilon=1/k$, for example, shows that rectifying $\Phi_n$ into $\Psi_n$ does not change significantly the gap and hence does not affect multiplicativity violation.
\end{proof}

For the additivity violation of MOE, the method of using the Haagerup inequality was employed in \cite{collins_random_unitary} for a Haar unitary quantum channel. A similar idea works for our rectified quantum channel $\Psi_n$.  

\begin{theorem}[Violation of additivity]\label{theorem:violation-qc}
Asymptotically as $n\to\infty$, the random quantum channels $\Psi_n$  in \eqref{eq:rectified-map} violate additivity of MOE for $k \gg 1$.
More precisely, almost surely, we have
\eq{
\limsup_{n \to \infty }S_{\min}(\Psi_n \otimes \bar \Psi_n) < \liminf_{n \to \infty} S_{\min}(\Psi_n) + \liminf_{n \to \infty}  S_{\min}(\bar \Psi_n)  \ .
}
Here, $\bar \Psi_n (\rho)$ is the complex conjugate of $\Psi_n (\rho)$:
\eq{\label{eq:rectified-map-c}
\bar \Psi_n (\rho) =  \frac 1k \sum_{i=1}^k  \overline{X_i R} \, \rho \, \overline{R X_i^*} \ .
}
\end{theorem}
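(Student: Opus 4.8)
The plan is to establish the additivity violation by the standard two-sided estimate: an upper bound on $S_{\min}(\Psi_n \otimes \bar\Psi_n)$ coming from the Bell-state input, and a lower bound on $S_{\min}(\Psi_n) + S_{\min}(\bar\Psi_n)$ coming from a Haagerup-inequality bound on $p$-norms of the single channels. First I would translate everything into $p$-norm language: since the R\'enyi entropies $S_p(\sigma) = \frac{1}{1-p}\log\Tr[\sigma^p]$ satisfy $S(\sigma) = \lim_{p\to 1} S_p(\sigma)$ and $S(\sigma) \geq -\frac{p}{p-1}\log\|\sigma\|_p$ for $p>1$ (and $S(\sigma) \leq -\log\|\sigma\|_\infty$ trivially), a gap between $\log\|\Psi_n\otimes\bar\Psi_n\|_p$ and $2\log\|\Psi_n\|_q$ for suitable $p$ near $1$ and $q$ slightly larger will produce the entropy gap. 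The rectification errors are controlled by Lemma \ref{lemma:errors} with, say, $\epsilon = 1/k$, so asymptotically $\|\Psi_n^{(i)}\|_p$ differs from $\|\Phi_n^{(i)}\|_p$ only by a factor $(1+O(1/\sqrt{k}))$, which contributes $O(1/\sqrt{k})$ to the log and is harmless for $k\gg 1$.

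For the upper bound, I would feed the Bell state $|b_n\rangle$ into $\Psi_n\otimes\bar\Psi_n$. Because $R$ commutes with the relevant structure, $(\Psi_n\otimes\bar\Psi_n)(|b_n\rangle\langle b_n|)$ is, up to the bounded rescaling by $R^{\otimes 2}$ of Lemma \ref{lemma:rectifier}, a trace-normalized version of the limit matrix $\frac{1}{k^2}I_{k^2} + \frac 1k|b_k\rangle\langle b_k|$ from Theorem \ref{theorem:bell-input}. Its eigenvalues are $\frac{1}{k}+\frac{1}{k^2}$ (once) and $\frac{1}{k^2}$ ($k^2-1$ times); after normalizing the trace to one and tracking the $R$-induced distortion, the von Neumann entropy of this output is $\log k^2 - \frac{1}{k}\log k + O(1/\sqrt{k}) = 2\log k - O((\log k)/k)$. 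This gives $\limsup_n S_{\min}(\Psi_n\otimes\bar\Psi_n) \leq 2\log k - \frac{\log k}{k}(1+o(1))$ as $k\to\infty$.

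For the lower bound on the single-channel entropies, I would use the Haagerup inequality (Theorem \ref{thm:Haagerup}). By the duality argument behind \eqref{eq:key} and Theorem \ref{theorem:p-norm-convergence}, $\lim_n\|\Phi_n\|_q = \frac 1k\max_{A\in\hat D_{k,q'}} f(A)$ with $1/q+1/q'=1$, and $f(A)\leq 3\|A\|_2 + |\Tr A|$. Optimizing $3\|A\|_2 + \Tr A$ over $A\geq 0$ with $\|A\|_{q'}=1$ (so $A$ has the shape $(\alpha,\beta,\dots,\beta)$ by Lemma \ref{lemma:optimal-shape}, and for $q'$ near $\infty$, i.e.\ $q$ near $1$, one has $\|A\|_2 \approx \|A\|_1/\sqrt{k}\cdot(\text{something})$) yields $\lim_n\|\Phi_n\|_q \leq \frac{C}{k}$ with $C$ only mildly larger than $1$ — concretely, for $q$ close to $1$ the bound behaves like $\frac 1k(1 + O(1/\sqrt k))$, whence $S_{\min}(\Phi_n) \geq -\frac{q}{q-1}\log\|\Phi_n\|_q \geq \frac{q}{q-1}(\log k - O(1/\sqrt k))$. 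This is too crude unless $q\to 1$ is tuned carefully against $k$; the clean route, exactly as in \cite{collins_random_unitary}, is to note that the Haagerup bound on all $p$-norms simultaneously forces the output spectrum to be close to maximally mixed, giving $S_{\min}(\Phi_n)\geq \log k - \delta_k$ with $\delta_k\to 0$, hence $S_{\min}(\Psi_n)\geq \log k - \delta_k - O(1/\sqrt k)$, and likewise for $\bar\Psi_n$. Combining, $\liminf_n[S_{\min}(\Psi_n)+S_{\min}(\bar\Psi_n)] \geq 2\log k - o(1)$, which strictly exceeds the $2\log k - \frac{\log k}{k}(1+o(1))$ upper bound once $k$ is large.

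\textbf{The main obstacle} is the lower bound step: converting the Haagerup $p$-norm estimates into a genuine entropy lower bound $S_{\min}(\Phi_n)\geq \log k - o(1)$ requires controlling the output spectrum uniformly — a single $p$-norm bound is not enough, and one must argue (following \cite{collins_random_unitary}) that $\|\Phi_n(\rho)\|_p \leq \frac 1k(1+o(1))^{?}$ for a well-chosen range of $p$ forces near-flatness of the spectrum, then pass to the entropy. One must also be careful that all the "$o(1)$ as $k\to\infty$" error terms, including those from the rectifier $R$ via Lemma \ref{lemma:errors} and from Weyl perturbation of the optimal matrices, are genuinely $o(\log k / k)$ or at least beaten by the $\frac{\log k}{k}$ gain from the Bell input — this is the delicate bookkeeping that makes the strict inequality go through.
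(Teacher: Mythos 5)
Your overall skeleton (Bell-state upper bound plus a Haagerup-driven ``outputs are nearly maximally mixed'' lower bound, as in \cite{collins_random_unitary}) is the same as the paper's, but the two quantitative steps as you propose them do not close, and the missing idea is precisely where the rectification enters. For the lower bound you work with $\Phi_n$ (or with $p$-norms transferred through Lemma \ref{lemma:errors}) and only afterwards pass to $\Psi_n$. This fails at the scale that matters: the whole violation lives at order $\frac{\log k}{k}$, while (a) for the non-trace-preserving $\Phi_n$ the deviation matrix $A=\Phi_n^c(|x\rangle\langle x|)-\tilde I_k$ is \emph{not} traceless --- its trace deviates by up to $\approx 2/\sqrt{k}$ by Theorem \ref{theorem:almost-trace-preserving2} --- so the Haagerup bound only gives $f(A)\le 3\|A\|_2+|\Tr(A)|$ with a $1/\sqrt{k}$-sized trace term, and (b) the multiplicative factors $1+O(1/\sqrt k)$ in Lemma \ref{lemma:errors} become additive $O(1/\sqrt k)$ errors at the entropy level; either way the resulting entropy deficit is $\gtrsim k^{-1/2}\gg \frac{\log k}{k}$ and the strict inequality is lost. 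The paper's proof avoids this by running the argument directly on the complementary channel $\Psi_n^c$ of the \emph{rectified} channel: since $\Psi_n$ is trace-preserving, $A_n=\Psi_n^c(|x\rangle\langle x|)-\tilde I_k$ satisfies $\Tr A_n=0$ exactly, and the chain $\Tr[A_n^2]=\Tr[\Phi_n^c(R|x\rangle\langle x|R)A_n]=\frac1k\langle x|R\bigl(\sum_{i,j}(A_n)_{i,j}X_i^*X_j\bigr)R|x\rangle\le\frac{1+\epsilon}{(\sqrt k-1)^2}f_n(A_n)$, combined with the uniform convergence $f_n\to f$ of Lemma \ref{lemma:uniform} and $f(A_n)\le 3\|A_n\|_2$, yields $\|A_n\|_2\le \frac{3(1+2\epsilon)}{(\sqrt k-1)^2}=O(1/k)$ uniformly over inputs; the quadratic bound $S\ge\log k-k\|\cdot-\tilde I_k\|_2^2$ then gives $S_{\min}(\Psi_n^c)\ge\log k-O(1/k)$, which is $o(\log k/k)$-accurate. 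This is exactly the point of the paper's closing remark: the traceless trick requires trace preservation, so ``bound $\Phi_n$ first, transfer later'' is not a viable route.

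Your upper bound also has a gap. You assert that $(\Psi_n\otimes\bar\Psi_n)(|b_n\rangle\langle b_n|)$ is spectrally close to the limit of Theorem \ref{theorem:bell-input}, but that theorem concerns $\Phi_n^c\otimes\bar\Phi_n^c$ without the rectifier; since $R$ is a random matrix correlated with the $X_i$, identifying the rectified Bell output with that limit is an unproven claim needing its own computation. Moreover your bookkeeping ``$2\log k-\frac{\log k}{k}+O(1/\sqrt k)$'' is weaker than the trivial bound $2\log k$, because $1/\sqrt k\gg\frac{\log k}{k}$, so as written the upper bound carries no information. The paper sidesteps both issues: it only uses the exact Hayden--Winter matrix-element bound $\langle b_k|(\Psi_n^c\otimes\bar\Psi_n^c)(|b_n\rangle\langle b_n|)|b_k\rangle\ge 1/k$, valid for any trace-preserving map with $k$ Kraus operators, and then the elementary lemma from \cite{f-vio} to get $S_{\min}(\Psi_n^c\otimes\bar\Psi_n^c)<2\log k-\frac{\log k}{k}+\frac2k$; comparing with $2(\log k-\frac{9k}{(\sqrt k-1)^4})$ and invoking Proposition \ref{proposition:complement} to replace complementary channels by the channels themselves finishes the proof. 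So the architecture of your proposal is right, but the two decisive devices --- the exact tracelessness supplied by the rectified complementary channel, and the matrix-element-plus-lemma form of the Bell upper bound --- are missing, and without them the error terms you acknowledge in your last paragraph cannot be beaten.
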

\begin{proof}
As before, fix a sample sequence $\{\Phi_n\}_{n=1}^\infty$ (and hence $\{\Psi_n\}_{n=1}^\infty$) which shows uniform convergence as in Lemma \ref{lemma:uniform}. This is possible almost surely. 

Firstly, then we claim that almost surely as $n \to \infty$ 
\eq{\label{eq:2norm_bound}
\limsup_{n \to \infty} \max_{|x \rangle \langle x|   \in D_n}\|\Psi_n^c(|x \rangle \langle x|) - \tilde I_k \|_2 \leq \frac 3 {k+1-2\sqrt{k}} \ ,
}
where $\Psi_n^c$ is the complementary channel of $\Psi_n$; see Appendix \ref{appendix:complementary}.
Indeed, fix $\epsilon>0$, a rank-one projection $| x \rangle \langle x| \in D_n$, 
write $A_n = \Psi_n^c(|x \rangle \langle x|) - \tilde I_k$ with $\tilde I_k = I_k/k$, and follow the notation \eqref{eq:f_n}, to have
\eq{
\trace \left [A_n^2 \right] 
= \trace \left[\Phi_n^c(R|x \rangle \langle x|R) A_n \right] 
= \frac{1}{k}\langle x|R \left[ \sum_{i,j} (A_n)_{i,j} X_i^*X_j\right] R|x \rangle 
 \leq \frac{1+\epsilon}{k+1-2 \sqrt k} \cdot f_n(A_n)
}
for sufficiently large $n$. 
Here, the trace-preserving property of $\Psi^c$ implies that $\trace A_n = 0$.
Now, we will use uniform convergence: $f_n \to f$ in Lemma \ref{lemma:uniform},
and the Haagerup inequality in Theorem \ref{thm:Haagerup}.
Choose large enough $n$ and we have
\eq{\label{eq:app_Haa}
\trace \left [A_n^2 \right] 
\leq\frac{1+2\epsilon}{k+1-2 \sqrt k} \cdot f(A_n) 
\leq \frac{3(1+2\epsilon)}{k+1-2 \sqrt k} \,  \left\| A_n \right\|_2 \ ,
}
where $f(\cdot)$ is defined in \eqref{eq:f_def}.

Dividing the both sides by $\|A_n\|_2$ shows that 
\eq{
\max_{|x \rangle \langle x|   \in D_n}\|\Psi_n^c(|x \rangle \langle x|) - \tilde I_k \|_2 \leq \frac {3(1+2\epsilon)}{k+1-2\sqrt{k}} \ ,
}
which proves our first claim.

Secondly, then the above estimate implies that almost surely
\eq{ \label{eq:lower_bound}
\liminf_{n \to \infty} S_{\min} (\Psi_n^c)  \geq  \log k - \frac {9k}{(\sqrt{k}-1)^4}\approx \log k - \frac {9}{k} \ .
}
To this end, we use the following quadratic approximation of entropy; see \cite{has}. 
\eq{
S(\Psi_n^c(|x \rangle \langle x| )) \geq \log k - k \cdot \| \Psi_n^c(|x \rangle \langle x|) - \tilde I_k \|_2^2 \ .
}

Thirdly, by using the method in \cite{hw_vio}, we know that 
\eq{
\langle b_n| \left(\Psi_n^c \otimes \bar \Psi_n^c ( |  \langle b_n \rangle\langle b_n |) \right) | b_n \rangle
\geq \frac 1k
}
where $ | b_n \rangle$ is the Bell state in \eqref{eq:bell_state}.
This in turn implies that it always holds that
\eq{\label{eq:hw_trick}
S_{\min} (\Psi_n^c \otimes \bar \Psi_n^c) < 2 \log k - \frac {\log k}{k} + \frac 2 k \ .
}
The proof can be found in \cite[Lemma 2.1]{f-vio} with $a=1$, for example.

Finally, \eqref{eq:lower_bound} and \eqref{eq:hw_trick}, with the fact $S_{\min}(\Psi_n^c) = S_{\min}(\bar \Psi_n^c)$, prove the claim. 

\end{proof}
\begin{remark} 
The method with the trace-less condition, also used in \cite{collins2015estimates, collins_random_unitary},
requires trace-preserving property. This is why,
a similar method would not work for the CP maps in \eqref{eq:our-map}.
\end{remark}

\section{Discussions}
In this study, we treated some completely positive maps, which are almost trace-preserving asymptotically, but not exactly. 
Initially, we worked on random unitary channels or external fields in \eqref{eq:ruc}, and tried to get a similar formula as in Theorem \ref{theorem:bounding-condition}, which was unsuccessful.
To this end, one would have to study the support of the following quadratic forms of unitary elements:
\eq{\label{eq:quadratic_u}
\sum_{i,j} a_{i,j} u_i u_j^* \ .
}
This would be a generalization of the result on linear forms of unitary elements in \cite{akemann_ostrand}.
In \cite{belinschi2016almost} they made exact calculations on the counterparts for quantum channels which are generated with the isometries in \eqref{eq:iso} being truncation of the Haar-distributed unitary matrices.
Solving the above problem with \eqref{eq:quadratic_u} would lead us to exact evaluation of violation of MOE for random unitary channels or external fields. 
Solving the optimization problem in \eqref{eq:maximum_app} is also left open. 

In this paper, the dimensions of input and output spaces of the maps are the same, and the dimension of environment was fixed in the asymptotic processes. To alter these conditions, we would need completely new techniques of strong convergence.

\section*{Acknowledgement}
M.F. benefited from useful comments from Ion Nechita and Michael Wolf. 
Beno{\^{i}}t Collins is thanked for suggesting application of Haagerup's inequality during fruitful discussions.
Also, M.F. was financially supported by JSPS KAKENHI Grant Number JP16K00005, JP20K11667. T.H. is supported by JSPS Grant-in-Aid for Young Scientists (B) 15K17549 and 19K14546. M.F. and T.H. are both supported by JSPS and MAEDI Japan--France Integrated Action Program (SAKURA). 
This work was supported by 
Japan--France Integrated Action Program (SAKURA), Grant number JPJSBP120203202. This research is an outcome of Joint Seminar supported by JSPS and CNRS under the Japan-France Research Cooperative Program. The authors thank the anonymous referee for the insightful comments and suggestions. 

\appendix 

\section{Completely positive maps and their complement} 
\label{appendix:complementary}

A linear map $\Phi : \mathcal M_\ell (\mathbb C) \to   \mathcal M_n(\mathbb C) $ is called completely positive if $\Phi \otimes 1_{\mathbb C^m}$ is positive for $\forall m \in \mathbb N$, where $1_{\mathbb C^m}$ is the identity operator on $\mathcal M_m (\mathbb C)$.
In this case, $\Phi$ always has its corresponding Kraus representation:
\eq{
\Phi(\rho) = \sum_{i=1}^k X_i \rho X_i^*
}
where $A_i$'s are $n \times \ell$ matrices. 
Hence, one can define the following complementary map:
\eq{\label{eq:complement}
 \Phi^c  (\rho) = \sum_{i,j =1}^k  \trace[X_i \rho X_j^* ] |i \rangle\langle j|\ .
 }
Then, we have \cite{Holevo2005a}\cite{king2005properties}:
\begin{proposition}\label{proposition:complement}
For any vector $v \in \mathbb C^n$, 
$\Phi(|v \rangle \langle v|)$ and $\Phi^c(|v \rangle \langle v|)$ share
the non-zero eigenvalues.
In particular, $\| \Phi\|_{p} = \| \Phi^c \|_{p}$.
\end{proposition}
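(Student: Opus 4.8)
\emph{Proof proposal.} The plan is to exhibit both output operators as the two one‑sided products $MM^\ast$ and $M^\ast M$ of a single rectangular matrix $M$ built from the Kraus operators, and then to use the elementary fact that $MM^\ast$ and $M^\ast M$ have the same non‑zero eigenvalues with the same multiplicities (this is exactly the phenomenon behind the Schmidt decomposition: $\Phi(|v\rangle\langle v|)$ and $\Phi^c(|v\rangle\langle v|)$ are the two partial traces of the pure bipartite state obtained by dilating $\Phi$, which always have identical non‑zero spectra).

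First I would fix an input vector $v$ and set $w_i := X_i v$ for $i\in[k]$, and let $M$ be the matrix whose $i$-th column is $w_i$. A direct computation gives $MM^\ast = \sum_{i=1}^k w_i w_i^\ast = \Phi(|v\rangle\langle v|)$, while the $(i,j)$-entry of $M^\ast M$ is $w_i^\ast w_j$, so comparing with
\[
\Phi^c(|v\rangle\langle v|) = \sum_{i,j=1}^k \trace[X_i|v\rangle\langle v| X_j^\ast]\,|i\rangle\langle j|, \qquad \trace[X_i|v\rangle\langle v| X_j^\ast] = w_j^\ast w_i,
\]
one sees that $\Phi^c(|v\rangle\langle v|) = (M^\ast M)^{T}$. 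Since transposition does not change the spectrum of a square matrix, it remains to recall why $MM^\ast$ and $M^\ast M$ share their non‑zero eigenvalues: if $MM^\ast x = \lambda x$ with $\lambda\neq 0$ then $M^\ast x\neq 0$ and $M^\ast M(M^\ast x) = \lambda (M^\ast x)$, and $x\mapsto M^\ast x$ is injective on the $\lambda$-eigenspace of $MM^\ast$ (a kernel vector would give $\lambda x = MM^\ast x = 0$); running the symmetric argument exchanging the two products yields equality of the non‑zero parts of the spectra, with multiplicities. This proves the first assertion.

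For the norm identity, the Schatten $p$-norm of a positive semi‑definite operator depends only on its multiset of eigenvalues and is unaffected by adjoining zeros, so $\|\Phi(|v\rangle\langle v|)\|_p = \|\Phi^c(|v\rangle\langle v|)\|_p$ for every input vector $v$ and every $p$. Both $\Phi$ and $\Phi^c$ are completely positive (indeed $\Phi^c$ is a partial trace of a dilation, just as $\Phi$ is), so by convexity of $\rho\mapsto\|\rho\|_p$ the maxima defining $\|\Phi\|_p$ and $\|\Phi^c\|_p$ are both attained at rank‑one projections $|v\rangle\langle v|$; since the two functions agree at every such projection, the maxima coincide, i.e.\ $\|\Phi\|_p = \|\Phi^c\|_p$.

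I do not expect a genuine obstacle here; the argument is essentially bookkeeping. The only points that deserve a moment's care are the harmless transpose and zero‑padding one performs when passing between $M_n(\mathbb C)$ and $M_k(\mathbb C)$, and the (already standard in this paper) reduction of the optimizations for $\|\Phi\|_p$ and $\|\Phi^c\|_p$ to rank‑one inputs.
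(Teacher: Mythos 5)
Your proposal is correct and follows essentially the same route as the paper: the paper stacks the Kraus operators into a block matrix $V$ and applies the Schmidt decomposition of the pure state $V|v\rangle$, whose two partial traces are exactly your $MM^*$ and $(M^*M)^T$. The only cosmetic difference is that you establish the shared non-zero spectrum of $MM^*$ and $M^*M$ by a direct eigenvector argument rather than quoting the Schmidt (singular value) decomposition, which changes nothing in substance.
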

\begin{proof}
First set $k \times 1$ block matrix as
\eq{\label{eq:iso}
(V)_i = X_i
}
then we can write
\eq{
\Phi(|x \rangle \langle x|) = \trace_{\mathbb C^k} [V |x \rangle \langle x| V^* ]
\eqtext{and}
\Phi^c(|x \rangle \langle x|) = \trace_{\mathbb C^n} [V |x \rangle \langle x| V^* ] \ .
}
On the other hand, by the Schmidt decomposition we can write, assuming $k \leq n$ w.l.o.g.
\eq{
V|x\rangle = \sum_{i=1}^k \sqrt{\lambda_i} \, |e_i \rangle \otimes |f_i \rangle
}
where $|e_i \rangle$ and $|f_i \rangle$ are some orthonormal set of vectors in $\mathbb C^k$ and $\mathbb C^n$, respectively. 
Therefore, 
\eq{
\Phi(|x \rangle \langle x|) = \sum_{i=1}^k \lambda_i  |f_i \rangle \langle f_i |
\eqtext{and}
\Phi^c(|x \rangle \langle x|) =  \sum_{i=1}^k \lambda_i  |e_i \rangle \langle e_i | \ .
}
This completes the proof. 
\end{proof}
The complementary maps of $\Phi_n$'s in \eqref{eq:our-map} 
are rather easy to handle because the output space has the fixed dimension $k$
 while $n \to \infty$.

\section{Partially solving optimization problem}\label{section:lagrange}
We want to calculate 
\eq{\label{eq:maximum_app}
\max_{\substack{ A \in \hat D_k\\ \|A\|_q=1}} f(A), 
}
where $(A)_{i,j} = a_{i,j}$.
Recall from Theorem \ref{theorem:bounding-condition} that
\eq{\label{eq:minimum_app}
f(A) = 
\left \| \sum_{i,j =1}^k  a_{i,j} s_i s_j \right\| = \min_{x \in (0,1/\|A\|)} h(x,A), 
}
where  
\eq{
h(x,A) =\frac{1}{x} + \sum_{i=1}^k \frac{\lambda_i}{1-\lambda_i x} 
}
and $(\lambda_i)_{i=1}^k$ are the non-increasing sequence of the eigenvalues of $A$.

\begin{lemma}\label{lemma:two_kinds}
For $q\geq 2$ with $k \geq 2$, and for $1<q<2$ with large enough $k \in \N$, every matrix $A$ that gives the maximum in \eqref{eq:maximum_app} is of full rank and the eigenvalues are of at most two kinds.
\end{lemma}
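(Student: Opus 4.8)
The plan is to set up a Lagrange multiplier analysis for the optimization problem \eqref{eq:maximum_app} after reducing it to a problem on the eigenvalues $(\lambda_i)_{i=1}^k$ of $A$. Since $f(A)$ as given by \eqref{eq:f_estimate} depends only on the spectrum of $A$ and the constraint $\|A\|_q = 1$ does too, the problem becomes: maximize $g(\lambda_1,\dots,\lambda_k) := \min_{x \in (0,1/\lambda_1)} \bigl(\tfrac1x + \sum_i \tfrac{\lambda_i}{1-\lambda_i x}\bigr)$ over the simplex-like set $\{\lambda_i \geq 0, \sum_i \lambda_i^q = 1\}$, where $\lambda_1 = \max_i \lambda_i = \|A\|$. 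By Theorem \ref{theorem:bounding-condition} the inner minimum is attained at a unique critical point $x = x_*$, so by the envelope theorem the outer objective is differentiable in the $\lambda_i$ away from the boundary, with $\partial g/\partial \lambda_i = 1/(1-\lambda_i x_*)^2$ evaluated at $x = x_*$.

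First I would handle the full-rank claim by showing no optimal $A$ can have a zero eigenvalue. If some $\lambda_j = 0$, I would perturb by moving a small amount of $q$-norm mass onto that coordinate: since $\partial g/\partial \lambda_j \big|_{\lambda_j = 0} = 1 > 0$ while the constraint cost of increasing $\lambda_j$ from $0$ is of higher order in $q$ (the derivative of $\lambda_j^q$ vanishes at $0$ when $q>1$), one gains to first order, contradicting optimality. This is where the hypothesis $q>1$ is essential and where the split between $q \ge 2$ and $1<q<2$ with $k$ large will start to matter — for $q<2$ the second-order terms are more delicate and one needs $k$ large to control them. Second, for the "at most two kinds" claim I would write the Lagrange condition: at an interior optimum there is $\eta$ with $1/(1-\lambda_i x_*)^2 = \eta\, q\, \lambda_i^{q-1}$ for every $i$. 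Thus every optimal eigenvalue $\lambda$ solves the single scalar equation $(1-\lambda x_*)^2 \lambda^{q-1} = 1/(\eta q) =: c$ on $(0, 1/x_*)$. I would then argue that the function $\lambda \mapsto (1-\lambda x_*)^2 \lambda^{q-1}$ takes each value $c$ at most twice on $(0,1/x_*)$: it vanishes at both endpoints, is positive in between, hence (being smooth) has a unique interior maximum if it is unimodal, which gives at most two preimages. Establishing unimodality — i.e. that its logarithmic derivative $\tfrac{q-1}{\lambda} - \tfrac{2x_*}{1-\lambda x_*}$ is strictly decreasing, hence has a single sign change — is a direct computation.

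The main obstacle I anticipate is the full-rank step in the regime $1<q<2$: there the competing second-order effects (the concavity of $\lambda \mapsto \lambda^q$ in the constraint versus the behavior of $g$) are not automatically favorable, and one must exploit that when $k$ is large the optimal spectrum is "spread out" (most eigenvalues comparable, cf.\ the shape $(\alpha,\beta,\dots,\beta)$ appearing in Theorem \ref{theorem:bound-p-single}), so that $x_*$ and the $\lambda_i$ are all bounded away from the degenerate configurations; making this quantitative is the delicate part. The rest — the envelope-theorem differentiability, the Lagrange equation, and the unimodality computation giving "at most two kinds" — I expect to be routine. Note also that the statement only asserts "at most two kinds," so I do not need to determine the optimal $\alpha,\beta$ or whether one or two kinds actually occur; that finer question is left open in the paper.
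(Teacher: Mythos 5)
Your proposal is correct in substance, and its ``at most two kinds'' half is essentially the paper's own argument: the implicit-function/envelope step giving $\partial g/\partial\lambda_i=(1-\lambda_i x_*)^{-2}$, the Lagrange condition, and the observation that every optimal eigenvalue solves a single scalar equation with at most two roots in $(0,1/x_*)$ --- your unimodality of $\lambda\mapsto(1-\lambda x_*)^2\lambda^{q-1}$ is the same computation as the paper's analysis of $\ell(\lambda)=a\lambda^{-b}+x\lambda-1$. Where you genuinely diverge is the full-rank step. The paper rules out zero eigenvalues by a global monotonicity argument: it fixes the top eigenvalue $\alpha$ and its multiplicity $r$, treats the number $m$ of nonzero eigenvalues as a continuous parameter (with $\beta=\beta(m)$ determined by the constraint), and shows $\min_x h$ increases in $m$; this is exactly where the case analysis and the hypotheses enter (e.g.\ the $r=0$ subcase needs $m\mapsto m^{-1/q}(1+\sqrt{m})^2$ to be maximized at $m=k$, which forces $q\ge 2$ or $k$ large). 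Your route is local: at a putative maximizer with $\lambda_j=0$, bump $\lambda_j$ to $\epsilon$; the objective gains about $\epsilon$ since $\partial g/\partial\lambda_j|_{\lambda_j=0}=1$, while restoring $\sum_i\lambda_i^q=1$ costs a decrease of order $\epsilon^q=o(\epsilon)$ in another coordinate, multiplied by the factor $(1-\lambda_i x_*)^{-2}$, which stays bounded along the path because $x_*<1/\|A\|$. Contrary to the worry in your last paragraph, there is no delicate second-order issue for $1<q<2$: the compensation is $o(\epsilon)$ for every $q>1$ and every $k\ge 2$, so once you justify (i) smoothness of $x_*$, hence of $g$, up to the face $\{\lambda_j=0\}$ via the implicit function theorem applied to $h_x=0$ (as in Theorem \ref{theorem:bounding-condition} and the paper's Lagrange step), and (ii) the uniform bound on $1-\lambda_i x_*$ along the perturbation, the first-order argument closes the full-rank claim outright --- indeed without the lemma's restriction on $(q,k)$, which in the paper is an artifact of its monotonicity-in-$m$ method. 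Your ordering (full rank first, then Lagrange over all $k$ coordinates) is also fine; the paper does it in the opposite order. So: correct, with the full-rank half handled by a simpler and more general local argument; just drop the hedge about needing $k$ large to control second-order terms, since nothing of the sort is required.
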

\begin{proof}
First, we prove that all non-zero eigenvalues are of two kinds in the optimum.
Suppose the maximum of \eqref{eq:maximum_app} is achieved by some $A$:
\eq{
A = \mathrm{diag}(\lambda_1,\dots, \lambda_m,0,\ldots,0) \ ,
}
where $1>\lambda_1 \geq \cdots \geq \lambda_m >0$.
Here, we assume $3 \leq m \leq k$ because our claim is obvious otherwise.
Now, w.l.o.g, we assume that $A$ is positive definite $k \times k$ Hermitian matrix. Then, the maximum is achieved at an interior point of the $(m-1)$-dimensional compact surface $\{\lambda \in [0,1]^m: \|\lambda\|_q=1\}$, and hence we can apply the method of Lagrange multiplier. 

With Theorem \ref{theorem:bounding-condition}, one can apply the implicit function theorem to the critical point condition $h_x(x, A) = 0$ and get a unique function $x=x(\lambda_1, \ldots,\lambda_m)$ of $C^\infty$ class taking values in $(0,1/\|A\|)$.
Then, our Lagrange function becomes:
\eq{
L(\lambda_1,\ldots,\lambda_m)= h(x,A) - \mu \left( \sum_{i=1}^m\lambda_i^q-1 \right) \ ,
}
where $\mu$ is the Lagrange multiplier. 
If $(\lambda_1, ...,\lambda_m)$ is a point of the interior of the surface $ \sum_{i=1}^m\lambda_i^q = 1$ at which $h(x,A)$ achieves the maximum, then for some $\mu$ we must have  
\eqn{
L_{\lambda_j}(\lambda_1,\ldots,\lambda_m)
&=-\frac{x_{\lambda_j}}{x^2}+\sum_{i\neq j}\frac{\lambda_i^2x_{\lambda_j}}{(1-\lambda_ix)^2}+\frac{1+\lambda_j^2x_{\lambda_j}}{(1-\lambda_jx)^2}-\mu q\lambda_j^{q-1}\\
&=\lrb -\frac{1}{x^2}+\sum_{i=1}^m\frac{\lambda_i^2}{(1-\lambda_ix)^2}\rrb x_{\lambda_j}+\frac{1}{(1-\lambda_jx)^2}-\mu q\lambda_j^{q-1}\\
&=\frac{1}{(1-\lambda_jx)^2}-\mu q\lambda_j^{q-1}=0. 
}
This shows that $\mu>0$ and 
\eq{
1-\lambda_j x = (\mu q)^{-1/2} \lambda_j^{-(q-1)/2}. \label{eq:lambdas}
}
Set $a =  (\mu q)^{-1/2}$ and $b = (q-1)/2$. Then $\lambda_1,\dots, \lambda_m$ all satisfy the equation: 
\eq{\label{eq:two_kind}
\ell(\lambda):= a \lambda^{-b} + x \lambda -1 =0 \ .  
}
The derivative $\ell'(\lambda) = -a b \lambda^{-b-1}+x$ is negative for $\lambda \in (0, c)$ and positive for $x >c$, where $c= (ab /x)^{1/(1+b)}$. Thus the number of solutions $\lambda>0$ to \eqref{eq:two_kind} is at most two. 

Next, we prove that an optimal $A$ does not have zero eigenvalues. To this end, 
suppose the maximum of \eqref{eq:maximum_app} is achieved for some $A$:
\eq{\label{eq:A_two_non_zero}
A = \mathrm{diag}(\underbrace{\alpha,\dots,\alpha}_{r},\underbrace{\beta, \dots, \beta}_{m-r}, \underbrace{0,\ldots,0}_{k-m})
}
for $0\leq r < m \leq k$ and $0< \beta < \alpha$.
If $r=0$, then $\beta = m^{-1/q}$ and $h(x,A)$ achieves its minimum $m^{-1/q}(1+\sqrt{m})^2$ at $x=\frac{m^{1/q}}{1+\sqrt{m}}$. Extending $m$ to real numbers, let
\eq{
g(m):=  m^{-1/q} (1+\sqrt{m})^2
}
so that $g'(m) >0$ for all $m > 1$ whenever $q \geq 2$, which implies that $m=k$.
For $1<q<2$, one can choose large enough $k$ to conclude $m=k$, because $g(m) \approx m^{1-1/q}$ when $m$ is large.

Now, suppose $r \geq 1$.
The condition $\|A\|_q=1$ means $r \alpha^q + (m-r)\beta^q = 1$ and the condition $\alpha > \beta >0$ implies the constraint $1\leq r < \alpha^{-q} < m $. For fixed $\alpha \in (0,1)$ and $r \geq1$ with $r\alpha^q < 1 $ we regard $\beta$ as a function of $m \geq \alpha^{-q}$, where we extended $m$ to real numbers:
\eq{\label{eq:beta_m}
\beta= \beta(m) = \left(\frac{1-r\alpha^q}{m-r}\right)^{1/q}.   
}
We then rewrite $h(x,A)$ in terms of the parameter $m$ ($m$ now defines $A$):
\eq{
g(m,x) := \frac{1}{x} + \frac{r}{\frac{1}{\alpha}-x} + \frac{m-r}{\frac{1}{\beta}-x}
}
where $x \in (0, 1/\alpha)$ and $m \in[ \alpha^{-q},k]$.
Since
\eq{
\frac{\partial g}{\partial m}  = \frac{\frac{1-1/q}{\beta}-x}{(\frac{1}{\beta}-x)^2} ,
}
the condition
\eq{\label{eq:sufficient_full_full}
(1-1/q)\beta^{-1} \geq 1/\alpha
}
implies that $\partial g / \partial m >0$ for all $x \in (0,1/\alpha)$. 
Otherwise, the condition $(1-1/q)\beta^{-1} < 1/\alpha$ implies
\begin{equation}\label{eq:m}
\frac{\partial g}{\partial x}\left(m,\frac{1-1/q}{\beta}\right) >0 \ ,
\end{equation}
which would turn out to prove that $\partial g / \partial m >0$
in the neighborhood of $\arg \min_{x \in (0,1/\alpha)} g(m,x)$, because the minimum is achieved by $x \in (0, (1-1/q)\beta^{-1})$ in the view of Theorem \ref{theorem:bounding-condition}.
Therefore, since 
either \eqref{eq:sufficient_full_full} or \eqref{eq:m} is always true, $\min_{x \in (0,1/\alpha)} g(u,x)$ is a strictly increasing function of $m$. This means that the optimum should be realized for $m=k$ in \eqref{eq:A_two_non_zero}.
In particular, note that \eqref{eq:A_two_non_zero} satisfying \eqref{eq:sufficient_full_full} itself (and $m<k$) cannot give the optimum because $(1-1/q)\beta^{-1}$ increases in $m$.

Now we prove \eqref{eq:m} assuming that \eqref{eq:sufficient_full_full} does not hold.
Direct calculations give
\eq{
\frac{\partial g}{\partial x}=-\frac{1}{x^2} + \frac{r}{(\frac{1}{\alpha}-x)^2} + \frac{m-r}{(\frac{1}{\beta}-x)^2} \ ,
}
and by using the condition $(1-1/q)\beta^{-1} < 1/\alpha$ we have
\eq{
\frac{\partial g}{\partial x}\left(m,\frac{1-1/q}{\beta}\right) 
&= -\frac{\beta^2}{(1-1/q)^2} + \frac{r}{(\frac{1}{\alpha}-\frac{1-1/q}{\beta})^2} + \frac{m-r}{(\frac{1}{\beta}-\frac{1-1/q}{\beta})^2}\\
&\geq -\frac{\beta^2}{(1-1/q)^2} + \frac{r}{(\frac{1}{\beta}-\frac{1-1/q}{\beta})^2} + \frac{m-r}{(\frac{1}{\beta}-\frac{1-1/q}{\beta})^2}\\
& = (q\beta)^2 \left( m-\frac{1}{(q-1)^2} \right), 
}
which is positive for all real $m \geq \alpha^{-q}$ if $q \geq2$. 
In particular, note that \eqref{eq:A_two_non_zero} satisfying the condition 
\eq{\label{eq:sufficient_full_partial}
m>(q-1)^{-2}
}
itself cannot give the optimum as one always can increase $m$ to satisfies the same condition. 

Let us consider the case $1<q<2$.
To this end we assume two conditions. One is that $(1-r\alpha^q)^{1/q} \geq \alpha(1-1/q)$, because otherwise we would get via \eqref{eq:beta_m} 
\eq{
(1-r\alpha^q)^{1/q} \leq \alpha(1-1/q)(m-r)^{1/q} 
\quad \text{which means} \quad
\beta \leq \alpha (1-1/q) \ ,
}
satisfying the condition \eqref{eq:sufficient_full_full}.
The second is that $r \leq (q-1)^{-2}$, because otherwise, as $r<m$, \eqref{eq:sufficient_full_partial} would be satisfied and there would be nothing to prove.
Then, we calculate
\eq{
1-r\alpha^q \geq \frac{(1-1/q)^q}{r+(1-1/q)^q} \geq \frac{(1-1/q)^q}{1/(q-1)^2+(1-1/q)^q} =: C_q
}
where we used the above two conditions for the two inequalities, respectively.
This gives a lower-bound for $g(m,x)$:
\eq{
g(m,x) \geq \frac{m-r}{\frac{1}{\beta}} = (1-r\alpha^q)^{1/q}(m-r)^{1-1/q}\geq  \ce_q^{1/q} \left(m-(q-1)^{-2}\right)^{1-1/q} =: \tilde C_{q,m}
\ ,
}
which in particular holds for $m=k$.
Note that this bound depends only on $q$ and $m$. 
On the other hand, we get an upper-bound for all $m\leq (q-1)^{-2}$, which will be shown below; again
we only consider this case otherwise \eqref{eq:sufficient_full_partial} would be satisfied.
Since for $0<x<1/\alpha$
\eq{
g(m,x) \leq \frac{1}{x} + \frac{m}{\frac{1}{\alpha}-x} \ ,
}
we have
\eq{
\min_{0<x<1/\alpha} g(m,x) \leq \alpha(1+\sqrt{m})^2 \ .
}
This in turn implies that  
\eq{
\max_{2 \leq m  \leq  1/(q-1)^2}\min_{0<x<1/\alpha}g(m,x) \leq \frac{\alpha q^2}{(q-1)^2} \ .
}
Therefore, for large enough $k$ we have for all $m\leq (q-1)^{-2}$
\eq{\min_{0<x<1/\alpha}g(m,x) \leq
\frac{\alpha q^2}{(q-1)^2}<
\tilde C_{q,k} \leq \min_{0<x<1/\alpha}g(k,x) 
}
This completes the proof.
\end{proof}

Now, we state the main lemma in this section:
\begin{lemma}\label{lemma:optimal-shape}
Every matrix $A$ that gives the maximum in \eqref{eq:maximum_app} has eigenvalues of the form $(\alpha,\beta,\beta,\dots,\beta)$, where $\alpha \geq \beta >0$ for
$q\geq3$ with $k\geq2$, or for $1<q<3$ with large enough $k$. 
\end{lemma}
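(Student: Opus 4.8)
The plan is to build on Lemma \ref{lemma:two_kinds}, which already tells us that an optimal $A$ is of full rank with eigenvalues of at most two distinct kinds. So we may write the eigenvalue list as $(\underbrace{\alpha,\dots,\alpha}_{r},\underbrace{\beta,\dots,\beta}_{k-r})$ with $\alpha > \beta > 0$ and $1 \le r \le k-1$, subject to $r\alpha^q + (k-r)\beta^q = 1$; the goal is to rule out $r \ge 2$ and hence force $r=1$. First I would set up $h(x,A)$ as a function of the multiplicity $r$ (treated as a continuous parameter) exactly as in the proof of Lemma \ref{lemma:two_kinds}, but now varying $r$ rather than the total rank $m$. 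Concretely, fix $\beta \in (0,1)$ and, using the constraint, regard $\alpha = \alpha(r) = \left(\tfrac{1-(k-r)\beta^q}{r}\right)^{1/q}$ as a function of $r$, so that
\eq{
g(r,x) := \frac{1}{x} + \frac{r}{\frac{1}{\alpha(r)} - x} + \frac{k-r}{\frac{1}{\beta} - x},
}
and the quantity to be maximized over admissible $(r,\beta)$ is $\min_{x} g(r,x)$. The strategy is to show this is strictly monotone in $r$ — decreasing, so that the smallest admissible value $r=1$ is optimal — by the same implicit-function / Lagrange-multiplier bookkeeping used for Lemma \ref{lemma:two_kinds}: compute $\partial g/\partial r$ (using $\alpha'(r)$ from the constraint), evaluate it at the critical $x = x(r)$ where $\partial g/\partial x = 0$, and exhibit a definite sign. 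The envelope theorem lets us differentiate $\min_x g(r,x)$ by just plugging the optimal $x$ into $\partial g/\partial r$.

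The second step handles the complementary possibility allowed by Lemma \ref{lemma:two_kinds}: that the larger eigenvalue $\alpha$ has multiplicity $r \ge 2$ and the smaller $\beta$ has multiplicity $k-r$. By the sign computation above this should be beaten by moving mass toward $r=1$; but I would also need to check the boundary/degenerate cases — where $\alpha(r)$ hits $1$, or where the constraint $r\alpha^q < 1$ becomes tight — and argue these are not optimal either, exactly as the "$m<k$ cannot be optimal" arguments in Lemma \ref{lemma:two_kinds} are dispatched. For $q \ge 3$ the sign of $\partial g/\partial r$ at the critical point should come out clean for all $k \ge 2$ (the relevant algebraic inequality, after clearing denominators, reduces to something like a polynomial in $q$ with a favorable sign when $q\ge 3$); for $1<q<3$ one falls back on the asymptotic-in-$k$ estimates, bounding $\min_x g(r,x)$ below for $r=1$ (where it is essentially $\beta^{-1+1/q}$-type growth, i.e.\ $\approx \alpha(1+\sqrt{k-1})^2$ with $\alpha \approx k^{-1/q}$) and above for $r\ge 2$, just as the $1<q<2$ part of Lemma \ref{lemma:two_kinds} separates $m=k$ from $m$ bounded.

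The main obstacle I expect is the sign analysis of $\partial g/\partial r$ evaluated at the critical point $x(r)$: unlike the $m$-variation in Lemma \ref{lemma:two_kinds}, where $\partial g/\partial m$ had the transparent form $\bigl((1-1/q)\beta^{-1}-x\bigr)/(\beta^{-1}-x)^2$ with a single competing term, here varying $r$ moves \emph{both} the coefficient $r$ and the location $1/\alpha(r)$ of the pole, so $\partial g/\partial r$ has two groups of terms whose signs compete, and one must use the critical-point relation $\partial g/\partial x = 0$ to combine them. Getting a clean threshold — and identifying precisely where "$q \ge 3$, all $k$" degrades to "$q < 3$, large $k$" — is the delicate part; everything else (the Lagrange setup, the envelope-theorem differentiation, the boundary exclusions) is routine and parallels the already-proven Lemma \ref{lemma:two_kinds}.
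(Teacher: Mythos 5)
Your overall strategy coincides with the paper's: invoke Lemma \ref{lemma:two_kinds} to reduce to spectra $(\underbrace{\alpha,\dots,\alpha}_{r},\underbrace{\beta,\dots,\beta}_{k-r})$ with $r\alpha^q+(k-r)\beta^q=1$, treat the multiplicity $r$ as a continuous parameter, and show that $\min_{x}g(r,x)$ is strictly decreasing in $r$, which forces $r=1$. (The paper fixes $\alpha$ and lets $\beta=\beta(r)$ float, whereas you fix $\beta$ and let $\alpha=\alpha(r)$ float; this is an inessential, symmetric difference. Note that in the paper's parametrization, too, both a coefficient and a pole location move with $r$, so the structural complication you attribute to your own setup is present there as well and is simply computed through.)

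The genuine gap is that your proposal stops exactly at the step that constitutes the substance of the lemma. The monotonicity of $\min_x g(r,x)$ in $r$ is asserted, not established: you say the sign of $\partial g/\partial r$ at the critical point ``should come out clean'' for $q\ge3$ and that for $1<q<3$ one ``falls back on asymptotic-in-$k$ estimates,'' but neither computation is carried out, and this is precisely where the hypotheses ``$q\ge3$, all $k\ge2$'' versus ``$1<q<3$, $k$ large'' enter. The paper does this concretely: with $a_q=\alpha^q-\beta^q-q\beta^{q-1}(\alpha-\beta)>0$ one finds $\partial g/\partial r=(a_{q+1}x-a_q)/\bigl(q\beta^{q-1}(1-\alpha x)(1-\beta x)^2\bigr)$, so $\partial g/\partial r<0$ exactly for $x<a_q/a_{q+1}$, and the minimizer is shown to lie below this threshold by verifying $\frac{\partial g}{\partial x}\left(r,\frac{a_q}{a_{q+1}}\right)>0$; for $q\ge3$ this reduces (with $u=\beta/\alpha$) to $1-2qu^{q-1}+(3q-1)u^q-qu^{q+1}>0$ on $(0,1)$, while for $1<q<3$ it reduces to \eqref{eq:one_large_proof_last}, which needs $k$ large and a separate treatment near $u=1$. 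Without an analogue of this computation in your parametrization — including a check that the monotonicity indeed goes in the direction you claim, which is not automatic once $\alpha(r)$ moves with $r$ — what you have is a plan rather than a proof; the scaffolding you describe (envelope theorem, Lagrange setup, boundary exclusions) is indeed routine, but the decisive inequality is the lemma, and it is missing.
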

 
\begin{proof}
 By Lemma \ref{lemma:two_kinds}, we know that the maximum of \eqref{eq:maximum_app} is achieved by some $A$ in the form of
\eq{
A = \mathrm{diag}(\underbrace{\alpha,\dots,\alpha}_{r},\underbrace{\beta, \dots, \beta}_{k-r}) \ ,
}
where $1\leq r < k$ and $0< \beta < \alpha$, unless $A$ is proportional to the identity. Similarly as before, fix $\alpha \in(0,1)$ and $k >\alpha^{-q}$, and think of $\beta$ as a function of real numbers $r \in [1,k)$:
\eq{
\beta = \beta(r) = \left( \frac{1-r\alpha^q}{k-r} \right)^{1/q} \ .
}
via $r \alpha^q + (k-r)\beta^q = 1$, and rewrite $h(x,A)$ in terms of the parameter $r$:
\eq{
g(r,x) = \frac{1}{x} + \frac{r}{\frac{1}{\alpha}-x} + \frac{k-r}{\frac{1}{\beta}-x} \ .
}
Some calculations show that 
\eq{
\frac{\partial g}{\partial r}=  \frac{a_{q+1} x - a_q}{q \beta^{q-1}(1-\alpha x)(1-\beta x)^2}  \ ,
}
where 
\eq{
a_q= \alpha^q - \beta^q - q\beta^{q-1} (\alpha -\beta)
\equiv  (q-1) \beta^q - q \alpha \beta^{q-1} + \alpha^p \ .
}
Moreover, one can see that $a_q >0$, and hence, $\partial g / \partial r<0$ if and only if $x< a_q/a_{q+1}$. A few lines of calculations show that 
\eq{\label{eq:alpha_side}
\frac{1}{\alpha}-\frac{a_q}{a_{q+1}} = \frac{q \beta^{q-1}(\alpha-\beta)^2}{\alpha a_{q+1}} >0, 
}
and hence $a_q/a_{q+1}$ lies in the interval $(0,1/\alpha)$. As before, if we  prove that 
\eq{\label{eq:one_large_single_condition}
\frac{\partial g}{\partial x}\left(r, \frac{a_q}{a_{q+1}}\right) >0
}
for $r \in [1,k)$
then $\min_{x\in (0,1/\alpha)} g(r,x)$ is a strictly decreasing function of $r \in [1,k)$, and hence our claim will be proved. 
Since 
\begin{align}
\frac{\partial g}{\partial x}\left(r, \frac{a_q}{a_{q+1}}\right)
&\geq -\frac{a_{q+1}^2}{a_q^2} + \frac{r}{(\frac{1}{\alpha}-\frac{a_q}{a_{q+1}})^2}= a_{q+1}^2\left(-\frac{1}{a_q^2} + \left(\frac{\sqrt{r}\alpha}{q\beta^{q-1}(\alpha-\beta)^2}\right)^2 \right), \label{eq:no_zero_ev}
\end{align}
it suffices to prove that $a_q > q \beta^{q-1}(\alpha-\beta)^2/\alpha$. With the new variable $u=\beta / \alpha$, this inequality is equivalent to 
\eq{
1- 2q u^{q-1} +(3q-1)u^q -q u^{q+1} >0, 
}
which is valid by calculus for all $u\in(0,1)$ whenever $q\geq 3$. 

Now, there remains to prove \eqref{eq:one_large_single_condition} for $1<q<3$.
To this end, first notice that 
\eq{\label{eq:beta_side}
\frac{1}{\beta} -\frac{a_q}{a_{q+1}} = \frac{(\alpha-\beta)(\alpha^q-\beta^q)}{\beta a_{q+1}}
\leq \frac{q \alpha^{q-1}(\alpha-\beta)^2}{\beta a_{q+1}}
}
where mean value theorem was used to the inequality. 
Next, using \eqref{eq:alpha_side} and \eqref{eq:beta_side} we get
\eq{
\frac{\partial g}{\partial x}\left(r, \frac{a_q}{a_{q+1}}\right)
&= -\frac{a_{q+1}^2}{a_q^2} + \frac{r}{(\frac{1}{\alpha}-\frac{a_q}{a_{q+1}})^2} + \frac{k-r}{(\frac{1}{\beta}-\frac{a_q}{a_{q+1}})^2} \\
&\geq a_{q+1}^2 \left[ 
-\left(\frac{1}{a_q}\right)^2 
+ r \left(\frac{\alpha}{q\beta^{q-1}(\alpha-\beta)^2} \right)^2
+ (k-r) \left(\frac{\beta}{q\alpha^{q-1}(\alpha-\beta)^2} \right)^2
\right]\\
& = a_{q+1}^2 \alpha^{-2q} \left[ 
-\left(\frac{1}{(q-1)u^q - q u^{q-1} +1} \right)^2
+ \frac{ku^{q+1} + r(1-u^{q+1})}{(q(1-u)^2u^{q-1})^2}
\right]\\
& \geq a_{q+1}^2 \alpha^{-2q} \left[ 
-\left(\frac{1}{(q-1)u^q - q u^{q-1} +1} \right)^2
+ \frac{ku^{q+1} + 1-u^{q+1}}{(q(1-u)^2u^{q-1})^2}
\right]
}
where $u = \beta/\alpha$.
Hence, we now prove
\eq{\label{eq:one_large_proof_last}
(q-1)u^q - q u^{q-1}+1 > \frac{q(1-u)^2 u^{q-1}}{\sqrt{1+(k-1)u^{2q}}} \ .
}
To this end, fix $1<q<3$ and $0< \delta < 1$ for now.
While it is obvious when $u$ is close to $0$,
we have to think of the following asymptotic behaviours:
\eq{
(LHS) = \frac{q(q-1)}{2} (1-u)^2 (1 + o(1))
\qquad 
(RHS) = \frac{q}{\sqrt{k}} (1-u)^2 (1 + o(1)) \ .
}
when $u$ is close to $1$.
With this, \eqref{eq:one_large_proof_last} holds on
$(0, \delta] \cup [1- \delta, 1)$ for large enough $k$.
On the other hand, 
\eq{
\min_{u \in [\delta, 1-\delta]} (q-1)u^q - q u^{q-1}+1
> \frac{q}{\sqrt{1+(k-1)\delta^{2q}}} 
>\max_{u \in [\delta, 1-\delta]}\frac{q(1-u)^2 u^{q-1}}{\sqrt{1+(k-1)u^{2q}}} 
}
holds for large enough $k$, where the second is always true. 
As a whole, for fixed $1<q<3$ and $0< \delta < 1$, \eqref{eq:one_large_proof_last} holds for large enough $k$. This completes the proof.
\end{proof}

\section{H\"older's inequality for positive semi-definite matrices}
In this appendix, we discuss H\"older's inequality for positive semi-definite matrices, which plays a key role in Theorem \ref{theorem:p-norm-convergence} and Section \ref{section:lagrange}:
\begin{lemma}\label{lemma:duality}
For a $k \times k$ Hermitian matrix $A \geq 0$ with $A \not = 0$ and $p,q>1$ with $\frac 1p + \frac 1q =1$, we have
\eq{\label{eq:dual}
\|A\|_p = \max \left\{ \trace [AB] :B \geq 0, \, \|B\|_q =1 \right\} \ .
}
Moreover, the maximum is achieved if and only if for all $i \in \{ 1, \ldots, k \}$,
\eq{\label{eq:pq-relation}
\lambda_i(B)^q = \|A\|_p^{-p} \lambda_i(A)^p \ ,
}
where $A$ and $B$ share the same eigenspaces with respect to the eigenvalues $\lambda_i(\cdot)$. 
\end{lemma}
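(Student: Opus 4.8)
The plan is to reduce the matrix inequality to the classical (scalar) Hölder inequality for sequences by diagonalizing $A$, and then to analyze the equality case directly. First I would write a spectral decomposition $A = \sum_i \lambda_i(A)\, |e_i\rangle\langle e_i|$ with $\lambda_i(A)\ge 0$ and $\{|e_i\rangle\}$ an orthonormal eigenbasis. For any $B\ge 0$ with $\|B\|_q=1$, the trace $\trace[AB] = \sum_i \lambda_i(A)\,\langle e_i| B |e_i\rangle$. Since $B\ge 0$, the numbers $b_i := \langle e_i|B|e_i\rangle$ are nonnegative, and by concavity of $x\mapsto x^{q}$ being the wrong direction I would instead argue that the vector $(b_i)_i$ is majorized by the eigenvalue vector $(\lambda_i(B))_i$ (Schur–Horn), so $\sum_i b_i^q \le \sum_i \lambda_i(B)^q = 1$ because $t\mapsto t^q$ is convex on $[0,\infty)$ for $q>1$. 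Hence $\trace[AB] \le \sum_i \lambda_i(A)\, b_i \le \|A\|_p \cdot \big(\sum_i b_i^q\big)^{1/q} \le \|A\|_p$ by the scalar Hölder inequality applied to the sequences $(\lambda_i(A))_i$ and $(b_i)_i$. This gives ``$\ge$'' in \eqref{eq:dual}, i.e.\ the max is bounded above by $\|A\|_p$.

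Next I would exhibit a $B$ achieving the bound: take $B$ diagonal in the eigenbasis of $A$ with $\lambda_i(B) := \|A\|_p^{-p/q}\lambda_i(A)^{p/q}$. Then $\|B\|_q^q = \|A\|_p^{-p}\sum_i \lambda_i(A)^p = 1$, and $\trace[AB] = \|A\|_p^{-p/q}\sum_i \lambda_i(A)^{1+p/q} = \|A\|_p^{-p/q}\sum_i \lambda_i(A)^p = \|A\|_p^{p - p/q} = \|A\|_p$, using $1+p/q = p$ and $p-p/q=1$. Noting $p/q = p-1$, this $B$ satisfies exactly the stated relation $\lambda_i(B)^q = \|A\|_p^{-p}\lambda_i(A)^p$, and it is positive semidefinite sharing eigenspaces with $A$. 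This settles that the maximum equals $\|A\|_p$ and is attained.

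Finally, for the equality characterization I would trace back through the two inequalities used in the upper bound. Equality in the scalar Hölder step forces proportionality $b_i^q = \lambda_i(B)^q = c\,\lambda_i(A)^p$ for a constant $c$ (for the indices where $\lambda_i(A)>0$; where $\lambda_i(A)=0$ one checks the corresponding $\lambda_i(B)$ must vanish too, else $\|B\|_q>$ what is needed, contradicting optimality after rescaling — or, more cleanly, equality in the majorization/convexity step $\sum_i b_i^q \le \sum_i \lambda_i(B)^q$ forces $(b_i)_i$ to be a permutation of $(\lambda_i(B))_i$, which combined with $B\ge 0$ and strict convexity of $t\mapsto t^q$ forces $B$ to be diagonal in the eigenbasis of $A$, up to the usual ambiguity within degenerate eigenspaces). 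Normalizing via $\|B\|_q = 1$ pins down $c = \|A\|_p^{-p}$, giving \eqref{eq:pq-relation}; conversely the computation in the previous paragraph shows any such $B$ attains the maximum. The main obstacle is handling the equality case carefully when $A$ has degenerate eigenvalues or zero eigenvalues: one must argue that an optimal $B$ can be taken block-diagonal with respect to the eigenspaces of $A$ (the Schur–Horn/majorization equality analysis does this) and that $B$ cannot put mass on the kernel of $A$, which follows because such mass would count toward $\|B\|_q$ without contributing to $\trace[AB]$, so rescaling would strictly increase $\trace[AB]$.
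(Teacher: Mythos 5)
Your proof is correct and takes essentially the same route as the paper: the paper diagonalizes $B$ and bounds the diagonal of $A$ by its spectrum (its Lemma \ref{lemma:birkhoff}, proved via Birkhoff's theorem), whereas you diagonalize $A$ and apply Schur--Horn majorization with the convex function $t\mapsto t^q$ to $B$; these are mirror images of one and the same argument, combined with scalar H\"older and its equality condition. The one step you assert rather than prove---that a positive semi-definite matrix whose diagonal is a permutation of its eigenvalues must be diagonal---is exactly the equality case the paper settles with Cauchy interlacing (a Hilbert--Schmidt norm comparison, $\sum_i b_{ii}^2=\sum_i\lambda_i(B)^2=\sum_{i,j}|b_{ij}|^2$, also does it), so no genuine gap remains.
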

\begin{proof}
In this proof, we follow the notations $d(\cdot)$ and $\lambda(\cdot)$ defined in Lemma \ref{lemma:birkhoff}.

Choose the coordinate so that $B$ is diagonal and the diagonal elements, which are the eigenvalues, appear in the non-increasing order. Then, using H\"older's  inequality and Lemma \ref{lemma:birkhoff} we have
\eq{\label{eq:hoelder}
\trace [AB] = \sum_{i=1}^k a_{i,i} \, \lambda_i(B)
\leq \|d(A)\|_p\|\lambda(B)\|_q \leq \|A\|_p\|B\|_q
}

Moreover, by using the equality conditions for H\"older's inequality and Lemma \ref{lemma:birkhoff}, the equality in \eqref{eq:hoelder} holds if and only if for some constant $c>0$,
\eq{
a_{i,i} = c \lambda_i(B)^{q-1}, \text{ and $A$ is diagonal with }  d(A)=\lambda(A) \ .
}
This completes the proof.
\end{proof}

\begin{lemma}\label{lemma:birkhoff}
Let $1<p<\infty$. For a positive semi-definite $k \times k$ Hermitian matrix $A$, it holds that
\eq{\label{eq:shatten}
\|d(A)\|_p \leq \|\lambda(A)\|_p = \|A\|_p
}
and the equality holds if and only if $A$ is diagonal and $d(A)=\lambda(A)$.
Here,
\eq{
d(A) = (d_1(A), \ldots, d_k(A))^T \qquad \text{and} \qquad
\lambda(A) = (\lambda_1(A), \ldots, \lambda_k(A))^T
}
are the diagonal elements and the eigenvalues of $A$, respectively, 
which are both in the non-increasing order. 
Note that we used the notation of Schatten $p$-norm $\| \cdot \|_p$ for vectors and matrices. 
\end{lemma}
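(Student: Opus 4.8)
The plan is to recognize \eqref{eq:shatten} as the classical Schur majorization inequality and then trace its equality case through Birkhoff's theorem and the strict convexity of $\ell^p$-norms. First I would take a spectral decomposition $A = U\Lambda U^*$ with $\Lambda = \mathrm{diag}(\lambda(A))$ and $U$ unitary, so that each diagonal entry of $A$ is $d_i(A) = (U\Lambda U^*)_{ii} = \sum_{j=1}^k |U_{ij}|^2\, \lambda_j(A)$. Collecting these identities gives $d(A) = S\,\lambda(A)$, where $S = (|U_{ij}|^2)_{i,j}$ is doubly stochastic because $U$ has orthonormal rows and columns. By the Birkhoff--von Neumann theorem, $S = \sum_\sigma c_\sigma P_\sigma$ for permutation matrices $P_\sigma$ with $c_\sigma \ge 0$ and $\sum_\sigma c_\sigma = 1$, so $d(A) = \sum_\sigma c_\sigma\, P_\sigma\lambda(A)$ is a convex combination of rearrangements of $\lambda(A)$, each of which has the same $\|\cdot\|_p$ as $\lambda(A)$. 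The triangle inequality for $\|\cdot\|_p$ then yields $\|d(A)\|_p \le \sum_\sigma c_\sigma\|P_\sigma\lambda(A)\|_p = \|\lambda(A)\|_p = \|A\|_p$, which is \eqref{eq:shatten}.

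For the equality case, I would invoke the strict convexity of the norm $\|\cdot\|_p$ on $\R^k$ for $1<p<\infty$ (a consequence of the strict convexity of $t\mapsto t^p$): a convex combination of vectors all of norm $r$ can itself have norm $r$ only when all the vectors carrying positive weight coincide. Applied with $r = \|A\|_p$, equality in \eqref{eq:shatten} forces $d(A) = P_\sigma\lambda(A)$ for every $\sigma$ with $c_\sigma>0$; in particular $d(A)$ is a rearrangement of $\lambda(A)$, and since both are listed in non-increasing order we conclude $d(A) = \lambda(A)$.

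It remains to promote $d(A) = \lambda(A)$ to diagonality of $A$, for which I would compare Frobenius norms: since $A = A^*$,
\eq{
\sum_{i,j=1}^k |A_{ij}|^2 = \trace(A^2) = \sum_{i=1}^k \lambda_i(A)^2 = \sum_{i=1}^k d_i(A)^2 = \sum_{i=1}^k |A_{ii}|^2,
}
so $\sum_{i\ne j}|A_{ij}|^2 = 0$ and $A$ is diagonal; the converse implication (if $A$ is diagonal with $d(A)=\lambda(A)$, equality holds) is immediate. The only delicate point is this equality analysis — one must use \emph{strict} convexity of $\|\cdot\|_p$, not merely convexity, to rule out off-diagonal mass — but this is entirely standard, so I anticipate no genuine obstacle.
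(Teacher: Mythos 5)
Your proposal is correct, and its first half coincides with the paper's own argument: both take the spectral decomposition $A=U\Lambda U^*$, form the doubly stochastic matrix with entries $|U_{ij}|^2$, invoke the Birkhoff--von Neumann theorem to express $d(A)$ as a convex combination of permutations of $\lambda(A)$, and apply the triangle inequality; likewise your reduction of the equality case to $d(A)=\lambda(A)$ is the same step the paper makes, except that you spell out the strict convexity of the $\ell^p$-norm for $1<p<\infty$, which the paper leaves implicit, so your write-up is slightly more complete there. The genuine divergence is the last step, upgrading $d(A)=\lambda(A)$ to diagonality of $A$: the paper picks a nonzero off-diagonal entry $a_{m,n}$ with $m$ minimal, splits $A$ into blocks, and derives a contradiction by applying Cauchy's interlacing theorem to the $2\times 2$ principal submatrix with diagonal $(\lambda_m(A),\lambda_n(A))$, whereas you compare $\sum_{i,j}|A_{ij}|^2=\trace (A^2)=\sum_i\lambda_i(A)^2$ with $\sum_i d_i(A)^2$ and conclude that the off-diagonal mass vanishes. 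Your trace identity is shorter and more elementary, using only the Frobenius-norm computation for Hermitian matrices and no eigenvalue-perturbation input, while the paper's interlacing argument pinpoints the obstruction locally; both are complete, and the only degenerate case in your strict-convexity step, $A=0$, is trivially covered.
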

\begin{proof}
Through diagonalization of $A$, we define a doubly stochastic matrix $T$ relating $\lambda(A)$ and $d(A)$:
\eq{
A = U \Lambda U^* \eqtext{and} 
d(A) = T \lambda(A) \eqtext{where} (T)_{i,j} = |(U)_{i,j}|^2\ .
}
Here, $\Lambda$ is the diagonal matrix such that $\Lambda = \mathrm{ diag}(\lambda (A))$. 
Via Birkhoff's theorem \cite[Theorem II.2.3]{bhatia2013matrix},
which states that the set of doubly stochastic matrices is the convex hull of the permutation matrices (they are extreme points),
\eq{\label{eq:convex_hull}
d(A) = \sum_{j=1}^N t_j P_j \lambda(A)
}
for some coordinate-permutations $\{P_j\}$ and a probability distribution $\{t_j\}$. This means that 
\eq{
\|d(A)\|_p \leq \sum_{j=1}^N t_j  \|P_j \lambda(A)\|_p = \|A\|_p \ . 
}
This yields the inequality in \eqref{eq:shatten}.
Moreover, the equality condition implies $d(A) = P_j \lambda(A)$ for a certain $j$, i.e. $d(A) = \lambda(A)$. Also in this case $A$ is diagonal.
Otherwise, pick a non-zero off-diagonal element $a_{m,n}$ with $m<n$ where $m$ is the smallest among such entries. Then we can decompose $A$ as 
\[
A=\begin{pmatrix} A_{m-1} & O  \\ O & A_{k-m+1}' \end{pmatrix}, 
\]
where $A_{m-1}$ and $A_{k-m+1}'$ are respectively $(m-1)\times (m-1)$ and $(k-m+1)\times(k-m+1)$ sub-matrices of $A$. Since $A_{m-1}$ is diagonal, the eigenvalues of $A_{k-m+1}'$ coincide with $\lambda_{m}\ge \lambda_{m+1} \ge \cdots \ge \lambda_k$. Now take the following $2 \times 2$ sub-matrix of $A_{k-m+1}'$ and calculate its characteristic equation:
\eq{
A':=\begin{pmatrix}
\lambda_m(A) & a_{m,n} \\a_{n,m} & \lambda_n(A)
\end{pmatrix},
\qquad
(x - \lambda_m(A))(x - \lambda_n(A)) = |a_{m,n}|^2 
}
so that the largest eigenvalue of $A'$ is larger than $\lambda_m(A)$.
However, the eigenvalues of $A'$ cannot be larger than $\lambda_m(A)$ according to Cauchy's interlacing theorem \cite[Corollary III.1.5]{bhatia2013matrix}, yielding a contradiction. This completes the proof.
\end{proof}

\section{About the $p$-norm in the case $1<p\leq 1.5$}
\label{sec:<1.5}
In Sections \ref{sec:vilation_phi} and \ref{sec:violation} we proved the violation of multiplicativity of MO$p$N for both $\Phi_n$ and its rectification $\Psi_n$ as long as $p>1.5$, and the violation of additivity of MOE (corresponding to $p=1$) for $\Psi_n$. In this appendix we briefly discuss the missing case $1<p\leq 1.5$. 

From a technical point of view, 
 the violation of multiplicativity of $p$-norm is not clear 
 for $1<p\leq 1.5$, because of the correction term $k^{-1/2}$, which often appeared in our analysis, 
e.g. in the square brackets in the first line of \eqref{eq:upper_bound_pair_multi}. This term is negligible for $p>1.5$ but not for $1\le p\leq 1.5$.
As long as the quantum channels $\Psi_n$ are concerned, we think that this is only a technical issue and conjecture that violation of multiplicativity is the case even for $1<p\leq 1.5$ (see \cite{hw_vio, aubrun2010nonadditivity} for the case of GEs with $1<p\leq 1.5$). 

Another interesting aspect for $1<p\leq 1.5$ is the optimality of Bell states for the conjugate pairs $ \Phi_n\otimes \bar \Phi_n$. Let $q$ be the number such that $1/p + 1/q=1$. By using the definition of $f(\cdot)$ in \eqref{eq:f_def} and the equality in \eqref{eq:max-f},
\eq{
\lim_{n\to\infty} \| \Phi_n \|_p 
= \lim_{n\to\infty} \| \Phi_n^c \|_p 
\geq \frac{1}{k} f(k^{-1/q} I) 
= k^{-1-1/q} \left \| \sum_{i=1}^k  s_i^2 \right\|
\geq  k^{-2+1/p}(\sqrt{k}+1)^2 \ . \label{D1}
}
Now, we compare this bound with the RHS of \eqref{eq:bell-input} as follows.
\eq{\label{eq:additivity_check}
&\left( k^{-2+1/p}(\sqrt{k}+1)^2 \right)^{2p} - \left[ 
\left( \frac 1k + \frac 1{k^2} \right)^p + (k^2-1) \left( \frac 1 {k^2}\right)^p \right] \\
& = k^{-2p} \left( k^2(1+k^{-1/2})^{4p} -(k+1)^p -k^2+1\right) \ .
}
Next, the inequality $(1+x)^a > 1+a x$ for $a\geq1$ and $x> 0$ implies 
\eq{\label{eq:additivity_check2}
k^2(1+k^{-1/2})^{4p} -(k+1)^p -k^2+1 
&> k^2(1+4p k^{-1/2}) -(1+k)^p -k^2+1\\
&= 4 p k^{3/2} - (1+k)^p +1 >0 \ .
}
To see the last inequality, 
let
$g(k):= 4 p k^{3/2} - (1+k)^p +1$ so that for $1<p\leq 1.5$
\eq{
g(1)&=4p -2^p+1 \geq 4 - 2^{3/2}+1>0 \quad \text{and}\\
g'(k) &= 6p \sqrt{k} -p(1+k)^{p-1} \geq p[6\sqrt{k} -\sqrt{1+k}] > 0 \ .
}
Hence, for Bell states $|b_n \rangle$ almost surely we have
\eq{
\lim_{n \to \infty} \left\| \Phi_n\otimes \bar \Phi_n (|b_n\rangle \langle b_n|) \right\|_p < \lim_{n \to \infty} \left\| \Phi_n \right\|_p \left\|  \bar \Phi_n \right\|_p \ .
}
On the other hand, it always holds that
$
\left\| \Phi_n\otimes \bar \Phi_n \right\|_p \geq \left\| \Phi_n \right\|_p \left\|  \bar \Phi_n \right\|_p, 
$
so that 
\eq{
\lim_{n \to \infty} \left\| \Phi_n\otimes \bar \Phi_n (|b_n\rangle \langle b_n|) \right\|_p < 
\liminf_{n \to \infty} \left\| \Phi_n\otimes \bar \Phi_n \right\|_p \ . \label{eq:Bell_not_enough}
}
The inequality \eqref{eq:Bell_not_enough} shows that the Bell states are not optimal inputs for the $p$-norm of the conjugate pairs $ \Phi_n\otimes \bar \Phi_n$ when $1<p\leq 1.5$. This may seem against the evidence stated in \cite{CFN_bell, CFN_w_col} that Bell states are optimal inputs among reasonably entangled inputs for conjugate pairs of random quantum channels. However note that
the maps $\Phi_n$ are not trace-preserve, and hence not quantum channels. 
We believe that the Bell states are optimal inputs for the $p$-norm of the conjugate pairs of quantum channels $\Psi_n\otimes \bar \Psi_n$, though. 

To conclude this appendix, we point out that the above arguments would fail at the steps \eqref{eq:additivity_check} and \eqref{eq:additivity_check2} for the random quantum channels $\mathcal E$ in \eqref{eq:ruc} with equal weights. Indeed,  \eqref{eq:additivity_check2} would be
\eq{
\left( k^{-1+1/p} \right)^{2p} - \left[ k^{-p} + k^{-2p}(k^2- k) \right] 
= k^{-2p} \left[ k^2 - k^{p} - k^2 + k  \right]
< 0 \ ,
}
where we mimicked the computations in the proof of Theorem \ref{theorem:bell-input} using the fact that
\eq{
\varphi(u_i u_j^* u_s u_t^*) = 
\begin{cases}
1& (i=j \text{ and } s=t) \vee(i=t \text{ and } j=s)  \\
0& \text{otherwise}
\end{cases}
}
for free Haar unitary elements 
$\{u_i\}_{i=1}^k$. 
Indeed, similarly as in \eqref{eq:bell_limit},
\eq{
 \left(\Omega_n^c \otimes \bar \Omega_n^c \right)  (|b_n \rangle \langle b_n|)
& \to \frac 1{k^2} \left[ 
\sum_{\substack{i,s=1\\i \not =s}}^k 
|i,i\rangle \langle s,s| 
+\sum_{\substack{i,j=1\\i \not =j}}^k 
|i,j\rangle \langle i,j| 
+\sum_{\substack{i=1}}^k
|i,i\rangle \langle i,i| 
\right]  \qquad (n \to \infty) \\
&= \frac 1k |b_k \rangle \langle b_k| 
+ \frac{1}{k^2}\sum_{\substack{i,j=1\\i \not =j}}^k 
|i,j\rangle \langle i,j| 
\ ,
}
where the first and second Hermitian matrices are orthogonal to each other.

\bibliographystyle{plain}

\bibliography{ref_sce_vio}

\end{document}